\documentclass[reprint, amsmath,amssymb, aps, onecolumn]{revtex4-2}

\usepackage{graphicx}
\usepackage{dcolumn}
\usepackage{bm}
\usepackage{dsfont}
\usepackage{color}
\usepackage{amsthm}
\definecolor{labelkey}{cmyk}{.4,.2,0,0}
\usepackage{comment}
\usepackage{physics}
\usepackage{bbold}
\usepackage{hyperref}

\newcommand{\1}{\mathbb{1}}

\newcommand{\diff}{{\rm d}}
\newcommand{\Pf}{\textup{Pf}}
\newcommand{\bv}{\bm v}
\newcommand{\RV}{\mathrm{V}}
\newcommand{\erfc}{\textup{erfc}}

\newtheorem{thm}{Theorem}

\newtheorem{cor}[thm]{Corollary}

\newtheorem{prop}[thm]{Proposition}

\numberwithin{equation}{section}

\begin{document}

\preprint{APS/123-QED}

\title{Interpolating non-Hermitian universality classes A and AI$^\dagger$: eigenvalue density and transition regime}

\author{Mark J. Crumpton$^{1,2}$} \email{mark.j.crumpton@bristol.ac.uk}
\author{Francesco Mezzadri$^1$}

\affiliation{$^1$ Department of Mathematics, University of Bristol, Woodland Road, Bristol, BS8 1TW, United Kingdom \\$^2$Heilbronn Institute for Mathematical Research, Bristol, United Kingdom}

\date{\today}

\begin{abstract}
We employ the recently developed Kac-Rice formalism for non-Hermitian random matrices to derive the joint distribution of an eigenvalue and its associated normalised right eigenvector in a Gaussian ensemble that interpolates between complex Ginibre (Class A) and complex symmetric matrices (Class AI$^\dagger$). This distribution is valid at finite matrix size, $N$, for any value of the interpolation parameter $\sigma \in [0,1]$, with $0$ and $1$ corresponding to classes A and AI$^\dagger$ respectively. The marginal distribution for the density of the eigenvalues is derived at finite $N$ and then considered asymptotically as $N \to \infty$. When considering bulk eigenvalues, we recover the standard circular law for all $\sigma$. Furthermore, for edge eigenvalues we find that for fixed $\sigma$, the eigenvalues follow the edge density associated with matrices in Class A. However, a transitional regime is discovered for the interpolation parameter being scaled as $\sigma = 1 - \kappa N^{-1/2}$, where new edge behaviour is observed for the density of eigenvalues - smoothly interpolating two previously known results. This transitional regime and the associated density of eigenvalues is conjectured to be universal for non-Gaussian matrices and we provide numerical evidence in support of this.
\end{abstract}

\maketitle

\section{Introduction}\label{sec:intro}

We introduce a new random matrix ensemble that models the breaking of time reversal symmetry (TRS) in non-Hermitian matrices. This ensemble is constructed such that it interpolates between two non-Hermitian universality classes, represented by complex Ginibre matrices (no TRS) and complex symmetric matrices (positive TRS). Specifically, we consider a Gaussian ensemble of $N \times N$ random matrices constructed as 
\begin{equation}
    X_N \equiv \sqrt{\frac{1 + \tau}{2}} G_N + \sqrt{\frac{1 - \tau}{2}} G_N^T \ ,
    \label{eq:X_def}
\end{equation}
where $G_N$ is a Ginibre matrix, $T$ denotes the matrix transpose and $\tau \in [0,1]$ is an interpolation parameter, such that $\tau = 1 (0)$ yields complex Ginibre (symmetric) matrices. It is useful to view the interpolation parameter in terms $\sigma \equiv \sqrt{1 - \tau^2}$, which we call the symmetry parameter - see Section \ref{sec:main} for more details on this ensemble.

Within this ensemble, we will present a result for the JPDF of an eigenvalue and the associated eigenvector, which is valid at finite $N$ - Theorem \ref{thm:JPDF_z_v}. From this JPDF, we will obtain the associated density of eigenvalues at finite $N$, see Corollary \ref{cor:rho_finite_N}, then assess its asymptotic limit as $N\to \infty$. This analysis shows that the leading order contribution to the density in the bulk is the same across the entire interpolating ensemble, which simply yields the celebrated circular law. On the other hand, in the edge regime we will prove that, for fixed values of the symmetry parameter, the density always follows the behaviour associated with Ginibre matrices (Corollary \ref{cor:rho_edge_SA}). We refer to this as the \textit{strong asymmetry} regime. Whereas, for values of the symmetry parameter asymptotically close to unity, we will present a result that smoothly interpolates the well-known edge density in the GinUE and the recently derived edge density in the SymOE \cite{AFS25}. This proves the presence of a previously unknown regime of \textit{weak asymmetry}, where time reversal symmetry is broken in non-Hermitian matrices. We conjecture that this regime is universal beyond Gaussian matrices and we provide numerical evidence in support of this. These new results constitute the first analytical description of the transition between non-Hermitian universality classes and also serve as the non-Hermitian extension of the works of Mehta \& Pandey \cite{PM83,MP83}, who studied the breaking of TRS in Hermitian matrices in the 1980s. 

\textbf{Background}: The question of universality has been at the heart of Random Matrix Theory (RMT) for many decades now, see \cite{KuijlaarsUniversality} for a review of results. Central to this topic is finding a way of classifying matrices according to mathematical/ physical symmetries and assigning such matrices to universality classes. Celebrated studies in this direction include the early works of Freeman Dyson, who classified closed quantum systems according to their symmetry under time reversal, using real symmetric, complex Hermitian and quaternion self-dual matrices \cite{DysonThreefold} - belonging to the well-known GOE, GUE and GSE respectively. This so called \textit{threefold way} was later generalised by Altland \& Zirnbauer to the \textit{tenfold way} of classifying Hermitian matrices, now applicable to a wider range of physical systems \cite{AZ97}. However, strikingly from an RMT perspective, it turns out that for the ten different symmetry classes, the bulk spectral statistics fall into the same three universality classes associated with the threefold way, with different statistics at the edge \cite{KW17}. The spectral statistics of Hermitian matrices in the different universality classes are well-established, with a range of important results available in  \cite{Wigner57Surm,MehtaBook,TW,Soshnikov99,PS1997,TaoVuUniversality,Johansson12}.

On the other hand, the associated picture of universality classes for non-Hermitian matrices, which describe open quantum mechanical systems \cite{GHS88,KS}, is not as well-analytically developed, but has received a recent boost in attention, partly thanks to the results of \cite{KSUS19,HKKU20}. In \cite{KSUS19}, the authors establish that for non-Hermitian matrices there are 38 possible symmetry classes according to which matrices and quantum systems can be classified. Then, through intensive numerical simulation of 9 of these 38 symmetry classes, a conjecture is proposed in \cite{HKKU20} that only three types of bulk statistics exist for the eigenvalues of non-Hermitian matrices. This conjecture was also recently extended to include edge behaviour - see the works of Akemann et al. \cite{AAKP25}, which also contains a thorough and illuminating discussion of the topics alluded to here. It should be noted at this point that these works build upon earlier theoretical findings, where a classification of non-Hermitian matrices based on symmetry properties was proposed by Bernard \& LeClair \cite{BL02} and then was given associated matrix representations in the work of Magnea \cite{Magnea08}. 

Consequently, it was proposed in \cite{HKKU20} that these 3 non-Hermitian universality classes can be represented using matrices from the complex Ginibre, the complex symmetric and the complex self-dual ensembles, where classes as denoted as A, AI$^\dagger$ and AII$^\dagger$ respectively - with the naming system due to Bernard \& LeClair \cite{BL02}. Matrices within these classes are distinguished according to their symmetry under time-reversal. Specifically, matrices in Class A have no time-reversal symmetry, whereas matrices in Classes AI$^\dagger$ and AII$^\dagger$ are invariant under time-reversal with sign $+1$ and $-1$ respectively. Analytic results for spectral correlations are well-established in the complex Ginibre ensemble, originally studied in \cite{Ginibre}, which have been demonstrated to hold for a wider class of non-Hermitian Wigner matrices, up to a four moment matching condition, in the work of Tao \& Vu \cite{TaoVu2015}. Furthermore, this work has been generalised to prove universality of the correlation kernels at the edge and in the bulk in \cite{CES21, MO24} respectively. Whereas, in comparison, associated results in classes AI$^\dagger$ and AII$^\dagger$ are much less developed. For example, there currently does not exist a closed formula for the density of eigenvalues in class AII$^\dagger$ and the corresponding equation was only very recently found in class AI$^\dagger$ \cite{AFS25}. This discovery comes as a result of a recent impressive development by Fyodorov in \cite{YanKacRice}, where he outlines a new method to obtain the Joint Probability Distribution Function (JPDF) of an eigenvalue and the associated right eigenvector in ensembles of random matrices, via what he calls the ``Kac-Rice method". This method has since been further utilised to study ergodicity of eigenvectors in coupled Ginibre matrices \cite{DF26}.

Within the findings of \cite{AFS25}, they use the finite size density of complex eigenvalues to derive the leading order contributions to the density of bulk and edge eigenvalues. They found that, when comparing classes A and AI$^\dagger$, the bulk eigenvalues follow the standard circular law in both ensembles, but have a noticeable difference in their edge behaviours. This constituted the first clearly observable and quantifiable difference between the two universality classes, for which they also provide convincing numerical evidence that their formula (originally derived for Gaussian matrices) holds universally within class AI$^\dagger$. They also utilise the JPDF of an eigenvalue and eigenvector to derive the associated distribution of the eigenvector self-overlap, which is an important object in measuring observables in quantum scattering \cite{SFPB,FyoSav2012} and characterising eigenvalue stability \cite{CM}. Their results for the self-overlap constitute a second analytic difference between the two universality classes. For a range of analytical results pertaining to the self-overlap one can consult \cite{CM,MC,JNNPZ,BD, FyodorovCMP, ATTZ, FT, WCF24, CFW25, Tarnowski24, CW24, Osman25}.

Other results are available within the classes AI$^\dagger$ and AII$^\dagger$, which we shall now briefly discuss. Firstly, the recent result for the density in class AI$^\dagger$ complements an older result for the density of eigenvalues in complex symmetric matrices at weak non-Hermiticity \cite{SFT99}, which is given as a complicated triple integral. On the other hand, to the best of our knowledge, no analytic results exist in class AII$^\dagger$ for the density, although there has recently been a conjecture proposed for the density of eigenvalues in the edge regime, which has yet to be rigorously proved \cite{KO25}. Away from the density of eigenvalues, results have started to appear comparing other statistical properties across all three universality classes. For example, in \cite{AMP22}, the authors propose a 2D Coulomb gas model with inverse temperature $\beta$ and show that the eigenvalue spacing distributions in classes AI$^\dagger$ and AII$^\dagger$ fit this model with parameter $\beta = 1.4$ and $\beta = 2.6$ respectively - for more information on general non-Hermitian $\beta$-ensembles one can consult \cite{MT2025,AMPT25}. Furthermore, explicit results have recently been achieved for products of complex conjugate pairs of characteristics polynomials \cite{AAKP25}, as well as numerical studies of spacing ratios \cite{SRP20,KW21,BielefeldGroup2026} in all 3 classes.

Given distinct universality classes, a natural question to ask is whether one can propose an interpolating model between the classes, so as to assess transitional regimes and model the breaking of a certain symmetry. Such a question has already been addressed in the Hermitian case through the works of Mehta \& Pandey \cite{PM83,MP83}, who model the breaking of time-reversal invariance in closed systems, through an ensemble that interpolates between the GOE and GUE. This model was also later found to have applications in Brownian motions by Katori \& Tannemura \cite{KT02}. Moreover, the question of interpolating Hermitian and non-Hermitian matrices has been heavily studied through the elliptic Ginibre ensembles, originally introduced in \cite{Girko2,SCSS}. Eigenvalue statistics are well-established in these ensembles as the joint distributions can be expressed as Pfaffian/determinantal point process (depending on if the entries are real, complex or quaternion) \cite{DGIL94,LehmannSommers,FN2}. The elliptic Ginibre ensembles also yield valuable insight into the transition between Hermitian and non-Hermitian matrices, through the so-called regime of weak non-Hermiticity, introduced by Fyodorov, Khoruzhenko \& Sommers \cite{FKS97a,FKS97b,FKS98}, which occurs when the separation of the eigenvalues along the real line is of the same order as the magnitude of the imaginary components. Further studies in the weak non-Hermiticity regime have also yielded universal behaviour of the kernel \cite{ACV,Osman23} and interpolation between Airy and Poisson kernels \cite{Bender,AP}. Additionally, the Kac-Rice method was employed in \cite{YanKacRice} to model the interpolation between the real and complex Ginibre ensemble (both members of class A), which demonstrated the presence of a ``weakly non-real" regime, where the depletion regime present in the real Ginibre ensemble is gradually healed to produce the uniform density observed in the complex Ginibre ensemble. However, to the best of our knowledge, the question of interpolating non-Hermitian universality classes has yet to be considered. 

\textbf{Outline of Paper}: In this work, we aim to address the question of interpolating non-Hermitian universality classes and simultaneously build upon the works of \cite{PM83,MP83,AFS25} to derive the density of eigenvalues in a Gaussian ensemble which interpolates matrices in Classes A and AI$^\dagger$. Our findings are organised as follows:

\begin{itemize}
    \item Section \ref{sec:main}: We outline our model then present our new results at finite matrix size $N$ and asymptotically for large $N$. This is also accompanied by numerical experimentation to verify and illustrate our findings at finite $N$, as well as to support our conjecture of universality of the transitional regime.
    \item Section \ref{sec:finite_proofs}: Application of the Kac-Rice method to prove our main results at finite matrix size, Theorem \ref{thm:JPDF_z_v} and Corollary \ref{cor:rho_finite_N}. 
    \item Section \ref{sec:AA}: Asymptotic analysis of the density of eigenvalues as $N \to \infty$, thus demonstrating the existence of separate regimes of strong and weak asymmetry at the spectral edge.
\end{itemize}

\textbf{Outlook}: Looking forward, there still remains a considerable number of open problems in the world of non-Hermitian matrices with notable consequences. Firstly, the question of quantifying eigenvector non-orthogonality still remains outstanding in the nHIE, as this problem is beyond the scope of our current JPDF since one would require the joint density to also include the left eigenvector. Achieving a distribution of this kind is likely to be within the realms of the Kac-Rice method, but requires further separate development. One should however note that the Kac-Rice method has already been used to quantify eigenvector non-orthogonality in the SymOE, this is due to the fact that left and right eigenvectors are only related via a simple transposition in symmetric ensembles. Secondly, as discussed above, analytic results are scarce in the third universality class of non-Hermitian matrices and so development of theoretical study in this direction should be considered a priority, potentially starting with the verification of the conjectured edge density in \cite{KO25}. One could then consider further separate interpolation models of non-Hermitian ensembles by considering transitions between classes A and AII$^\dagger$ as well as AI$^\dagger$ and AII$^\dagger$.
On the other hand, analytic differences between classes A and AI$^\dagger$ have only so far been observed at the level of the edge density and eigenvector non-orthogonality \cite{AFS25}, this immediately begs the question of quantifying other distinctions across the two classes. Such differences are likely to be observed at the level of correlation functions, distribution of the largest eigenvalue and counting statistics - with results already developed in the Ginibre ensembles \cite{DGIL94,Rider,LACT_MS,ABES}, awaiting comparison with results from other non-Hermitian universality classes.

\section{Statement \& Discussion of Main Results}
\label{sec:main}

We start this Section by defining and reviewing some existing results for the two ensembles we will be interpolating, namely: the \textit{complex Ginibre ensemble} and the \textit{complex symmetric ensemble}, abbreviated as GinUE and SymOE respectively. Note that the nomenclature of these abbreviations is chosen to reflect that the JPDFs of the respective ensembles are invariant under unitary and orthogonal transformations.

On the one hand, we have the GinUE where the entry-wise JPDF is well-known as
\begin{equation}
    \mathcal{P}^{(\text{GinUE})} \left(G_N\right)  = \frac{1}{\pi^{N^2}} \exp \left[ -\text{Tr} \left( G_N G_N^\dagger \right)  \right] \ , 
    \label{eq:jpdf_GinUE}
\end{equation}
such that $G_N$ is an $N \times N$ complex Ginibre matrix, where both the real and imaginary components of each entry are distributed according to a Gaussian, with mean zero and variance 1/2. One can see that this measure is invariant under unitary transformation, i.e. $G_N \to U G_N U^\dagger$ for unitary $U$, which corresponds to broken time reversal invariance. Within this ensemble, the density of eigenvalues in the complex plane reads
\begin{equation}
    \rho^{(\text{GinUE})}_N(z) = \frac{1}{\pi } \, e^{- \vert z \vert^2} \sum_{n=0}^{N-1} \frac{ \vert z \vert^{2n}}{n!} = \frac{1}{\pi} \, \frac{\Gamma\left( N, \vert z \vert^2 \right)}{\Gamma\left( N \right)}  \ , 
    \label{eq:rho_N_GinUE}
\end{equation}
where we have introduced the incomplete $\Gamma$-function, which is defined as
\begin{equation}
    \Gamma\left( N , a \right) \equiv \Gamma\left(N\right) \, e^{-a} \, \sum_{k=0}^{N-1} \frac{a^k}{k!} = \int_a^\infty \diff u\, u^{N-1} e^{-u} = a^{N}\int_1^\infty \diff u\, u^{N-1} e^{-ua} \ .
    \label{eq:incmpl_Gamma}
\end{equation}
Furthermore, asymptotic limits, as $N \to \infty$, of the above density are well-known in the bulk limit, where $z = \sqrt{N}w$, as
\begin{align}
    \rho^{(\textup{GinUE})}_{\text{bulk}}(w) &\equiv \lim_{N \to \infty} \rho^{(\text{GinUE})}_{N}\Big(z = \sqrt{N} w \Big) = \frac{1}{\pi} \Theta\Big[ 1 - |w|^2 \Big] \ ,
    \label{eq:rho_bulk_GinUE}
\end{align}
i.e. the celebrated circular law (with $\Theta[x>0] = 1$ and 0 otherwise) and in the edge limit with
\begin{align}
    \rho^{(\textup{GinUE})}_{\text{edge}}(\eta) &\equiv \lim_{N \to \infty} \rho^{(\text{GinUE})}_{N}\Big(|z| = \sqrt{N} + \eta \Big) = \frac{1}{2 \pi} \erfc\Big(\sqrt{2} \eta\Big) \ ,
    \label{eq:rho_edge_GinUE}
\end{align}
where we have made use of the the complementary error function defined as 
\begin{equation}
    \erfc(x) \equiv 1 - \text{erf}(x) \qquad \qquad \text{with} \qquad \qquad \text{erf}(x) \equiv \frac{2}{\sqrt{\pi}} \int_0^x e^{-t^2} \diff t \ .
    \label{eq:def_erfc}
\end{equation}
In fact, this density has also been considered in the elliptic Ginibre ensembles, with higher order asymptotic expansions and extensions to the real and quaternionic ensembles available in \cite{LR,AB}. For a comprehensive review of results pertaining to the complex Ginibre ensemble and it applications to physics, we invite the reader to consult \cite{BF}.

On the other hand, in the SymOE, one has the following JPDF of entries
\begin{equation}
    \mathcal{P}^{(\text{SymOE})}\left(S_N\right)  = \frac{1}{\pi^{N(N-1)/2}(2 \pi)^N} \exp \left[ - \frac{1}{2} \text{Tr} \left( S_N \overline{S_N} \right)  \right] \ ,  
    \label{eq:jpdf_SymOE}
\end{equation}
such that $S_N$ is an $N \times N$ complex symmetric matrix, $S_N = S_N^T$, with mean zero entries that have variance two on the diagonal and unit variance otherwise. Note that here and throughout, we use the overbar notation to denote the act of complex conjugation. In contrast to the GinUE, it can be seen that this measure is invariant under orthogonal transformation, i.e. $S_N \to O S_N O^T$ for orthogonal $O$, which corresponds to preserved time reversal invariance. 

As mentioned in Section \ref{sec:intro}, it was only very recently that a finite $N$ equation for the density of eigenvalues was discovered in the SymOE \cite[Eq. (9)]{AFS25}. Within this work, the authors employ a scaling such that their limiting density of eigenvalues converges to a circle of radius $\sqrt{2}$. We choose to slightly rescale this result to allow comparisons with the GinUE as defined above, such that both ensembles have a limiting density of the eigenvalues which is uniform in a disk of radius $\sqrt{N}$. In such a scaling, the density of complex eigenvalues is given by
\begin{align}
    \rho^{(\text{SymOE})}_{N}(z) &= \frac{1}{2\pi} \frac{\Gamma(N,|z|^2)}{\Gamma(N)} + \frac{e^{\frac{|z|^2}{2}}}{4\pi}
    \left(\frac{2}{|z|^2}\right)^{\frac{N+1}{2}}
    \gamma\!\left(\frac{N+1}{2}, \frac{|z|^2}{2}\right) \Bigg[\left( N - 1 - |z|^2 \right) \frac{\Gamma(N, |z|^2)}{\Gamma(N)} +  2 \frac{e^{-|z|^2}|z|^{2N}}{\Gamma(N)}
    \Bigg] \ ,
    \label{eq:sym_density_finite_N}
\end{align}
where we have made use of the lower incomplete $\gamma$-function defined as
\begin{equation}
    \gamma(s,x) \equiv \int_0^x \diff t \, t^{s-1} \, e^{-t} \ .
    \label{eq:def_gamma} 
\end{equation}
Furthermore, by construction, it follows that the density of bulk eigenvalues is given by
\begin{align}
    \rho^{(\textup{SymOE})}_{\text{bulk}}(w) &\equiv \lim_{N \to \infty} \rho^{(\text{SymOE})}_{N}\Big(z = \sqrt{N} w \Big) = \frac{1}{\pi} \Theta\Big[ 1 - |w|^2 \Big] \ .
    \label{eq:rho_bulk_SymOE}
\end{align}
The first real analytic difference between these two ensembles that has been observed so far comes when considering the density of eigenvalues at the spectral edge \cite[Eq. (13)]{AFS25}. After applying our chosen rescaling, the density in this regime is given by  
\begin{align}
    \rho^{(\text{SymOE})}_{\text{edge}}(\eta) &\equiv \lim_{N \to \infty} \rho^{(\text{SymOE})}_{N}\Big(z = \sqrt{N} + \eta \Big) \nonumber \\
    &= \frac{1}{4\pi} \erfc\Big(\sqrt{2} \eta\Big) + \frac{1}{4 \sqrt{\pi}} \Big( 1 + \erf(\eta) \Big) \, e^{\eta^2} \left( \sqrt{\frac{2}{\pi}} \, e^{- 2 \eta^2} - \eta \, \erfc(\sqrt{2} \eta)  \right) \ .
    \label{eq:sym_edge_density}
\end{align}
It is the central aim of this paper to derive an interpolating density equation that smoothly connects Eqs. \eqref{eq:rho_edge_GinUE} and \eqref{eq:sym_edge_density}, so as to address the transition between matrices in non-Hermitian universality classes $A$ and $\text{AI}^\dagger$.

With existing results established, we are now in a position to start defining the interpolating ensemble, introduced in Eq. \eqref{eq:X_def}. One can see that when $\tau = 0$ we obtain complex symmetric matrices (class AI$^\dagger$) and when $\tau = 1$ we arrive at complex Ginibre matrices (class A). The reason for this choice of normalisation, regarding the factors of $\tau$, is to impose the following correlation structure on the entries of $X_N$, 
\begin{equation}
    \expval{|X_{ii}|^2}_{X_N} = 1 + \sigma \, , \hspace{1cm}
    \expval{|X_{ij}|^2}_{X_N} = 1 \, , \hspace{1cm} 
    \expval{X_{ij}\, X_{ji}}_{X_N} = 0 \ , \hspace{1cm} 
    \expval{X_{ij}\, \overline{X_{ji}}}_{X_N} = \sigma \ ,
    \label{eq:correlation_structure}
\end{equation}
for $i \neq j$, where $\expval{\cdot}_{X_N}$ represents an ensemble average with respect to $X_N$ and $\sigma \equiv \sqrt{1 - \tau^2}$. Throughout this work, all results will be expressed in terms of $\sigma$ and we will refer to this as the \textit{symmetry parameter}. Given that the variance of the off-diagonal entries is constant at unity, this ensures that, for all $\tau$, the eigenvalues always converge to the circular law with radius $\sqrt{N}$ in the large $N$ limit (this statement will be proved fully in Section \ref{cor:rho_bulk}). In what follows we shall refer to matrices $X_N$ as belonging to the \textit{non-Hermitian Interpolating Ensemble} (nHIE), with the JPDF of such matrices being given by
\begin{equation}
    \mathcal{P}^{\text{(nHIE)}}(X) = \mathcal{C}_N \, \text{exp}\left[ - \frac{1}{1 - \sigma^2} \text{Tr} \left(X^\dagger X - \sigma \overline{X} X\right) \right] \ ,
    \label{eq:nHIE_JPDF}
\end{equation}
with normalisation constant
\begin{equation}
     \mathcal{C}_N = \frac{1}{\pi^{N^2} (1 +\sigma)^N  \sqrt{1 - \sigma^2}^{N(N-1)}} \ ,
\end{equation}
for $\sigma < 1$. In a similar way to the elliptic Ginibre ensembles, one can carefully take the limit $\sigma \to 1^-$ and obtain Eq. \eqref{eq:jpdf_SymOE}, we leave this as an exercise to the reader.

We now move onto introducing our central findings for the nHIE. Within this ensemble we are able to derive the following result for the JPDF of an eigenvalue and the associated normalised eigenvector.

\begin{thm} \label{thm:JPDF_z_v}
    Let $X_N$ be a random $N \times N$ matrix in the nHIE, defined in Eq. \eqref{eq:X_def}, with symmetry parameter $\sigma \in [0,1]$. Let $z$ be an eigenvalue of $X_N$ and let $\bm v$ be the associated normalised right-eigenvector, such that $X_N \bm v = z \bm v$ and $\bm v^\dagger \bm v = 1$. Then, it follows that the JPDF of $z$ and $\bm v$ is given by 
    \begin{align}
        \mathcal{P}^{(\textup{nHIE})}_{N}(z, \bm v; \sigma) &= \frac{e^{ \frac{\sigma}{1 + \sigma}|\nu|^2|z|^2} \Gamma(N-1)}{\pi^{N+1} (1 + \sigma)}  \Bigg[ (N - 1) \mathcal{S}_{N-1}(|z|^2; \sigma) + |z|^2 (\sigma - 2) \sigma \, \mathcal{S}_{N-2}(|z|^2; \sigma) P^2 \nonumber\\
        &- \Big( (N - 1 ) \mathcal{S}_{N-1}(|z|^2; \sigma) - (1 - \sigma^2) \mathcal{T}_{N - 1}(|z|^2; \sigma) \Big) P^2 + \sigma^2 |z|^2 \mathcal{S}_{N-2}(|z|^2; \sigma) P^4 \Bigg]\ ,
        \label{eq:JPDF_z_v_nHIE}
    \end{align}
    where $\nu \equiv \bm v^T \bm v$ and $P \equiv 1- |\nu|^2$ and we have utilised the functions
    \begin{equation}
        \mathcal{S}_{N}(|z|^2;\sigma) \equiv \sum_{k=0}^N \sigma^{k} \frac{\Gamma(N-k+1, |z|^2)}{\Gamma(N-k+1)} \hspace{1cm} \text{and} \hspace{1cm} \mathcal{T}_{N}(|z|^2;\sigma) \equiv \sum_{k=0}^N \sigma^{k} \frac{(N-k) \Gamma(N-k+1, |z|^2)}{\Gamma(N-k+1)} \ , 
        \label{eq:def_SN_TN}
    \end{equation}
    given in terms of the incomplete $\Gamma$-function defined in Eq. \eqref{eq:incmpl_Gamma}. 
\end{thm}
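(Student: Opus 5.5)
We follow Fyodorov's Kac--Rice approach \cite{YanKacRice}, as adapted to the SymOE in \cite{AFS25}. The starting point is the identity
\begin{equation*}
\mathcal{P}_N(z,\bm v;\sigma) = \Big\langle \delta^{(2N)}\big((X_N - z)\bm v\big)\, \big|\det{}'\,\mathcal{J}\big|\Big\rangle_{X_N},
\end{equation*}
where $\bm v$ ranges over the unit sphere modulo the phase $U(1)$. Here $\mathcal{J}$ is the real Jacobian of the map $(z,\bm v)\mapsto (X_N - z)\bm v$ restricted to the tangent space, and it reduces to $|\det(X_N-z)'|^2$ on the complement of $\bm v$.

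First we use invariance to move $\bm v$ to $\bm e_1$. The measure \eqref{eq:nHIE_JPDF} is invariant only under orthogonal conjugation, not unitary conjugation, so we cannot fully rotate $\bm v$. Instead, we write $\bm v = O\,\mathrm{diag}$-type canonical form: by a real orthogonal $O$ and a phase, any unit vector reduces to $\bm v = \cos\theta\, \bm e_1 + i\sin\theta\, \bm e_2$. Then $\nu = \bm v^T\bm v = \cos 2\theta$, so everything depends only on $|\nu|$. Alternatively, we can use a unitary $U$ with $U\bm v=\bm e_1$ and track how the term $\overline{X}X$ transforms under the non-orthogonal $U$; this produces the explicit $\nu$-dependence.

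Next we block-decompose $\tilde X = U^\dagger X U = \begin{pmatrix} z & \bm w^\dagger\\ \bm 0 & X_{N-1}\end{pmatrix}$ after imposing the delta function. The delta function fixes the first column. The Gaussian weight then becomes a quadratic form in $(\bm w, X_{N-1})$, coupled through $\sigma$ and through the vector $U^T U\bm e_1$, which encodes $\nu$.

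We integrate out $\bm w$; this is a Gaussian integral. The exponent evaluated at the fixed column gives the prefactor $e^{\frac{\sigma}{1+\sigma}|\nu|^2|z|^2}$ together with $\pi^{-N-1}(1+\sigma)^{-1}$. What remains is the average of $|\det(z-X_{N-1})|^2$ over an $(N-1)$-dimensional matrix whose law is nHIE-like but perturbed by a rank-one (or rank-two) correction involving $P=1-|\nu|^2$.

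The main obstacle is computing this modified average of $|\det(z-X_{N-1})|^2$ in closed form. Our first attempt will use Grassmann/Berezin integration. We write $|\det|^2$ with two sets of anticommuting vectors, average over the Gaussian matrix (which yields quartic terms weighted by $1$ and $\sigma$, since $\langle X_{ij}\overline{X_{ji}}\rangle=\sigma$), and then apply Hubbard--Stratonovich transformations. The $\sigma$ term couples $\chi_i\bar\chi_j$ with its transpose, so we expect a $2\times2$ (or scalar plus rank-one) auxiliary field.

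For $P=0$, integrating the auxiliary field should generate the generating-function sum $\sum_k \sigma^k \Gamma(N-k,|z|^2)/\Gamma(N-k)$. This is recognisable as expanding $(1-\sigma t)^{-1}$ against the GinUE result $\langle|\det(z-G)|^2\rangle=\Gamma(N)e^{|z|^2}\,\Gamma(N,|z|^2)/\Gamma(N)$-type formulas, and it gives the $\mathcal{S}_{N-1}$ term.

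The rank-one corrections come from expanding to second order in the Grassmann source coupled to $P$. They produce the terms of order $P^2$ and $P^4$. Differentiating the sum with respect to $N$-shift and $|z|^2$ yields $\mathcal{S}_{N-2}$ and $\mathcal{T}_{N-1}$; we use the identity $\partial_{a}\Gamma(n,a)=-a^{n-1}e^{-a}$ to convert these into the stated combinations.

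Finally we check the result against the known limits. At $\sigma=0$ the formula should give the uniform-in-$\bm v$ GinUE density \eqref{eq:rho_N_GinUE}. As $\sigma\to1$, it should reproduce the SymOE JPDF of \cite{AFS25}; this fixes any remaining constants.
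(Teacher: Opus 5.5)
Your route is genuinely different from the paper's, and it works once two points in your sketch are repaired.

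\textbf{How the two routes compare.} The paper never conjugates $X_N$. It writes $\delta^{(2N)}((X_N-z)\bm v)$ as a Fourier integral over $\bm k$, and $|\det{}'(z-X_N)|^2$ through four Grassmann vectors, regularised by $\partial_w\partial_{\overline w}$ at $w=z$. It averages over $X_N$ with the characteristic function of Proposition~\ref{prop:ensemble_ave} and integrates out $\bm k$. It then decouples with five Hubbard--Stratonovich fields and reduces a $4N\times4N$ Pfaffian to a $4\times4$ determinant evaluated with Mathematica. Only after that do the $q_1,q_2$ integrals produce $\mathcal S$ and $\mathcal T$. In that scheme the $\nu$-dependence appears automatically through bilinears in $\bm v$, and no frame adapted to $\bm v$ is needed. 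It also covers $\sigma=1$, because the density \eqref{eq:nHIE_JPDF} is never used.

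You instead reduce to $\bm v=\cos\theta\,\bm e_1+i\sin\theta\,\bm e_2$. You can then take $U=\begin{pmatrix}\cos\theta & i\sin\theta\\ i\sin\theta&\cos\theta\end{pmatrix}\oplus\1_{N-2}$, so that $\Tr(\overline{X}X)=\Tr(\overline{W}\,\overline{\tilde X}\,W\tilde X)$ with $W=U^TU=\begin{pmatrix}\nu & ip\\ ip&\nu\end{pmatrix}\oplus\1_{N-2}$, $p=\sin2\theta$, $p^2=1-|\nu|^2$. After that everything is Gaussian conditioning plus determinant averages. The cost is that you need $\sigma<1$ and a continuity argument at $\sigma=1$. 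The gain is a transparent origin for each term of \eqref{eq:JPDF_z_v_nHIE}, with no Pfaffian algebra.

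\textbf{First repair: the prefactor.} It does not come from the fixed column alone, which gives only $e^{-\frac{1-\sigma|\nu|^2}{1-\sigma^2}|z|^2}$; the weight is also linear in $z$. Write $X_{N-1}=\begin{pmatrix} y & \bm r^T\\ \bm c & Z\end{pmatrix}$ and integrate out the first row of $\tilde X$. The reduced law is then not centred:
\begin{itemize}
\item $y$ has mean $\sigma p^2 z$ and variance $(1+\sigma)(1-\sigma p^2)$;
\item the pairs $(r_k,c_k)$ have $\langle|r_k|^2\rangle=1-\sigma^2p^2$, $\langle|c_k|^2\rangle=1$ and $\langle r_k\overline{c_k}\rangle=\sigma\nu$;
\item $Z$ is an nHIE matrix of size $N-2$.
\end{itemize}
The Gaussian factor in $z$ is then exactly $e^{-(1-\mu|\nu|^2)|z|^2}$, with $\mu=\sigma/(1+\sigma)$. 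The constant $\pi^{-N-1}(1+\sigma)^{-1}$ comes out directly in the paper's normalisation of the $\bm v$-measure. So nothing has to be fixed by matching the $\sigma\to1$ limit. The mean shift of $y$ produces the whole $P^4$ term and part of the $P^2$ term; a centred rank-one perturbation would miss them.

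\textbf{Second repair: the determinant average.} Replace the ``second-order expansion'' by the Schur complement $\det(z-X_{N-1})=\det(z-Z)\,[z-y-\bm r^TG\bm c]$, with $G=(z-Z)^{-1}$. This gives
\[
\big\langle|\det(z-X_{N-1})|^2\big\rangle=\big[(1-\sigma p^2)^2|z|^2+(1+\sigma)(1-\sigma p^2)\big]D+(1-\sigma^2p^2)A+\sigma^2|\nu|^2B ,
\]
where $D=\langle|\det(z-Z)|^2\rangle=e^{|z|^2}\Gamma(N-1)\mathcal S_{N-2}$, $A=\langle|\det(z-Z)|^2\Tr(GG^\dagger)\rangle$ and $B=\langle|\det(z-Z)|^2\Tr(G\overline{G})\rangle$.
\begin{itemize}
\item The case $p=0$ is the size-$(N-1)$ average $e^{|z|^2}\Gamma(N)\mathcal S_{N-1}$, which fixes $A+\sigma^2B$.
\item The one genuinely new Grassmann/HS computation is $A=e^{|z|^2}\Gamma(N-1)\big[\mathcal T_{N-1}-(1+|z|^2)\mathcal S_{N-2}\big]$ (compare \eqref{eq:G_1_function}). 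This is where $\mathcal T_{N-1}$ enters.
\item With $A$ in hand, collecting powers of $p^2$ reproduces \eqref{eq:JPDF_z_v_nHIE} term by term.
\end{itemize}

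The outcome is quadratic in $p^2=1-|\nu|^2$. So the statement's $P^2$ and $P^4$ must be read with $P=\sqrt{1-|\nu|^2}$, as \eqref{eq:def_C_1} and \eqref{eq:integrate_out_vectors} confirm.
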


With the JPDF of the eigenvalue and eigenvector in place, we can now apply an integration result from \cite{AFS25} to derive the marginal density of the eigenvalues as follows:

\begin{cor}\label{cor:rho_finite_N}
    For random $N\times N$ matrices in the nHIE, Eq. \eqref{eq:X_def}, the density of eigenvalues in the complex plane follows directly from Theorem \ref{thm:JPDF_z_v} as:
    \begin{align}
        \rho^{(\textup{nHIE})}_{N}(z;\sigma) &= \frac{1}{\pi (1+\sigma)} \mathcal{S}_{N-1}(|z|^2;\sigma) - \frac{\sigma}{2\pi}  \mathcal{S}_{N-2}(|z|^2;\sigma) + \frac{e^{\frac{\sigma}{1+\sigma}|z|^2}}{2\pi(1+\sigma)}
        \left(\frac{1+\sigma}{\sigma |z|^2}\right)^{\frac{N+1}{2}}
        \gamma\!\left(\frac{N+1}{2}, \frac{\sigma}{1+\sigma}|z|^2\right) \nonumber \\
        & \times \Bigg[ \frac{(1+\sigma)(N+1)}{2} \sigma \, \mathcal{S}_{N-2}(|z|^2;\sigma) - \Big(  ( N-1 ) \mathcal{S}_{N-1}(|z|^2;\sigma) -  (1-\sigma^2)\, \mathcal{T}_{N-1}(|z|^2;\sigma) \Big) \nonumber\\
        &+ |z|^2 \frac{\sigma}{1+\sigma} \left( \sigma \, \mathcal{S}_{N-1}(|z|^2;\sigma)  + (\sigma-2)\left( \mathcal{S}_{N-2}(|z|^2;\sigma) -  \frac{\Gamma(N, |z|^2)}{\Gamma(N)} \right) \right)
        \Bigg] \ ,
        \label{eq:rho_finite_N}
    \end{align}
    where we have made use of the lower incomplete $\gamma$-function defined in Eq. \eqref{eq:def_gamma}. Note that this density is normalised such that $\int_{\mathbb{C}} \diff^2 z \, \rho^{(\textup{nHIE})}_{N}(z;\sigma) = N$.
\end{cor}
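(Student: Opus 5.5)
The plan is to integrate the joint density of Theorem \ref{thm:JPDF_z_v} over the normalised eigenvector $\bm v$ on the unit sphere in $\mathbb{C}^N$, taken modulo the overall phase as in the Kac-Rice construction. The key observation is that, for fixed $z$, the integrand in Eq. \eqref{eq:JPDF_z_v_nHIE} depends on $\bm v$ only through $|\nu|^2 = |\bm v^T \bm v|^2$. It is a polynomial in $P = 1-|\nu|^2$ of degree four multiplied by $e^{a|\nu|^2}$, where $a \equiv \frac{\sigma}{1+\sigma}|z|^2$.

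The first step is therefore to compute the pushforward of the uniform measure on normalised vectors onto $q \equiv |\nu|^2 \in [0,1]$. Write $\bm v = \bm x + i \bm y$ with $|\bm x|^2+|\bm y|^2=1$. Then $q = (|\bm x|^2-|\bm y|^2)^2 + 4(\bm x\cdot \bm y)^2$, which depends only on the singular values of the real $N\times 2$ matrix $(\bm x,\bm y)$. This should give a density proportional to $q^{-1/2}(1-q)^{(N-3)/2}$ on $[0,1]$. This is exactly the integration result from \cite{AFS25} that the paper says it uses, since the same $|\nu|^2$ appears in the SymOE. I would quote it rather than rederive it, taking care that the normalisation matches the measure $\diff \bm v$ implicit in Theorem \ref{thm:JPDF_z_v}.

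The problem then reduces to one-dimensional integrals of the form
\[
I_m(a) = \int_0^1 \diff q\, q^{-1/2}(1-q)^{\frac{N-3}{2}+m}\, e^{a q}, \qquad m=0,1,2 .
\]
Substituting $q \to 1-q$ gives $e^{a}\int_0^1 \diff t\,(1-t)^{-1/2} t^{\frac{N-3}{2}+m} e^{-a t}$. I would not expand $(1-t)^{-1/2}$. Instead I would integrate by parts and use recurrences so that every $I_m$ is expressed through the single function
\[
e^{a} a^{-\frac{N+1}{2}}\gamma\!\left(\tfrac{N+1}{2},a\right)
\]
plus elementary terms. This is how the structure $e^{\frac{\sigma}{1+\sigma}|z|^2}\left(\frac{1+\sigma}{\sigma|z|^2}\right)^{\frac{N+1}{2}}\gamma(\cdot)$ in Eq. \eqref{eq:rho_finite_N} should arise.

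Concretely, I would use the relation $\gamma(s+1,a) = s\gamma(s,a) - a^s e^{-a}$. I would also use the fact that $I_{m}$ with $P^2$ and $P^4$ weights combine into $I_0$ and its $a$-derivatives. The elementary remainders produced by the recurrences, with no $\gamma$, should then combine with the Gamma-function prefactors to give the terms $\frac{1}{\pi(1+\sigma)}\mathcal{S}_{N-1}$ and $-\frac{\sigma}{2\pi}\mathcal{S}_{N-2}$.

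The last step is to simplify the $\mathcal{S}$ and $\mathcal{T}$ coefficients. This needs identities such as $\mathcal{S}_{N-1} = \frac{\Gamma(N,|z|^2)}{\Gamma(N)} + \sigma\mathcal{S}_{N-2}$, which explains the appearance of $\mathcal{S}_{N-2} - \Gamma(N,|z|^2)/\Gamma(N)$ in the bracket.

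I expect the main obstacle to be this bookkeeping. Matching the $P^2$ and $P^4$ terms after integration by parts to the precise bracket of Eq. \eqref{eq:rho_finite_N} is delicate, and the Jacobian normalisation constant must also be fixed correctly. I would check both against two known cases:
\begin{itemize}
\item at $\sigma=0$ the formula should reduce to Eq. \eqref{eq:rho_N_GinUE}, since there $a=0$ and $\mathcal{S}_{N-1}=\Gamma(N,|z|^2)/\Gamma(N)$;
\item at $\sigma=1$ it should reduce to Eq. \eqref{eq:sym_density_finite_N}.
\end{itemize}

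Finally, the normalisation $\int \diff^2 z\,\rho = N$ follows because the Kac-Rice density counts all eigenvalues. As an independent check, I would verify the $\sigma=0$ case directly.
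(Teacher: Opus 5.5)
Your architecture matches the paper's. You integrate out $\bm v$ via \cite[Eq. (8)]{AFS25}, i.e. Eq. \eqref{eq:integrate_out_vectors}, then use the $\gamma$-recurrence and $\mathcal{S}_{N-1}=\Gamma(N,|z|^2)/\Gamma(N)+\sigma\mathcal{S}_{N-2}$. However, the one concrete input you write down, the law of $q=|\bm v^T\bm v|^2$, is wrong, and everything downstream rests on it. Eq. \eqref{eq:integrate_out_vectors} with $p=\sqrt{1-q}$ reads
\[
\frac{1}{\pi^N}\int_{\mathbb{C}^N}\diff^2\bm v\,\delta(\bm v^\dagger\bm v-1)\,f(q)=\frac{1}{2\Gamma(N-1)}\int_0^1\diff q\,(1-q)^{\frac{N-3}{2}}f(q)\,, \qquad q=|\bm v^T\bm v|^2 ,
\]
with no factor $q^{-1/2}$. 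In your singular-value picture you dropped the Vandermonde factor. On $\lambda_1+\lambda_2=1$, the eigenvalues $\lambda_{1,2}$ of $(\bm x,\bm y)^T(\bm x,\bm y)$ have density $\propto(\lambda_1\lambda_2)^{(N-3)/2}|\lambda_1-\lambda_2|$ (two-column real Wishart). Since $q=(\lambda_1-\lambda_2)^2$, the factor $|\lambda_1-\lambda_2|=\sqrt{q}$ cancels exactly the Jacobian $1/(2\sqrt{q})$ that produced your $q^{-1/2}$.

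For $N=2$ you can see this directly. With $\bm v=(\cos\theta\,e^{i\alpha},\sin\theta\,e^{i\beta})$ one has $p=\sin 2\theta\,|\sin(\alpha-\beta)|$. This is the modulus of a Cartesian coordinate of a uniform point on $S^2$, hence uniform on $[0,1]$, whereas your law gives $p$ a density $\propto(1-p^2)^{-1/2}$. Your $\sigma=0$ check cannot detect this. At $\sigma=0$ the coefficients of $P^2$ and $P^4$ in Eq. \eqref{eq:JPDF_z_v_nHIE} vanish and the exponential equals $1$, so the integrand does not depend on $\nu$ and only the overall constant is tested.

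This is not a bookkeeping issue. With your weight, $I_m(a)=B(\tfrac{1}{2},\tfrac{N-1}{2}+m)\,M(\tfrac{1}{2},\tfrac{N}{2}+m,a)$ is a Kummer function that is not of incomplete-$\gamma$ type. No integration by parts can then produce the single function $e^{a}a^{-\frac{N+1}{2}}\gamma(\tfrac{N+1}{2},a)$, and neither the structure of Eq. \eqref{eq:rho_finite_N} nor the $\sigma=1$ limit Eq. \eqref{eq:sym_density_finite_N} would emerge.

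With the correct weight, no integration by parts is needed. Read the theorem with $P^2=1-|\nu|^2$; your choice $m=0,1,2$ implicitly does this, and it is what Eqs. \eqref{eq:def_C_1}--\eqref{eq:def_C_4} give. The substitution $u=a(1-q)$ yields $\int_0^1\diff q\,(1-q)^{\frac{N-3}{2}+m}e^{aq}=e^{a}a^{-\frac{N-1}{2}-m}\gamma(\tfrac{N-1}{2}+m,a)$. One use of $\gamma(s+1,a)=s\gamma(s,a)-a^{s}e^{-a}$ for $m=0$ and for $m=2$ then expresses everything through $G\equiv e^{a}a^{-\frac{N+1}{2}}\gamma(\tfrac{N+1}{2},a)$. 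Writing the bracket in Eq. \eqref{eq:JPDF_z_v_nHIE} as $A+BP^2+CP^4$, this gives $\rho=\frac{1}{\pi(1+\sigma)}\big[\frac{A(aG+1)}{N-1}+\frac{BG}{2}+\frac{C}{2a}\big(\frac{N+1}{2}G-1\big)\big]$. The $G$-free part gives exactly $\frac{1}{\pi(1+\sigma)}\mathcal{S}_{N-1}-\frac{\sigma}{2\pi}\mathcal{S}_{N-2}$. The coefficient of $G$ reduces to the stated bracket via $\mathcal{S}_{N-1}=\Gamma(N,|z|^2)/\Gamma(N)+\sigma\mathcal{S}_{N-2}$, as you anticipated.
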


Considering the form of Eq. \eqref{eq:rho_finite_N}, one can notice that the density of eigenvalues is rotationally invariant, i.e. it only depends on $|z|$, this in accordance with previous observations for the GinUE and SymOE, Eqs. \eqref{eq:rho_N_GinUE} and \eqref{eq:sym_density_finite_N} respectively. This is not unexpected as all members of the interpolating ensemble are complex matrices, thus effects that can break rotational symmetry, such as eigenvalue depletion close to the real line (which is typically seen in real and quaternionic matrices), are not present. Furthermore, one can also see that Eq. \eqref{eq:rho_finite_N} readily reproduces the Ginibre density, Eq \eqref{eq:rho_N_GinUE}, since $\gamma(M,0) =0$ for all $M>0$ by Eq. \eqref{eq:def_gamma} and 
\begin{equation}
    \mathcal{S}_{N-1}(|z|^2;0) = \frac{\Gamma(N,|z|^2)}{\Gamma(N)} \ ,
\end{equation}
according to Eq. \eqref{eq:def_SN_TN}. On the other hand, Eq. \eqref{eq:rho_finite_N} reproduces Eq. \eqref{eq:sym_density_finite_N} when $\sigma = 1$, this can be seen through the fact that 
\begin{equation}
    \mathcal{S}_N(|z|^2; \sigma = 1) = \sum_{k=0}^N \frac{\Gamma(k+1,|z|^2)}{\Gamma(k+1)} = \frac{\Gamma(N+2,|z|^2)(N+1-|z|^2) + e^{-|z|^2} |z|^{2(N+2)}}{\Gamma(N+2)} \ ,
    \label{eq:S_sym_no_sum}
\end{equation}
which can be derived by employing the integral representation of the incomplete $\Gamma$-function after the first equality.

In Figure \ref{fig:finite_N}, we show the distribution of $|z|$ for fixed $N$ and fixed $\sigma$ separately. The fixed $N$ plot illustrates how the nHIE links the GinUE and SymOE with numerical results shown for $\sigma = 0.25$ and $\sigma =0.75$. Whereas the fixed $\sigma = 0.5$ plot shows how the standard triangular law (i.e. $\rho(|z|) = |z| \Theta[\sqrt{N} - |z|]$ which follows from the circular law) for the density of eigenvalues forms in the nHIE. In both of these plots, the distribution is normalised to one, which is obtained by multiplying the corresponding density by $2 \pi |z|/ N$. 

\begin{figure}[h]
    \centering
    \includegraphics[width=0.48\linewidth]{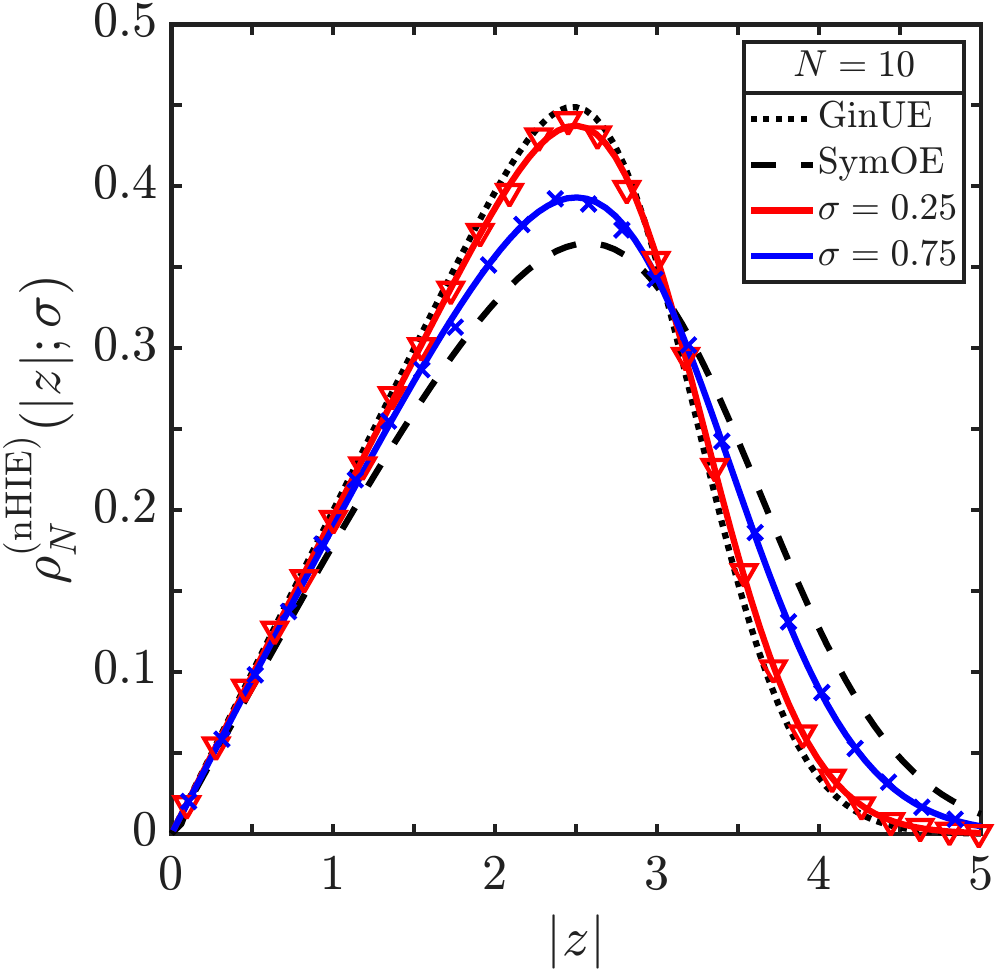}
    \hspace{0.5cm}
    \includegraphics[width=0.48\linewidth]{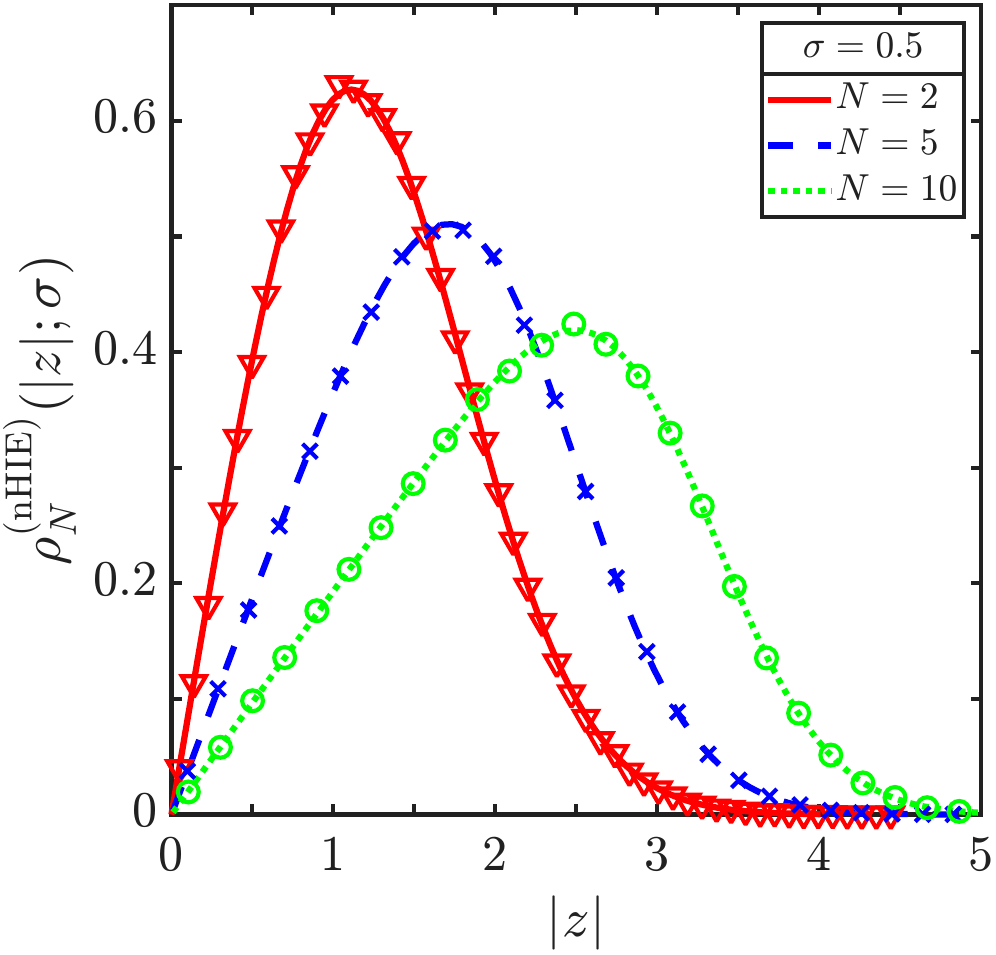}
    \caption{Normalised distribution of eigenvalue moduli in the nHIE at finite $N$ and a comparison to the GinUE and SymOE. Left: Effect of changing $\sigma$ for a fixed $N$. Right: Effect of changing $N$ for a fixed $\sigma$. Theoretical lines for the GinUE, SymOE and nHIE are plotted using appropriately normalised versions of Eqs. \eqref{eq:rho_N_GinUE}, \eqref{eq:sym_density_finite_N} and \eqref{eq:rho_finite_N} respectively.  In each plot coloured markers are generated from the spectra of 100,000 randomly generated matrices. }
    \label{fig:finite_N}
\end{figure}

Now that we have an equation in place for the density of eigenvalues which is valid at finite $N$, the natural and interesting question that follows is to consider its asymptotic limit as $N \to \infty$. As with matrices in the GinUE, one must consider two separate scaling regimes for the positions of the eigenvalues. These are: the \textit{bulk} (where $z = \sqrt{N} w$) and the \textit{edge} (with $|z| = \sqrt{N} + \eta$) regimes, defined for finite $w$ and $\eta$, with the extra constraint that $\eta \in \mathbb{R}$. Analysing the bulk regime first, we find the following:

\begin{cor}\label{cor:rho_bulk}
    Consider an eigenvalue in the bulk of the nHIE, scaled as $z = \sqrt{N} w$. Then, it follows from Corollary \ref{cor:rho_finite_N}, that for all $\sigma \in [0,1]$, we have that
    \begin{align}
        \rho^{(\textup{nHIE})}_{\textup{bulk}}(w) \equiv \lim_{N \to \infty} \rho^{(\textup{nHIE})}_{N}\big( \sqrt{N} w;\sigma \big) 
        = \frac{1}{\pi} \Theta[1 - |w|^2] \ ,
        \label{eq:rho_nHIE_bulk_cor}
    \end{align}
    where $\Theta[x]$ is the standard Heaviside $\Theta$-function.
\end{cor}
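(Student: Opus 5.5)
The plan is to take the limit $N\to\infty$ term by term in Eq. \eqref{eq:rho_finite_N} with $z=\sqrt{N}w$, for fixed $|w|\neq 1$. I treat the endpoints separately: $\sigma=0$ is immediate from Eq. \eqref{eq:rho_N_GinUE} and Eq. \eqref{eq:rho_bulk_GinUE}, and $\sigma=1$ follows from Eq. \eqref{eq:rho_bulk_SymOE} (or directly from the closed form \eqref{eq:S_sym_no_sum}). So the main case is $0<\sigma<1$.

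\textbf{Step 1: asymptotics of $\mathcal{S}$ and $\mathcal{T}$.} The basic input is the standard fact that $\Gamma(M,N|w|^2)/\Gamma(M)\to\Theta[1-|w|^2]$ when $M=N-k$. For $|w|^2$ bounded away from $1$ the error is exponentially small, uniformly in $k\le \epsilon N$. The terms with $k>\epsilon N$ are bounded by $\sigma^{\epsilon N}$, since the ratio is at most $1$. Hence, up to exponentially small errors,
\begin{align}
\mathcal{S}_{N-1}(N|w|^2;\sigma) &\to \frac{\Theta[1-|w|^2]}{1-\sigma}, \\
\mathcal{T}_{N-1}(N|w|^2;\sigma) &= \Theta[1-|w|^2]\left(\frac{N-1}{1-\sigma}-\frac{\sigma}{(1-\sigma)^2}\right)+o(1),
\end{align}
and the same limits hold for $\mathcal{S}_{N-2}$. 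For $|w|>1$ all these quantities, and also $\Gamma(N,N|w|^2)/\Gamma(N)$, are $O(e^{-cN})$.

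\textbf{Step 2: the prefactor of the third term.} Write $s=(N+1)/2$ and $a=\frac{\sigma}{1+\sigma}N|w|^2$. The prefactor is $e^{a}a^{-s}\gamma(s,a)$, and $a/s\to 2\sigma|w|^2/(1+\sigma)$.
\begin{itemize}
\item If $a/s<1$, Laplace-type asymptotics give $e^{a}a^{-s}\gamma(s,a)\sim 1/(s-a)=\frac{2(1+\sigma)}{N(1+\sigma-2\sigma|w|^2)}+O(N^{-2})$. This applies throughout the disk $|w|<1$, because $2\sigma/(1+\sigma)<1$.
\item If $a/s>1$, the prefactor is at most $e^{a}a^{-s}\Gamma(s)$. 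By Stirling this is $e^{N f(|w|)}$ with an explicit rate $f(|w|)=\tfrac{\sigma|w|^2}{1+\sigma}-\tfrac12\log\tfrac{2\sigma|w|^2}{1+\sigma}-\tfrac12$.
\end{itemize}

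\textbf{Step 3: inside the disk.} Insert Step 1 into the bracket of Eq. \eqref{eq:rho_finite_N}. The bracket is $O(N)$, and its leading coefficient is
$$\frac{(1+\sigma)\sigma}{2(1-\sigma)}-\frac{1}{1-\sigma}+(1+\sigma)-\frac{\sigma^2|w|^2}{1+\sigma}.$$
The $|w|^2$ term uses the cancellation $\sigma\cdot\frac{\sigma}{1-\sigma}+(\sigma-2)\frac{\sigma}{1-\sigma}=-\sigma$. Note that $\mathcal{T}$ contributes at order $N$, so its $O(1)$ correction is not needed.

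Multiplying by the $O(1/N)$ prefactor from Step 2 gives an $O(1)$ third term. It must be added to the first two terms, which tend to $\frac{1}{\pi(1+\sigma)(1-\sigma)}-\frac{\sigma}{2\pi(1-\sigma)}$. I expect the rational functions of $\sigma$ and $|w|^2$ to collapse to exactly $1/\pi$, with the factor $(1+\sigma-2\sigma|w|^2)$ cancelling between numerator and denominator. Checking this algebraic identity is routine but is the crux of the interior computation.

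\textbf{Step 4: outside the disk.} Here the first two terms are exponentially small. The main obstacle is the third term when $|w|^2>(1+\sigma)/(2\sigma)$, where the prefactor grows exponentially. I would bound the bracket directly by the largest contributing tail, of size about $\Gamma(N-k,N|w|^2)/\Gamma(N-k)\approx e^{-N g(|w|)}$ with $g(x)=x^2-1-\log x^2$. I would then show $g(|w|)>f(|w|)$ for all $|w|>1$, i.e.
$$\frac{1+2\sigma}{1+\sigma}\,x^2-\frac12-\frac12\log\frac{(1+\sigma)x^2}{2\sigma}>0$$
(writing $x=|w|$). This holds at $x=1$ and is increasing in $x$. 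A cleaner route, which I would try first, is to use the integral representation in Eq. \eqref{eq:incmpl_Gamma} to combine $e^{a}\gamma(s,a)$ with the incomplete Gamma tails into a single integral whose exponent is manifestly negative.

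Finally, the circle $|w|=1$ is a null set and does not affect the statement.
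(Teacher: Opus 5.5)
Your proposal is correct in substance and reaches the same leading-order identity as the paper, but by a more direct route.

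\textbf{Comparison with the paper.} The paper first removes the sums, rewriting $\mathcal{S}_N,\mathcal{T}_N$ in closed form through $\Theta_N$ and $\mathcal{L}_N$ (Eqs.~\eqref{eq:S_N_integrated}, \eqref{eq:T_N_integrated}, \eqref{eq:rho_Theta_L}). It then states, deferring the details, that $\mathcal{L}_N$ and the $e^{-|z|^2}|z|^{2(N-1)}/\Gamma(N-1)$ term are subleading, and applies \eqref{eq:bulk_Gamma_limit} and \eqref{eq:DLMF_gamma}.

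You instead take the limit inside the sums by truncating at $k\le\epsilon N$. This yields $\mathcal{S}_{N-1},\mathcal{S}_{N-2}\to\Theta[1-|w|^2]/(1-\sigma)$ and $\mathcal{T}_{N-1}/N\to\Theta[1-|w|^2]/(1-\sigma)$ with negligible errors, so the $\mathcal{L}_N$ asymptotics become unnecessary. The closed forms are really needed only at the edge, where $\mathcal{L}_N$ contributes at leading order.

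Your bracket coefficient $\frac{\sigma}{2}(1-2\mu|w|^2)$ and your prefactor agree with the paper's. The identity you leave unchecked does close: the first two terms tend to $\frac{2+\sigma}{2\pi(1+\sigma)}$ and the third to $\frac{\sigma}{2\pi(1+\sigma)}$, summing to $\frac{1}{\pi}$.

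The paper additionally checks the $N$-dependent scaling $\sigma=1-\kappa N^{-1/2}$, which is beyond the statement as written. For fixed $\sigma=1$, citing \eqref{eq:rho_bulk_SymOE} is enough.

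Your Step~4 is a genuine improvement on the paper. The paper invokes \eqref{eq:DLMF_gamma} ``since $\mu<1/2$'', but the relevant ratio is $2\mu|w|^2$. The region $|w|^2>(1+\sigma)/(2\sigma)$, where the $\gamma$-prefactor grows exponentially, is therefore passed over in silence. You rightly flag that this region requires comparing exponential rates. Bounding every tail by the $k=0$ one is legitimate, because $\Gamma(M,X)/\Gamma(M)$ increases with $M$.

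\textbf{Slip in the Step~4 comparison.} With your (correct) $f$ and $g(x)=x^2-1-\log x^2$, one gets
\[
g(x)-f(x)=\frac{x^2}{1+\sigma}-\frac12-\frac12\log\frac{(1+\sigma)x^2}{2\sigma},
\]
so the coefficient is $1-\mu$, not $1+\mu=\frac{1+2\sigma}{1+\sigma}$.

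This quantity is not positive on all of $|w|>1$: it is about $-0.04$ at $x=1$, $\sigma=1/2$. Your version also fails at $x=1$ for small $\sigma$, e.g.\ $\sigma=0.1$.

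You only need positivity for $x^2\ge(1+\sigma)/(2\sigma)$, because below that threshold the first bullet of your Step~2 already gives an $O(1/N)$ prefactor. At the threshold the expression equals $\frac{1-\sigma}{2\sigma}>0$. Its derivative in $x^2$, namely $\frac{1}{1+\sigma}-\frac{1}{2x^2}$, is positive from there on. So the bound goes through once the range is restricted.

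\textbf{Typo in Step~3.} The cancellation should read $\sigma\cdot\frac{1}{1-\sigma}+(\sigma-2)\frac{\sigma}{1-\sigma}=-\sigma$; as written, it gives $-2\sigma$. The coefficient you then state is nevertheless correct.
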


It is well-known that the circular law holds for the GinUE in such a scaling \cite{BC2012}. It was also recently shown in \cite{AFS25} that the circular law also applies to leading order in the bulk of an ensemble of Gaussian complex symmetric matrices. Therefore, it is potentially unsurprising that the circular law applies for all $\sigma \in [0,1]$ in the nHIE, yet it must be proved mathematically, which we do in Section \ref{subsec:proof_bulk}.

On the other hand, in the edge scaling regime, distinct analytic differences have already been observed between the two universality classes for the density of eigenvalues, cf. Eqs. \eqref{eq:rho_edge_GinUE} and \eqref{eq:sym_edge_density}. A central goal of this paper was to interpolate these two results to address the transition between matrices in classes A and AI$^\dagger$ (i.e. the breaking of time reversal invariance). In order to do this, we must introduce two separate scaling regimes of the symmetry parameter $\sigma$. The first, and simplest, of these scaling regimes arises for a fixed $\sigma \in [0,1)$ as $N \to \infty$, this is referred to as the \textit{strong asymmetry} (SA) regime. In this regime, we find the following for the density of eigenvalues:

\begin{cor}\label{cor:rho_edge_SA}
    Let $z$ be an eigenvalue at the edge of the spectrum of the nHIE at strong asymmetry, such that $|z| = \sqrt{N} + \eta$ and $\sigma \in [0,1)$ is fixed. Employing such a scaling in Corollary \ref{cor:rho_finite_N} and taking the limit of $N \to \infty$, one finds that
    \begin{align}
        \rho^{(\textup{nHIE})}_{\textup{edge,SA}}(\eta) &\equiv \lim_{N \to \infty} \rho^{(\textup{nHIE})}_{N}\Big(|z| = \sqrt{N} + \eta ; \sigma \Big) = \frac{1}{2 \pi} \erfc\Big(\sqrt{2} \eta\Big) \ .
        \label{eq:rho_edge_SA}
    \end{align}
\end{cor}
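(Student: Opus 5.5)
We take the finite-$N$ density of Corollary \ref{cor:rho_finite_N}, set $|z|^2 = (\sqrt{N}+\eta)^2 = N + 2\sqrt{N}\eta + \eta^2$, and show that every term except the leading $\mathcal{S}_{N-1}$ contribution either vanishes or combines with it so that the limit is $\frac{1}{2\pi}\erfc(\sqrt2\eta)$.

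\textbf{Step 1: the building block.} By the central limit theorem for the Gamma distribution (or Laplace's method on the integral representation in Eq.~\eqref{eq:incmpl_Gamma}),
\[
\frac{\Gamma(N-k+1,|z|^2)}{\Gamma(N-k+1)} \to \frac12\erfc(\sqrt2\eta)
\]
for each fixed $k$. This is uniform in $k$ for $k = o(\sqrt N)$, because shifting $N$ by $k$ shifts the argument of $\erfc$ by $O(k/\sqrt N)$. For fixed $\sigma<1$ the weights $\sigma^k$ are geometrically summable, so dominated convergence gives
\[
\mathcal{S}_{N-1},\ \mathcal{S}_{N-2} \to \frac{1}{1-\sigma}\cdot\frac12\erfc(\sqrt2\eta).
\]
For $\mathcal{T}_{N-1}$ we write $N-1-k = (N-1)-k$, which gives
\[
(N-1)\mathcal{S}_{N-1} - (1-\sigma^2)\mathcal{T}_{N-1} = \sigma^2 (N-1)\mathcal{S}_{N-1} + (1-\sigma^2)\sum_k k\sigma^k(\cdots).
\]
The second piece is $O(1)$. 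We also need the first correction to $\Gamma(N-k+1,|z|^2)/\Gamma(N-k+1)$, of order $N^{-1/2}$, because the bracket contains terms of size $N$ that may cancel at order $\sqrt N$.

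\textbf{Step 2: the prefactor of the $\gamma$-term.} Set $a = \frac{\sigma}{1+\sigma}|z|^2$ and $s = \frac{N+1}{2}$. Then $a/s \to \frac{2\sigma}{1+\sigma} < 1$ for $\sigma<1$. In this regime
\[
\gamma(s,a) \sim \frac{a^s e^{-a}}{s-a},
\]
so the prefactor
\[
e^{a}\Big(\frac{1+\sigma}{\sigma|z|^2}\Big)^{s}\gamma(s,a) = e^a a^{-s}\gamma(s,a) \sim \frac{1}{s-a} = \frac{2(1+\sigma)}{(1-\sigma)N}\bigl(1+O(N^{-1/2})\bigr).
\]
This is $O(1/N)$, uniformly as $\eta$ varies in compacts.

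\textbf{Step 3: the bracket.} Its terms are of order $N$:
\[
\frac{(1+\sigma)(N+1)}{2}\sigma\mathcal{S}_{N-2} - \sigma^2 N\mathcal{S}_{N-1} + N\frac{\sigma}{1+\sigma}\bigl(\sigma\mathcal{S}_{N-1} + (\sigma-2)(\mathcal{S}_{N-2} - \tfrac12\erfc)\bigr) + O(\sqrt N).
\]
So after multiplying by the $O(1/N)$ prefactor the $\gamma$-term gives a finite contribution $c(\sigma)\,\erfc(\sqrt2\eta)$. We compute $c(\sigma)$ using the limits from Step 1 and add it to the $\mathcal{S}$-terms:
\[
\frac{1}{\pi(1+\sigma)}\mathcal{S}_{N-1} - \frac{\sigma}{2\pi}\mathcal{S}_{N-2} \to \frac{\erfc}{2\pi(1-\sigma)}\Big(\frac{1}{1+\sigma} - \frac{\sigma}{2}\Big).
\]
The claim is that the total equals $\frac{1}{2\pi}\erfc(\sqrt2\eta)$. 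The main obstacle is this exact bookkeeping. We must keep the $1/N$ correction in Step 2 and the $N^{-1/2}$ corrections in Step 1 whenever they multiply order-$N$ quantities. We must also check that no $\eta$-dependence other than $\erfc$ survives, for instance terms like $e^{-2\eta^2}$ arising from derivatives of $\Gamma(N,\cdot)$, which appear in the corrections. We expect these to cancel, since the bracket at $\sigma=0$ and the structure of the ensemble suggest that such terms only survive when $s-a$ is of order $\sqrt N$, that is, when $1-\sigma \sim N^{-1/2}$. The weak-asymmetry scaling is exactly that case. As a sanity check we will verify the final constant at $\sigma=0$, where the $\gamma$-term vanishes identically.
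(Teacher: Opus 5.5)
Your argument is sound, and it takes a genuinely different route from the paper.

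\textbf{How the routes differ.} The paper never works with the sums directly. It first resums $\mathcal{S}_N$ and $\mathcal{T}_N$ into closed forms built from $\Theta_N(|z|^2)=\Gamma(N,|z|^2)/\Gamma(N)$ and $\mathcal{L}_N(|z|^2;\sigma)=\sigma^N e^{\frac{1-\sigma}{\sigma}|z|^2}\Gamma(N,|z|^2/\sigma)/\Gamma(N)$. It then proves $\mathcal{L}_N=O(N^{-1/2})$ using the large-argument asymptotics of $\Gamma(a,z)$ together with Stirling. It discards the $e^{-|z|^2}|z|^{2(N-1)}/\Gamma(N-1)$ term as subleading, and finally combines the $\Theta_N,\Theta_{N-1}$ coefficients inside the bracket.

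You instead take termwise limits, dominated by $\sigma^k$. Since $\mathcal{S}_{N-1}=\frac{1}{1-\sigma}(\Theta_N-\mathcal{L}_N)$, your limit $\mathcal{S}_{N-1}\to\frac{E}{1-\sigma}$ (with $E\equiv\frac12\erfc(\sqrt2\eta)$) is equivalent to $\mathcal{L}_N\to0$. You obtain it without any asymptotics of $\Gamma(N,|z|^2/\sigma)$. Only the fixed-$k$ limit is needed; your uniformity claim for $k=o(\sqrt N)$ is unnecessary for dominated convergence. Your handling of $\mathcal{T}_{N-1}$ via $\sum_k k\sigma^k\le\sigma/(1-\sigma)^2$ is correct.

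What your route gives up is the weak-asymmetry regime. For $\sigma=1-\kappa N^{-1/2}$ the weights are no longer uniformly summable. Indices $k\sim\sqrt N$ dominate and shift the $\erfc$ argument by $O(1)$, so $\mathcal{S}_N$ grows like $\sqrt N$ and plain termwise limits fail. The paper's closed forms in $\mathcal{L}_N$ are what it reuses in that regime.

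\textbf{What is missing.} You leave the final evaluation as ``the main obstacle'', and your reasons for thinking it is delicate are mistaken. By your Step 2 the full prefactor of the bracket is
\[
\frac{1}{2\pi(1+\sigma)}\cdot\frac{2(1+\sigma)}{(1-\sigma)N}\bigl(1+O(N^{-1/2})\bigr)=\frac{1}{\pi(1-\sigma)N}\bigl(1+O(N^{-1/2})\bigr).
\]
So any $o(N)$ error in the bracket vanishes in the limit. You need neither the $N^{-1/2}$ corrections of Step 1 nor the correction in Step 2, and this would hold even if order-$N$ terms had cancelled.

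The $e^{-2\eta^2}$ terms also do not need to cancel. They enter the bracket only at order $\sqrt N$; in the paper's form, for example, $(1+\sigma)e^{-|z|^2}|z|^{2(N-1)}/\Gamma(N-1)\approx(1+\sigma)\sqrt{N/2\pi}\,e^{-2\eta^2}$. The $1/N$ prefactor kills them, and they survive only when $s-a\sim\sqrt N$, as you guessed.

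Inserting your Step 1 limits into your Step 3 expression gives
\[
\frac{1}{N}\,[\text{bracket}]\to\frac{\sigma(1+\sigma)}{2}\frac{E}{1-\sigma}-\frac{\sigma^2E}{1-\sigma}+\frac{\sigma}{1+\sigma}\Big(\frac{\sigma E}{1-\sigma}+(\sigma-2)\frac{\sigma E}{1-\sigma}\Big)=\frac{\sigma E}{2}-\frac{\sigma^2E}{1+\sigma}=\frac{\sigma(1-\sigma)}{2(1+\sigma)}E .
\]
This matches the paper's leading bracket $\frac{N\sigma(1-\sigma)}{2(1+\sigma)}\cdot\frac12\erfc(\sqrt2\eta)$. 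It is nonzero for $\sigma\in(0,1)$, so there is no order-$N$ cancellation anyway.

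Hence the $\gamma$-term tends to $\frac{\sigma E}{2\pi(1+\sigma)}$, i.e.\ $c(\sigma)=\frac{\sigma}{4\pi(1+\sigma)}$. The $\mathcal{S}$-terms tend to
\[
\frac{E}{\pi(1-\sigma)}\Big(\frac{1}{1+\sigma}-\frac{\sigma}{2}\Big)=\frac{(2+\sigma)E}{2\pi(1+\sigma)} .
\]
The sum is $E/\pi=\frac{1}{2\pi}\erfc(\sqrt2\eta)$. Adding this short computation completes your proof.
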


The result of Corollary \ref{cor:rho_edge_SA} demonstrates that, for complex matrices that are not asymptotically close to being symmetric, the edge density follows the same leading order behaviour as Eq. \eqref{eq:rho_edge_GinUE}, i.e. they fall into the Ginibre universality class. This Corollary is proved in Section \ref{subsec:proof_edge_SA} and has been tested numerically in FIG. \ref{fig:edge_densities} for $\sigma = 0.5$ and $N=2500$, with very strong agreement.

The findings of the previous corollary motivates the need for the second edge scaling regime, which arises when the symmetry parameter becomes asymptotically close to unity, such that $\sigma = 1 - \kappa N^{-1/2}$ (with non-negative $\kappa$). In such a regime, matrices are symmetric to leading order with a sub-leading anti-symmetric correction, such that
\begin{equation}
    X_N = \frac{G_N + G_N^T}{\sqrt{2}} + \frac{1}{N^{1/4}} \sqrt{\frac{\kappa}{2}} \frac{G_N - G_N^T}{\sqrt{2}} + O\left(N^{-3/2}\right) \ ,
\end{equation}
which follows from the fact that, in this scaling, $\tau = \sqrt{ 2 \kappa/\sqrt{N} - \kappa^2/N }$. Note that here and throughout, the $O(m^a)$ notation indicates terms are at most proportional to $m^a$ as $m$ becomes large. In what follows, we will refer to this scaling of $\sigma$ as the \textit{weak asymmetry} (WA) regime, which leads directly to the following Corollary:

\begin{cor}\label{cor:rho_edge_WA}
    Consider an eigenvalue $z$ in the edge regime of the spectrum of nHIE matrices at weak asymmetry, where $|z| = \sqrt{N} + \eta$ and $\sigma = 1 - \kappa N^{-1/2}$ with $\kappa > 0$. The use of such a scaling in Corollary \ref{cor:rho_finite_N}, for $N \to \infty$, leads to
    \begin{align}
    \rho^{(\textup{nHIE})}_{\textup{edge,WA}}(\eta;\kappa) &\equiv \lim_{N \to \infty} \rho^{(\textup{nHIE})}_{N}\left(|z| = \sqrt{N} + \eta ; \sigma = 1 - \frac{\kappa}{\sqrt{N}} \right) \nonumber\\
    &= \frac{1}{\pi}\left[ \frac{3}{8} \erfc(\sqrt{2} \eta) - \frac{1}{8} e^{2 \eta \kappa + \frac{\kappa^2}{2}} \erfc \left( \frac{2 \eta + \kappa}{\sqrt{2}} \right) \right] + \frac{1}{4 \sqrt{\pi}} \, e^{\left( \eta - \frac{\kappa}{4} \right)^2} \, \bigg( 1+ \erf\left( \eta - \frac{\kappa}{4} \right) \bigg) \nonumber\\
    & \times\left[ \sqrt{\frac{2}{\pi}} \, e^{-2 \eta^2} - \frac{1}{2} \left( \eta - \frac{\kappa}{4} \right) \erfc\left( \sqrt{2} \eta \right)- \frac{4 \eta + 3\kappa}{8} \, e^{2\eta\kappa + \frac{\kappa^2}{2}} \erfc\left( \frac{2 \eta + \kappa}{\sqrt{2}} \right) \right] \ .
    \label{eq:rho_WA_edge}
\end{align}
\end{cor}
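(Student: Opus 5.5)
We take Corollary \ref{cor:rho_finite_N} as the starting point and substitute $|z| = \sqrt{N}+\eta$ and $\sigma = 1-\kappa N^{-1/2}$. Each building block is then expanded separately as $N\to\infty$, and the pieces are recombined at the end. Three kinds of object appear: the sums $\mathcal{S}_{N-1}$, $\mathcal{S}_{N-2}$, $\mathcal{T}_{N-1}$; the isolated ratio $\Gamma(N,|z|^2)/\Gamma(N)$; and the prefactor $e^{\frac{\sigma}{1+\sigma}|z|^2}\big(\frac{1+\sigma}{\sigma|z|^2}\big)^{\frac{N+1}{2}}\gamma\big(\frac{N+1}{2},\frac{\sigma}{1+\sigma}|z|^2\big)$.

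\textbf{Step 1: the prefactor.} We use the substitution $t = \frac{\sigma |z|^2}{1+\sigma}\, s$ to write the prefactor as $\int_0^1 \diff s\, s^{\frac{N-1}{2}} e^{\frac{\sigma}{1+\sigma}|z|^2(1-s)}$. Laplace's method then applies near $s=1$. With $\frac{\sigma}{1+\sigma}|z|^2 = \frac{N}{2} + \sqrt{N}(\eta - \frac{\kappa}{4}) + O(1)$ and $s = 1 - u/\sqrt{N}$, the exponent becomes $-\frac{N-1}{2}\log s$ plus the linear term. This gives $N^{-1/2}\int_0^\infty \diff u\, e^{-(\eta-\kappa/4)u - u^2/4}$ times constants, which equals $\sqrt{\pi/N}\, e^{(\eta-\kappa/4)^2}\big(1+\erf(\eta-\frac{\kappa}{4})\big)$ to leading order. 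We must keep the $O(1)$ corrections in the exponent, because they multiply brackets that are large.

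\textbf{Step 2: the sums.} Using the integral representation \eqref{eq:incmpl_Gamma} and summing the geometric series in $\sigma$, we get
\begin{equation*}
\mathcal{S}_{N-1} = \sum_{k=0}^{N-1}\sigma^k\, \frac{\Gamma(N-k,|z|^2)}{\Gamma(N-k)} .
\end{equation*}
Each ratio $\Gamma(N-k,|z|^2)/\Gamma(N-k)$ with $k = \sqrt{N}\,x$ behaves like $\frac12\erfc\big(\sqrt2\eta + x/\sqrt2\big)$, by the central limit for the Poisson/Gamma tail. Meanwhile $\sigma^k \to e^{-\kappa x}$. The sum is therefore a Riemann sum:
\begin{equation*}
\mathcal{S}_{N-1} \approx \sqrt{N}\int_0^\infty \diff x\, e^{-\kappa x}\,\tfrac12 \erfc\!\left(\sqrt2\eta + \tfrac{x}{\sqrt2}\right).
\end{equation*}
Integrating by parts produces the combination $\erfc(\sqrt2\eta)$ and $e^{2\eta\kappa+\kappa^2/2}\erfc\big(\frac{2\eta+\kappa}{\sqrt2}\big)$ that appears in \eqref{eq:rho_WA_edge}. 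We treat $\mathcal{T}_{N-1}$ the same way, writing $N-1-k = (N-1) - \sqrt{N}x$. In the combination $(N-1)\mathcal{S}_{N-1} - (1-\sigma^2)\mathcal{T}_{N-1}$, the leading parts partially cancel, since $1-\sigma^2 \approx 2\kappa/\sqrt{N}$. The differences $\mathcal{S}_{N-1}-\sigma\mathcal{S}_{N-2}$ and $\mathcal{S}_{N-2} - \Gamma(N,|z|^2)/\Gamma(N)$ telescope exactly. These exact identities should be exploited before any expansion.

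\textbf{Step 3: recombination.} The bracket in \eqref{eq:rho_finite_N} contains terms of order $N^{3/2}$ and $N$, while the prefactor supplies $N^{-1/2}$ together with $(1+\sigma)^{-1}$. We must therefore show that the $O(N^{3/2})$ and $O(N)$ contributions cancel, so that an $O(\sqrt N)$ remainder survives. That remainder yields the second line of \eqref{eq:rho_WA_edge}, namely the terms $\sqrt{2/\pi}\,e^{-2\eta^2}$, $(\eta-\kappa/4)\erfc(\sqrt2\eta)$ and $\frac{4\eta+3\kappa}{8}e^{2\eta\kappa+\kappa^2/2}\erfc(\cdot)$. The first two terms of \eqref{eq:rho_finite_N} are $O(1)$ directly: $\frac{1}{2\pi}\mathcal{S}_{N-1} - \frac{1}{2\pi}\mathcal{S}_{N-2}$ and its corrections give the $\frac{3}{8}$ and $-\frac18$ pieces. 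As a consistency check, letting $\kappa\to 0$ must reproduce \eqref{eq:sym_edge_density}.

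\textbf{Main obstacle.} The hardest step is this cancellation. It requires expanding the Gamma-ratio asymptotics, the Riemann sums (via Euler--Maclaurin boundary corrections), and the Laplace integral to two subleading orders, uniformly in $k$. To keep this manageable, we would first convert every sum into a single integral using $\Gamma(M,a)/\Gamma(M) = \int \ldots$ and exact geometric summation, and only then expand. Dominated convergence, with Gaussian tail bounds in $x$ and $u$, justifies exchanging limits and integrals.
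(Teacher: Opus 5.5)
Your outline has the same architecture as the paper's proof. The first two terms of Corollary~\ref{cor:rho_finite_N} give the $\frac38,-\frac18$ piece: via $\mathcal S_{N-1}-\sigma\mathcal S_{N-2}=\Theta_N$ and $(1-\sigma)\mathcal S_{N-1}=\Theta_N-\mathcal L_N$ they equal exactly $\frac{1}{2\pi(1+\sigma)}(\Theta_N-\mathcal L_N)+\frac{1}{2\pi}\Theta_N$. The $\gamma$-prefactor is $O(N^{-1/2})$ by Laplace's method. The bracket must then be shown to be $O(\sqrt N)$.

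That last step is the gap. The cancellation of the $N^{3/2}$ and $N$ orders in the bracket, and the evaluation of its $\sqrt N$ coefficient, produce the entire second line of \eqref{eq:rho_WA_edge}. You only assert them, and the bookkeeping you sketch is off in several places:
\begin{itemize}
\item $(1-\sigma^2)\mathcal T_{N-1}$ is only $O(N)$, since $\mathcal T_{N-1}\approx N\mathcal S_{N-1}$. So it does not ``partially cancel'' $(N-1)\mathcal S_{N-1}=O(N^{3/2})$; the $N^{3/2}$ cancellation happens among the $\mathcal S$-terms alone.
\item Because that cancellation is internal to the bracket, the prefactor is needed only to leading order. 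The $O(1)$ corrections in its exponent are irrelevant.
\item The Laplace integrand should be $e^{+(\eta-\kappa/4)u-u^2/4}$. With your sign you would get $\erfc(\eta-\frac{\kappa}{4})$ instead of $1+\mathrm{erf}(\eta-\frac{\kappa}{4})$.
\item Pushing Riemann sums with Euler--Maclaurin corrections to relative order $1/N$, uniformly in $k$, is far more than is needed, and you do not carry it out.
\end{itemize}

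The idea that closes the gap is the one you mention only as a fallback, and it is what the paper does. Sum the geometric series exactly to get $(1-\sigma)\mathcal S_{N-1}=\Theta_N-\mathcal L_N$, and the analogous closed form for $\mathcal T_{N-1}$ in terms of $\Theta_{N-1}$, $\mathcal L_{N-1}$ and $e^{-|z|^2}|z|^{2(N-1)}/\Gamma(N-1)$. Then regroup into $\Theta_N$, $\mathcal L_N$ and the differences $\Delta\Theta=\Theta_N-\Theta_{N-1}=e^{-|z|^2}|z|^{2(N-1)}/\Gamma(N)$ and $\Delta\mathcal L=\mathcal L_N-\mathcal L_{N-1}$. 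Both are explicit through $m\Gamma(m,x)=\Gamma(m+1,x)-e^{-x}x^m$.

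The paper keeps coefficients of size $N^{3/2}/\kappa$. It therefore needs two-term expansions of $\Theta_N,\mathcal L_N,\Delta\Theta,\Delta\mathcal L$, and it verifies the order-$N$ cancellation through $\frac{N^{3/2}}{2\kappa}(\Delta\mathcal L^{(1)}-\Delta\Theta^{(1)})+\frac N2\mathcal L_N^{(1)}=0$.

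You can do less work. If you also eliminate $\mathcal S_{N-2}$ first with your identity $\sigma\mathcal S_{N-2}=\mathcal S_{N-1}-\Theta_N$, no $N^{3/2}$ terms arise at all, and the bracket becomes
\[
2N\,\Delta\mathcal L-\sqrt N\Big(\eta-\frac{\kappa}{4}\Big)\Theta_N-\sqrt N\Big(\eta-\frac{5\kappa}{4}\Big)\mathcal L_N+O(1).
\]
Then only leading asymptotics are needed:
\begin{itemize}
\item $\Theta_N\to\frac12\erfc(\sqrt2\eta)$;
\item $\mathcal L_N\to\frac12e^{2\eta\kappa+\kappa^2/2}\erfc\big(\frac{2\eta+\kappa}{\sqrt2}\big)$;
\item $\sqrt N\,\Delta\mathcal L\to\frac{1}{\sqrt{2\pi}}e^{-2\eta^2}-\frac{\kappa}{2}e^{2\eta\kappa+\kappa^2/2}\erfc\big(\frac{2\eta+\kappa}{\sqrt2}\big)$.
\end{itemize}
These reproduce exactly the last bracket of \eqref{eq:rho_WA_edge}.
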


One can easily see that the above result reproduces the edge density in the class AI$^\dagger$, Eq. \eqref{eq:sym_edge_density}, when we set $\kappa = 0$. Furthermore, it can also be seen that when $\kappa\to \infty$, one obtains Eq. \eqref{eq:rho_edge_SA} - thus Eq. \eqref{eq:rho_WA_edge} correctly produces the two endpoints of the nHIE, without the need for an intermediary scale different from $\sigma = 1 - \kappa N^{-1/2}$. This result of Corollary \ref{cor:rho_edge_WA} is proved in detail in Section \ref{subsec:proof_edge_WA} and, as part of this proof, we also show the reduction to Eq. \eqref{eq:rho_edge_SA} when $\kappa \to \infty$. Additionally, we test this equation numerically for Gaussian random matrices in FIG \ref{fig:edge_densities}, observing a strong agreement. 

Furthermore, as seen in a range of other notable works, it often appears to be the case that a leading order result that can be rigorously derived for Gaussian matrices, holds universally for much larger classes of random matrices. See the following works for examples where Gaussian results are shown to be universal through numerical experiments and mathematical proofs \cite{PS1997,BD,CES21,Osman25}. To this end, we conjecture that Eqs. \eqref{eq:rho_edge_SA} and \eqref{eq:rho_WA_edge} hold universally for strongly and weakly asymmetric non-Gaussian matrices $X_N$ generated according to Eq. \eqref{eq:X_def}, providing that such matrices satisfy the correlation structure in Eq. \eqref{eq:correlation_structure}.

To support this conjecture, we numerically generate matrices using two different non-Gaussian entry distributions for the matrices $G_N$ in Eq. \eqref{eq:X_def}. Firstly, we consider complex entries with uniformly distributed real and imaginary parts on $[- \sqrt{3/2}, \sqrt{3/2}]$. Secondly, we use two independent \textit{Bernoulli} random variables $x_{ij}$ and $y_{ij}$ (which take values $\pm 1$ with equal probability) and construct the entries of $G_N$ as $G_{ij} = (x_{ij} + i y_{ij})/\sqrt{2}$. The results, which seem to strongly support our conjecture, of this numerical experimentation are shown in Figure \ref{fig:edge_densities}. \\

\begin{figure}
    \centering
    \includegraphics[width=0.48\linewidth]{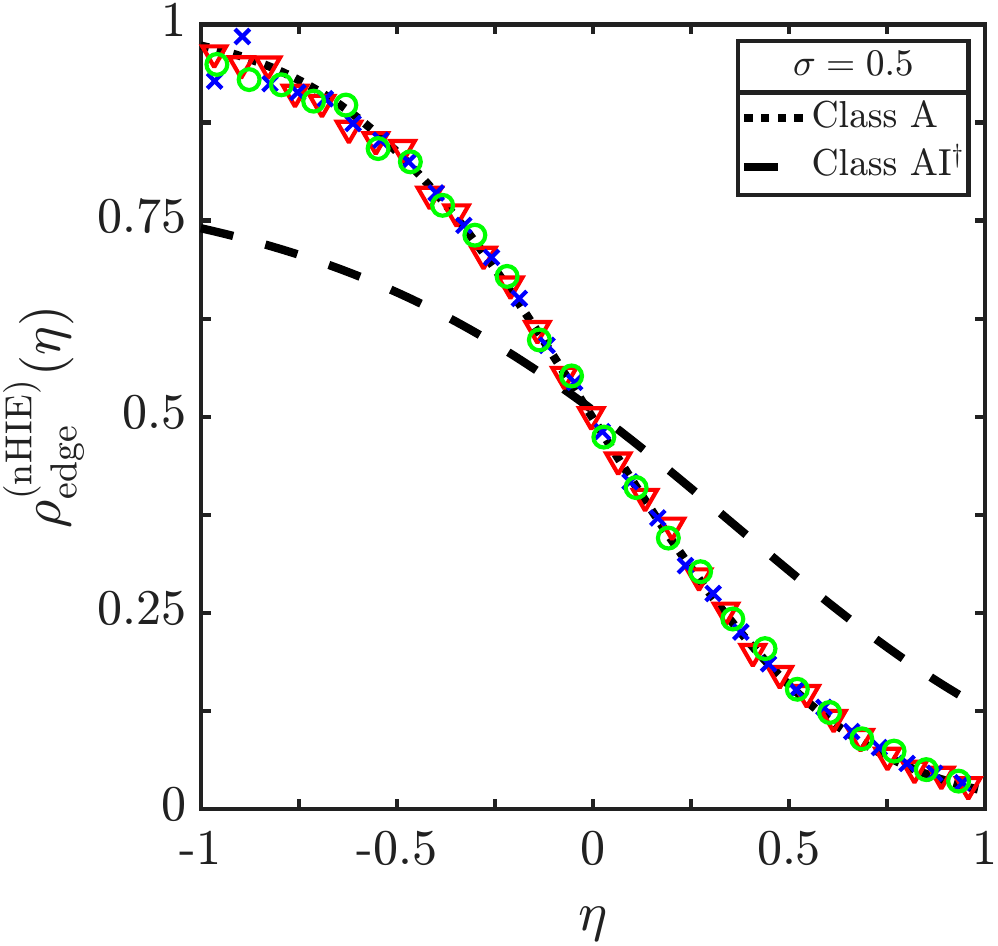}
    \hspace{0.5cm}
    \includegraphics[width=0.48\linewidth]{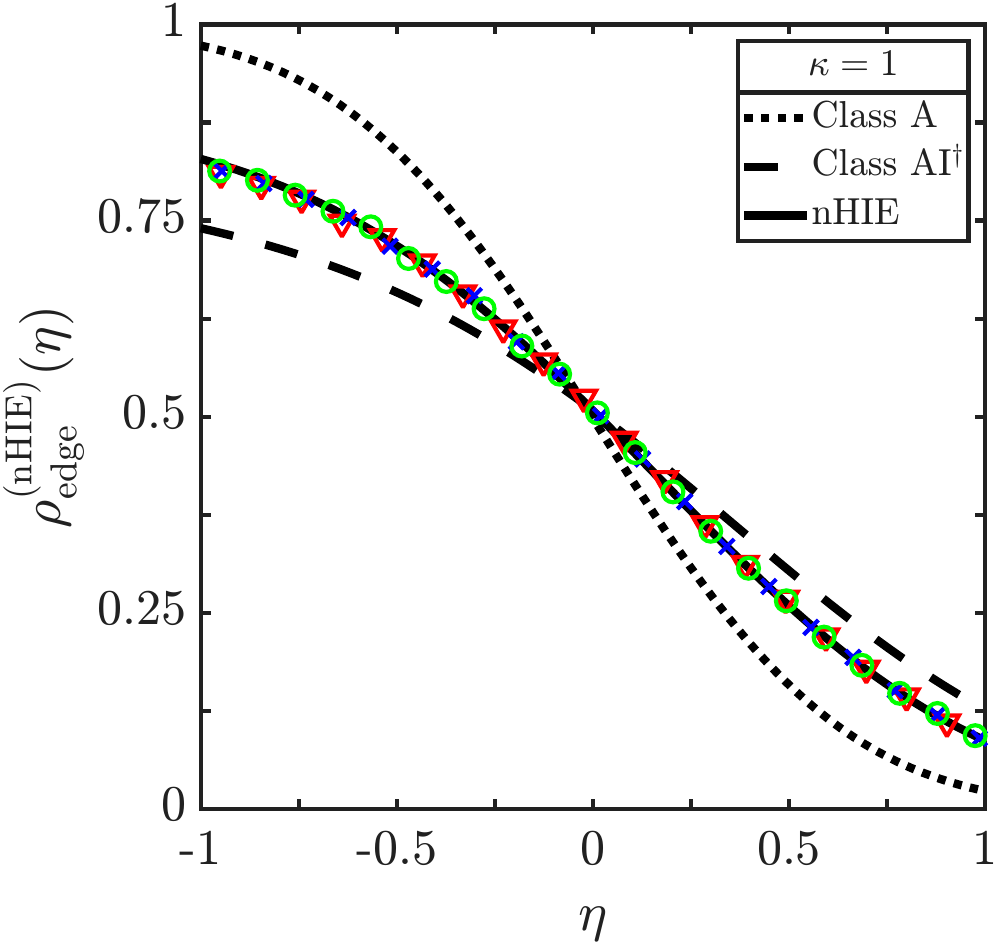}
    \caption{Density of eigenvalues at the edge of the nHIE in the strong (left) and weak (right) asymmetry regimes, with fixed parameters $\sigma=0.5$ and $\kappa=1$ respectively. The dotted (dashed) line corresponds to the density of eigenvalues in the complex Ginibre (symmetric) ensemble, whereas the solid line in the right hand plot is the interpolating edge density from Eq. \eqref{eq:rho_WA_edge}. Note that each density is normalised so as to have its entire mass on the scale shown of $\eta \in [-1,1]$. The coloured markers represent results from numerical simulation of matrices $X_N$ in Eq. \eqref{eq:X_def}, where the entries of $G_N$ are drawn from: Gaussian (red triangles), Bernoulli (blue crosses) and uniform (green circles) distributions. Numerical data points are obtained from approximately $10^5$ samples of $N=2500$ (left) and $N=1000$ (right) matrices. }
    \label{fig:edge_densities}
\end{figure}

\section{Proof of Finite $N$ Results}
\label{sec:finite_proofs}

In this Section, we shall prove our finite matrix size results for the JPDF of eigenvalue and eigenvector and the density of complex eigenvalues, contained in Theorem \ref{thm:JPDF_z_v} and Corollary \ref{cor:rho_finite_N} respectively. To derive this JPDF, we appeal to the \textit{Kac-Rice} method for non-Hermitian matrices, recently pioneered by Fyodorov in \cite{YanKacRice}. Specifically, we utilise \cite[Eq. (3.1)]{YanKacRice}, which shows that one can calculate the JPDF of eigenvalue and normalised eigenvector at finite $N$, via 
\begin{align}
    \mathcal{P}^{(Y)}_{N}(z, \bm v) &= \frac{1}{\pi} \lim_{|w-z|\to 0} \frac{\partial^2}{\partial w \partial \overline{w} } \lim_{\epsilon \to 0} \int_{\mathbb{C}^N} \frac{\diff \bm k \, \diff \bm k^\dagger}{(2 \pi)^{2N}} \, e^{ - \frac{\epsilon \bm k^\dagger \bm k}{2} - \frac{i}{2} [z \bm k^\dagger \bm v + \overline{z} \bm v^\dagger \bm k]} \nonumber\\
    & \times \int D(\bm \Psi, \bm \Phi) \, e^{-\frac{i}{2}\left( w \bm \psi_1^T \bm \varphi_2 + \overline{w} \bm \psi_2^T \bm \varphi_1 \right)} \left \langle e^{\frac{i}{2} \text{Tr}(Y A + Y^{\dagger}B)} \right \rangle_{Y} \ ,
    \label{eq:initial_setup}
\end{align}
where $A \equiv \bm v \bm k^\dagger - \bm \varphi_2 \bm \psi_1^T$ and $B \equiv \bm k \bm v^\dagger - \bm \varphi_1 \bm \psi_2^T$ for a general random matrix ensemble $Y$. Furthermore, the vectors $\bm \varphi_1$, $\bm \varphi_2$, $\bm \psi_1$ and $\bm \psi_2$ contain $N$ Grassmann valued numbers each, indexed as $\varphi_{1i}$ etc., and $D(\bm \Psi, \bm \Phi) \equiv \diff \bm \psi_1 \, \diff \bm \varphi_1 \, \diff \bm \psi_2 \, \diff \bm \varphi_2 $. Grassmann numbers are anti-commuting variables, i.e. $\phi \chi = - \chi \phi$, which are utilised extensively throughout RMT and related mathematical physics problems. For more information on their mathematical definitions and uses in calculations one can consult \cite{Berezin87,Verbaarschot04}.

Note as well that we constantly enforce the normalisation of the eigenvectors through the use of $\delta(\bm v^\dagger \bm v - 1)$, however for the sake of brevity we will suppress this notation. In what follows, we will use the notation $\mathcal{P}^{(\text{nHIE})}_{N}(z, \bm v; \sigma)$ to denote the JPDF in the nHIE, which corresponds to evaluating Eq. \eqref{eq:initial_setup} with $Y=X_N$, as defined in Eq. \eqref{eq:X_def}. Considering Eq. \eqref{eq:initial_setup}, we see that the first object that one must calculate in this endeavour is the expectation of an exponentiated trace. This is done rather easily through the following proposition, which is proved in Appendix  \ref{app:Ensemble_average_identity}.

\begin{prop}
    \label{prop:ensemble_ave}
    For the ensemble of random matrices $X_N$ denoting the nHIE, defined in Eq. \eqref{eq:X_def}, and for fixed $N \times N$ matrices $A$ and $B$, we have that
    \begin{align}
        \left \langle \exp\Big[\Tr(X_N A + X_N^{\dagger}B)\Big] \right \rangle_{X_N} &= \exp\Big[ \Tr(AB) + \sigma \Tr(A^T B) \Big]  \ .  
        \label{eq:ave_expo_trace_nHIE}
    \end{align}
\end{prop}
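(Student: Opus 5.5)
Since $X_N$ is a linear function of the Ginibre matrix $G_N$, the exponent $\Tr(X_N A + X_N^\dagger B)$ is linear in the entries of $G_N$ and $\overline{G_N}$. The expectation is therefore that of the exponential of a linear form in complex Gaussians, and it equals $\exp$ of the appropriate second moment. The plan is to rewrite the exponent as $\Tr(G_N \tilde A + G_N^\dagger \tilde B)$ for suitable matrices $\tilde A,\tilde B$, apply the standard GinUE identity, and then simplify.

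\textbf{Step 1 (reduction to Ginibre).} Set $a\equiv\sqrt{(1+\tau)/2}$ and $b\equiv\sqrt{(1-\tau)/2}$, so that $X_N = aG_N + bG_N^T$ and $X_N^\dagger = aG_N^\dagger + b\overline{G_N}$. Using $\Tr(G^T A)=\Tr(G A^T)$ and $\Tr(\overline{G}B)=\Tr(G^\dagger B^T)$, we obtain
\begin{equation*}
\Tr(X_N A + X_N^\dagger B) = \Tr\big(G_N(aA+bA^T)\big) + \Tr\big(G_N^\dagger(aB+bB^T)\big).
\end{equation*}
Define $\tilde A\equiv aA+bA^T$ and $\tilde B\equiv aB+bB^T$.

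\textbf{Step 2 (Gaussian integral).} For the GinUE measure in Eq. \eqref{eq:jpdf_GinUE} the entries are independent with $\langle |G_{ij}|^2\rangle=1$ and $\langle G_{ij}^2\rangle=0$. Hence
\begin{equation*}
\langle e^{\Tr(G\tilde A+G^\dagger\tilde B)}\rangle = e^{\Tr(\tilde A\tilde B)}.
\end{equation*}
This holds entrywise: the exponent is $\sum_{ij}(G_{ij}\tilde A_{ji}+\overline{G_{ij}}\tilde B_{ji})$, and $\langle e^{\alpha g+\beta\bar g}\rangle=e^{\alpha\beta}$ for a standard complex Gaussian $g$. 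The identity is valid for arbitrary complex $\alpha,\beta$, and also for Grassmann-even entries, by analytic continuation or by completing the square.

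\textbf{Step 3 (algebra).} Expand
\begin{equation*}
\Tr(\tilde A\tilde B)=a^2\Tr(AB)+b^2\Tr(A^TB^T)+ab\big(\Tr(AB^T)+\Tr(A^TB)\big).
\end{equation*}
Now $\Tr(A^TB^T)=\Tr(BA)=\Tr(AB)$ and $\Tr(AB^T)=\Tr(A^TB)$. Together with $a^2+b^2=1$ and $2ab=\sqrt{1-\tau^2}=\sigma$, this gives $\Tr(AB)+\sigma\Tr(A^TB)$, which is Eq. \eqref{eq:ave_expo_trace_nHIE}.

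\textbf{Main obstacle.} The only delicate point is justifying Step 2 when $A$ and $B$ are not complex conjugates of each other. In the application they even contain Grassmann bilinears. For complex $A,B$ the integral converges absolutely, because the Gaussian dominates any linear exponent, and both sides are entire in the entries of $A$ and $B$. We would therefore verify the identity for $\tilde B=\overline{\tilde A}$-type real exponents and extend it by analyticity. For even Grassmann entries, the identity follows termwise by the same completing-the-square computation, since the relevant bilinears commute.
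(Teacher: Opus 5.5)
Your proof is correct and follows essentially the same route as the paper. You rewrite the exponent linearly in $G_N$, apply the scalar identity $\langle e^{\alpha g+\beta\bar g}\rangle=e^{\alpha\beta}$, and simplify using $a^2+b^2=1$ and $2ab=\sigma$; the only differences are that you package the algebra through $\tilde A,\tilde B$ rather than index sums, and you add convergence/analyticity remarks the paper omits. One small slip: the entrywise exponent should read $\sum_{ij}\big(G_{ij}\tilde A_{ji}+\overline{G_{ij}}\tilde B_{ij}\big)$, not $\overline{G_{ij}}\tilde B_{ji}$, and it is this form that actually produces $e^{\Tr(\tilde A\tilde B)}$.
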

The form of the resulting exponential in Proposition \ref{prop:ensemble_ave}, as well as the necessary forms of our $A$ and $B$, means that we have to calculate the following traces
\begin{align}
    \Tr(AB) &= \bm k^\dagger \bm k - (\bm v^\dagger \bm \varphi_2) (\bm \psi_1^T \bm k) - (\bm k^\dagger \bm \varphi_1) (\bm \psi_2^T \bm v) - (\bm \psi_1^T \bm \varphi_1) (\bm \psi_2^T \bm \varphi_2) \\
    \Tr(A^T B) &= (\bm k^T \bm v ) \overline{(\bm k^T \bm v )} + (\bm k^\dagger \bm \psi_2) (\bm \varphi_1^T \bm v) - (\bm k^T \bm \varphi_2) (\bm v^\dagger \bm \psi_1) + (\bm \varphi_1^T \bm \varphi_2) (\bm \psi_1^T \bm \psi_2) \ ,
\end{align}
where in the first equation we have utilised the fact that the eigenvector is normalised, i.e. $\bm v^\dagger \bm v = 1$. Upon performing the ensemble average and evaluating above traces, we can now write that:
\begingroup\allowdisplaybreaks
\begin{align}
    \mathcal{P}^{(\text{nHIE})}_{N}(z, \bm v; \sigma) &= \frac{1}{\pi} \lim_{|w-z|\to 0,\epsilon \to 0} \frac{\partial^2}{\partial w \partial \overline{w} } \int D(\bm \Psi, \bm \Phi) \, \exp[-\frac{i\left( w \bm \psi_1^T \bm \varphi_2 + \overline{w} \bm \psi_2^T \bm \varphi_1 \right)}{2} + \frac{(\bm \psi_1^T \bm \varphi_1) (\bm \psi_2^T \bm \varphi_2) - \sigma(\bm \varphi_1^T \bm \varphi_2) (\bm \psi_1^T \bm \psi_2)}{4}  ] \nonumber\\
    & \times \int_{\mathbb{C}^N} \frac{\diff \bm k \, \diff \bm k^\dagger}{(2 \pi)^{2N}}  \exp\Bigg[- \left( \frac{1}{2} + \epsilon \right) \frac{ \bm k^\dagger \bm k}{2} - \frac{i}{2} \big[ z \bm k^\dagger \bm v + \overline{z} \bm v^\dagger \bm k \big] - \frac{\sigma}{4} (\bm k^T \bm v ) \overline{(\bm k^T \bm v )} + \frac{(\bm v^\dagger \bm \varphi_2) (\bm \psi_1^T \bm k)}{4} \nonumber \\
    & + \frac{(\bm k^\dagger \bm \varphi_1) (\bm \psi_2^T \bm v)}{4}  - \frac{\sigma}{4} \Big( (\bm k^\dagger \bm \psi_2) (\bm \varphi_1^T \bm v) - (\bm k^T \bm \varphi_2) (\bm v^\dagger \bm \psi_1) \Big) \Bigg] \ .
\end{align}
\endgroup
We now proceed to identify the overall coefficients of the vectors $\bm k$ and $\bm k^\dagger$, so that we can write the JPDF in the form 
\begingroup\allowdisplaybreaks
\begin{align}
    \mathcal{P}^{(\text{nHIE})}_{N}(z, \bm v; \sigma) &= \frac{1}{\pi} \lim_{|w-z|\to 0,\epsilon \to 0} \frac{\partial^2}{\partial w \partial \overline{w} } \int D(\bm \Psi, \bm \Phi) \, \exp[-\frac{i\left( w \bm \psi_1^T \bm \varphi_2 + \overline{w} \bm \psi_2^T \bm \varphi_1 \right)}{2} + \frac{(\bm \psi_1^T \bm \varphi_1) (\bm \psi_2^T \bm \varphi_2) - \sigma(\bm \varphi_1^T \bm \varphi_2) (\bm \psi_1^T \bm \psi_2)}{4} ] \nonumber\\
    & \times \int_{\mathbb{C}^N} \frac{\diff \bm k \, \diff \bm k^\dagger}{(2 \pi)^{2N}} \, \exp\Bigg[- \left( \frac{1}{2} + \epsilon \right) \frac{ \bm k^\dagger \bm k}{2}  - \frac{\sigma}{4} (\bm k^T \bm v ) \overline{(\bm k^T \bm v )} + \frac{1}{4} \left( \bm k^\dagger \bm b_1 + \bm b_2^T \bm k \right) \Bigg] \ ,
    \label{eq:JPD_pre_k_ints}
\end{align}
\endgroup
where we have introduced
\begin{align}
    \bm b_1 \equiv - 2i z \bm v + \bm \varphi_1 (\bm \psi_2^T \bm v) -  \sigma \bm \psi_2 ( \bm \varphi_1^T \bm v ) \hspace{1cm} \text{and} \hspace{1cm} \bm b_2^T \equiv - 2i \overline{z} \bm v^\dagger + (\bm v^\dagger \bm \varphi_2) \bm \psi_1^T - \sigma (\bm v^\dagger \bm \psi_1) \bm \varphi_2^T \ .
\end{align}
One can now make the change of integration measure $\diff \bm k \diff \bm k^\dagger = 2^N \diff^2 \bm k = 2^N \prod_i \diff \Re(k_i) \diff \Im(k_i)$ and rewrite $(\bm k^T \bm v ) \overline{(\bm k^T \bm v )} = \bm k^\dagger \overline{\bm v} \bm v^T \bm k$, so as to see that
\begin{align}
    \int_{\mathbb{C}^N} \frac{\diff \bm k \, \diff \bm k^\dagger}{(2 \pi)^{2N}} \, e^{- \left( \frac{1}{2} + \epsilon \right) \frac{ \bm k^\dagger \bm k}{2}  - \frac{\sigma}{4} (\bm k^T \bm v ) \overline{(\bm k^T \bm v )} + \frac{1}{4} \left( \bm k^\dagger \bm b_1 + \bm b_2^T \bm k \right) } = \frac{1}{(2 \pi)^N} \int_{\mathbb{C}^N} \frac{\diff^2 \bm k }{\pi^{N}} \, e^{- \bm  k^\dagger C_{\sigma, \epsilon}  \bm k + \frac{1}{4} \left( \bm k^\dagger \bm b_1 + \bm b_2^T \bm k \right) }
    \label{eq:the_Gaussian_int}
\end{align}
where we have introduced
\begin{equation}
    C_{\sigma,\epsilon} \equiv \frac{1}{4} \1_N + \frac{\epsilon}{2} \1_N +  \frac{\sigma}{4} \overline{\bm v} \bm v^T \ .
    \label{eq:C_rho_def}
\end{equation}
Simple analysis yields that $C_{\sigma,\epsilon}$ is positive definite for all $\epsilon \geq 0$ and so we can trivially take the limit of $\epsilon \to 0$. One can also apply Sylvester's determinant theorem, i.e.
\begin{equation}
    \det( \1_N + a \bm v \bm v^\dagger) = 1+a
    \label{eq:Sylvester_det_theorem}
\end{equation}
for $a \in \mathbb{C}$ and $\bm v^\dagger \bm v = 1$, as well as the Sherman-Morrison formula \cite{Sherman1950} to see that
\begin{equation}
    \det(C_{\sigma,0}) = \frac{1}{4^N} (1 + \sigma) \hspace{1cm} \text{and} \hspace{1cm} C_{\sigma,0}^{-1} = 4 \, \1_N - 4 \frac{\sigma}{1 + \sigma} \overline{\bm v} \bm v^T \ . 
    \label{eq:C_det_inv}
\end{equation}
Therefore, now employing the well-known Gaussian identity of  
\begin{equation}
    \frac{1}{\pi^N} \int_{\mathbb{C}^N} \diff^2 \bm x \, e^{-\bm x^\dagger A \bm x - \bm x^\dagger \bm b - \bm c^\dagger \bm x} = \frac{1}{\det(A)} \exp[ {\bm c}^\dagger A^{-1} {\bm b} ] \ ,
    \label{eq:cmplx_Gaussian_identity}
\end{equation}
for positive definite Hermitian matrices $A$ and fixed complex vectors $\bm b$ and $\bm c$, we can see that 
\begin{equation}
    \lim_{\epsilon \to 0} \int_{\mathbb{C}^N} \frac{\diff \bm k \, \diff \bm k^\dagger}{(2 \pi)^{2N}} \, e^{- \frac{ \bm k^\dagger \bm k}{4}  - \frac{\sigma}{4} (\bm k^T \bm v ) \overline{(\bm k^T \bm v )} + \frac{1}{4} \left( \bm k^\dagger \bm b_1 + \bm b_2^T \bm k \right) } = \frac{2^N}{\pi^N (1 + \sigma)} \exp\left[ \frac{1}{4} \bigg( \bm b_2^T \bm b_1  - \mu \, \bm b_1^T \bm v \bm v^\dagger \bm b_2 \bigg) \right] \ ,
\end{equation}
where we have introduced $\mu \equiv \sigma/(1 + \sigma)$ and employed $\bm b_2^T \overline{\bm v} \bm v^T \bm b_1 = \bm b_1^T \bm v \bm v^\dagger \bm b_2$, since $\bm b_1$ and $\bm b_2$ are both even Grassmann variables (i.e. each term is either a constant or a product with an even number of Grassmann variables). Hence, we must now write down the following formulas
\begingroup
\allowdisplaybreaks
\begin{align}
    \bm b_2^T \bm b_1 &= - 4|z|^2 + 2 i z (\bm \psi_1^T \bm v) (\bm v^\dagger \bm \varphi_2) + 2 i \overline{z} (\bm \psi_2^T \bm v) (\bm v^\dagger \bm \varphi_1) - (\bm \psi_1^T \bm \varphi_1) (\bm \psi_2^T \bm v) (\bm v^\dagger \bm \varphi_2) - \sigma^2 (\bm \varphi_2^T \bm \psi_2) (\bm \varphi_1^T \bm v) (\bm v^\dagger \bm \psi_1) \nonumber \\
    & + \sigma \Big[ - 2 i z (\bm \varphi_2^T \bm v) (\bm v^\dagger \bm \psi_1) - 2 i \overline{z} (\bm \varphi_1^T \bm v) (\bm v^\dagger \bm \psi_2) + (\bm \varphi_2^T \bm \varphi_1) (\bm \psi_2^T \bm v) (\bm v^\dagger \bm \psi_1) + (\bm \psi_1^T \bm \psi_2) (\bm \varphi_1^T \bm v) (\bm v^\dagger \bm \varphi_2) \Big]
\end{align}
\endgroup
and, introducing $\nu \equiv \bm v^T \bm v$,
\begingroup\allowdisplaybreaks
\begin{align}
    &\frac{\bm b_1^T \bm v \bm v^\dagger \bm b_2}{4} = \frac{1}{4} \Big( - 2 i \overline{z} \, \overline{\nu} - (1+ \sigma) (\bm \psi_1^T \overline{\bm v}) (\bm v^\dagger \bm \varphi_2) \Big) \Big(  - 2i z \nu - (1 + \sigma) (\bm \psi_2^T \bm v) ( \bm v^T  \bm \varphi_1)  \Big) \nonumber\\
    &= - |z|^2 |\nu|^2 + \frac{i (1 + \sigma )}{2}  \big[  z  \nu (\bm \psi_1^T \overline{\bv}) (\bv^\dagger \bm \varphi_2) + \overline{z} \, \overline{\nu} (\bm \psi_2^T \bv ) (\bm v^T \bm \varphi_1)\big] + \frac{( 1 + \sigma )^2}{4} (\bm \psi_1^T \overline{\bm v}) (\bm v^\dagger \bm \varphi_2) (\bm \psi_2^T \bm v) ( \bm v^T  \bm \varphi_1)  \ ,
\end{align}
\endgroup
so that we can eventually return to evaluating the necessary Grassmannian integrals in Eq. \eqref{eq:JPD_pre_k_ints}.  Note at this point that when given the choice of how to express an inner product, we always write them in the form $\bm \psi_i^T \bm \varphi_j$ or $\bm \psi_1^T \bm \psi_2$/ $\bm \varphi_1^T \bm \varphi_2$ - the reason for this should become clear when constructing the Pfaffian. Therefore, the exponential that results from the Gaussian integral can be written as
\begin{align}
    \exp\bigg[&\bm b_2^T C_{\sigma,0}^{-1} \bm b_1\bigg] = e^{- (1 - \mu|\nu|^2) |z|^2}  \exp\Bigg[ \frac{i}{2} \bigg( z (\bm \psi_1^T \bm v) (\bm v^\dagger \bm \varphi_2) + \overline{z} (\bm \psi_2^T \bm v) (\bm v^\dagger \bm \varphi_1) + z \sigma (\bm \psi_1^T \overline{\bv}) (\bv^T \bm \varphi_2) + \overline{z} \sigma (\bm \psi_2^T \overline{\bv} ) (\bv^T \bm \varphi_1)   \bigg) \nonumber \\
    &- \frac{i \mu (1 + \sigma )}{2} \Big[  z  \nu (\bm \psi_1^T \overline{\bv}) (\bv^\dagger \bm \varphi_2) + \overline{z} \, \overline{\nu} (\bm \psi_2^T \bv ) (\bm v^T \bm \varphi_1 )\Big] + \frac{1}{4} \bigg(- (\bm \psi_1^T \bm \varphi_1) (\bm \psi_2^T \bm v) (\bm v^\dagger \bm \varphi_2) + \sigma (\bm \varphi_1^T \bm \varphi_2) (\bm \psi_1^T \overline{\bv}) (\bv^T\bm \psi_2) \nonumber \\
    & + \sigma (\bm \psi_1^T \bm \psi_2) (\bm \varphi_1^T \bm v) (\bm v^\dagger \bm \varphi_2) - \sigma^2 (\bm \psi_2^T \bm \varphi_2) (\bm \psi_1^T \overline{\bv}) (\bv^T\bm \varphi_1) - \mu ( 1 + \sigma )^2 (\bm \psi_1^T \overline{\bm v}) (\bm v^\dagger \bm \varphi_2) (\bm \psi_2^T \bm v) ( \bm v^T  \bm \varphi_1)\bigg) \Bigg] 
\end{align}
then, inserting the above equation into Eq. \eqref{eq:JPD_pre_k_ints} and combining some of the terms which are quartic in Grassmannian variables, we see that
\begingroup\allowdisplaybreaks
\begin{align}
     \mathcal{P}^{(\text{nHIE})}_{N}(z, \bm v; \sigma) &= \frac{2^N e^{-(1 - \mu|\nu|^2)|z|^2}}{\pi^{N+1} (1 + \sigma)}  \lim_{|w-z|\to 0} \frac{\partial^2}{\partial w \partial \overline{w} } \int D(\bm \Psi, \bm \Phi) \exp\Bigg[ - \frac{i \mu (1 + \sigma )}{2} \Big[  z  \nu (\bm \psi_1^T \overline{\bv}) (\bv^\dagger \bm \varphi_2) + \overline{z} \, \overline{\nu} (\bm \psi_2^T \bv ) (\bm v^T \bm \varphi_1 )\Big]  \nonumber \\ 
    &+\frac{i}{2} \bigg( z (\bm \psi_1^T \bm v) (\bm v^\dagger \bm \varphi_2) + \overline{z} (\bm \psi_2^T \bm v) (\bm v^\dagger \bm \varphi_1) + z \sigma (\bm \psi_1^T \overline{\bv}) (\bv^T \bm \varphi_2) + \overline{z} \sigma (\bm \psi_2^T \overline{\bv} ) (\bv^T \bm \varphi_1) - w \bm \psi_1^T \bm \varphi_2 - \overline{w} \bm \psi_2^T \bm \varphi_1  \bigg) \nonumber \\
    & + \frac{1}{4} \bigg( (\bm \psi_1^T \bm \varphi_1) \, \bm \psi_2^T  (\1_N - \bm v \bm v^\dagger) \bm \varphi_2 + \sigma (\bm \varphi_1^T \bm \varphi_2) \, \bm \psi_1^T  ( \overline{\bv} \bv^T- \1_N) \bm \psi_2  + \sigma (\bm \psi_1^T \bm \psi_2) (\bm \varphi_1^T \bm v) (\bm v^\dagger \bm \varphi_2) \nonumber \\    
    &  - \sigma^2 (\bm \psi_2^T \bm \varphi_2) (\bm \psi_1^T \overline{\bv}) (\bv^T\bm \varphi_1) - \mu ( 1 + \sigma )^2 (\bm \psi_1^T \overline{\bm v}) (\bm v^\dagger \bm \varphi_2) (\bm \psi_2^T \bm v) ( \bm v^T  \bm \varphi_1)\bigg) \Bigg] \ .
\end{align}
\endgroup
Gathering the quartic terms in this manner necessitates the need to introduce five Hubbard-Stratonovich transformations, specifically:
\begingroup\allowdisplaybreaks
\begin{align}
    &\exp\bigg[\frac{(\bm \psi_1^T \bm \varphi_1)}{4} \bm \psi_2^T \Big[ \1_N - \bm v \bm v^\dagger \Big] \bm \varphi_2\bigg] =  \int_{\mathbb{C}} \frac{ \diff^2 q_1}{\pi}  e^{-|q_1|^2} \exp\bigg[ - \frac{q_1}{2} \Big(\bm \psi_1^T \bm \varphi_1 \Big) - \frac{\overline{q_1}}{2} \Big(\bm \psi_2^T \Big[ \1_N - \bm v \bm v^\dagger \Big] \bm \varphi_2\Big) \bigg]\\
    &\exp\bigg[\frac{\sigma\Big(\bm \varphi_1^T \bm \varphi_2\Big)}{4} \, \bm \psi_1^T  \Big[ \overline{\bv} \bv^T- \1_N\Big] \bm \psi_2 \bigg] = \int_{\mathbb{C}} \frac{ \diff^2 q_2}{\pi}  e^{-|q_2|^2} \exp\bigg[- \frac{\sigma q_2}{2} \Big( \bm \varphi_1^T \bm \varphi_2 \Big) - \frac{\overline{q_2}}{2}\Big( \bm \psi_1^T \Big[ \overline{\bv} \bv^T- \1_N \Big] \bm \psi_2 \Big) \bigg]\\
    &\exp\bigg[\frac{\sigma}{4} (\bm \psi_1^T \bm \psi_2) (\bm \varphi_1^T \bm v) (\bm v^\dagger \bm \varphi_2)\bigg] = \int_{\mathbb{C}} \frac{ \diff^2 q_3}{\pi}  e^{-|q_3|^2}\exp\bigg[ - \frac{\sigma q_3}{2}\Big( \bm \psi_1^T \bm \psi_2 \Big) - \frac{ \overline{q_3}}{2}\Big( \bm \varphi_1^T \bm v \bm v^\dagger \bm \varphi_2 \Big) \bigg]\\
    &\exp\bigg[ -\frac{\sigma^2}{4} (\bm \psi_2^T \bm \varphi_2) (\bm \psi_1^T \overline{\bv}) (\bm v^T \bm \varphi_1) \bigg] = \int_{\mathbb{C}} \frac{ \diff^2 q_4}{\pi}  e^{-|q_4|^2} \exp\bigg[ - \frac{i\sigma q_4}{2}\Big( \bm \psi_2^T \bm \varphi_2 \Big) - \frac{i\sigma\overline{q_4}}{2}\Big( \bm \psi_1^T \overline{\bv} \bm v^T \bm \varphi_1 \Big) \bigg] \\
    &\exp[- \frac{\mu ( 1 + \sigma )^2}{4}  (\bm \psi_1^T \overline{\bm v}) (\bm v^\dagger \bm \varphi_2) (\bm \psi_2^T \bm v) ( \bm v^T  \bm \varphi_1)] = \int_{\mathbb{C}} \frac{ \diff^2 q_5}{\pi}  e^{-|q_5|^2}  \exp[ - \frac{i \mu q_5}{2} \Big( \bm \psi_1^T \overline{\bm v} \bm v^\dagger \bm \varphi_2 \Big) - \frac{i (1 + \sigma)^2 \overline{q_5}}{2} \Big(\bm \psi_2^T \bm v \bm v^T  \bm \varphi_1 \Big)  ] \ ,
\end{align}
\endgroup
where we again use the convention that $\diff ^2 q_i = \diff \Re(q_i) \diff \Im(q_i)$. Note that the transformations are chosen in this manner so as to ensure we have no square roots to deal with in the ensuing expansion of the Pfaffian. One is of course free to choose these transformations in any valid way, but the form of the resulting Pfaffian will change for a different choice. The final result will however be independent of this choice. We now incorporate these transformations into our equation for $\mathcal{P}^{(\text{nHIE})}_{N}(z, \bm v; \sigma)$ to arrive at
\begin{align}
    \mathcal{P}^{(\text{nHIE})}_{N}(z, \bm v; \sigma) &= \frac{2^N e^{-(1 - \mu|\nu|^2)|z|^2}}{\pi^{N+6} (1 + \sigma)} \int_{\mathbb{C}^5} \diff^2 q_1 \, e^{-|q_1|^2} \cdots \diff^2 q_5 \, e^{-|q_5|^2} \lim_{|w-z|\to 0} \frac{\partial^2}{\partial w \partial \overline{w} } \int D(\bm \Psi, \bm \Phi) \exp\Bigg[ - \frac{i}{2} \bigg(  w \bm \psi_1^T \bm \varphi_2 + \overline{w} \bm \psi_2^T \bm \varphi_1 \bigg) \nonumber \\
    & + \frac{i}{2} \bigg( z (\bm \psi_1^T \bm v) (\bm v^\dagger \bm \varphi_2) + \overline{z} (\bm \psi_2^T \bm v) (\bm v^\dagger \bm \varphi_1) + z \sigma (\bm \psi_1^T \overline{\bv}) (\bv^T \bm \varphi_2) + \overline{z} \sigma (\bm \psi_2^T \overline{\bv} ) (\bv^T \bm \varphi_1)  \bigg)  - \frac{q_1}{2} \Big(\bm \psi_1^T \bm \varphi_1 \Big) - \frac{\sigma q_2}{2} \Big( \bm \varphi_1^T \bm \varphi_2 \Big) \nonumber \\ 
    &  - \frac{\sigma q_3}{2}\Big( \bm \psi_1^T \bm \psi_2 \Big) - \frac{i\sigma q_4}{2}\Big( \bm \psi_2^T \bm \varphi_2 \Big) - \frac{i \mu (1 + \sigma )}{2} \Big[  z  \nu (\bm \psi_1^T \overline{\bv}) (\bv^\dagger \bm \varphi_2) + \overline{z} \, \overline{\nu} (\bm \psi_2^T \bv ) (\bm v^T \bm \varphi_1 )\Big] - \frac{\overline{q_1}}{2} \Big(\bm \psi_2^T \Big[ \1_N - \bm v \bm v^\dagger \Big] \bm \varphi_2\Big) \nonumber \\
    & - \frac{\overline{q_2}}{2}\Big( \bm \psi_1^T \Big[ \overline{\bv} \bv^T- \1_N \Big] \bm \psi_2 \Big) - \frac{ \overline{q_3}}{2}\Big( \bm \varphi_1^T \bm v \bm v^\dagger \bm \varphi_2 \Big) - \frac{i\sigma\overline{q_4}}{2}\Big( \bm \psi_1^T \overline{\bv} \bm v^T \bm \varphi_1 \Big) \nonumber \\
    & - \frac{i \mu q_5}{2} \Big( \bm \psi_1^T \overline{\bm v} \bm v^\dagger \bm \varphi_2 \Big) - \frac{i (1 + \sigma)^2 \overline{q_5}}{2} \Big(\bm \psi_2^T \bm v \bm v^T  \bm \varphi_1 \Big)  \Bigg] \ ,
\end{align}
and so we are now at a stage where we can write the exponential in the form $- \frac{1}{2} \bm \chi^T \mathcal{M} \bm \chi$, where $\bm \chi^T = (\bm \psi_1^T, \bm \psi_2^T, \bm \varphi_1^T, \bm \varphi_2^T)$. The reason for wanting to do this is to facilitate the use of
\begin{equation}
    \int D \bm \chi \exp[- \frac{1}{2} \bm \chi^T \mathcal{M} \bm \chi] = \Pf(\mathcal{M})\ ,
\end{equation}
where, for us, $D \bm \chi = (-1)^N D(\bm \Psi, \bm \Phi)$ and $\Pf(\mathcal{M})$ denotes the Pfaffian of $\mathcal{M}$ \cite{Berezin87}. This analysis is most easily supported for $\mathcal{M}$ being anti-symmetric and so to this end we appeal to the following useful trick for Grassmann numbers. Namely, for Grassmann numbers $\alpha$ and $\beta$, we can always say that $\alpha \beta = 0.5( \alpha \beta - \beta \alpha)$, this means that for an arbitrary matrix $M$ we can appeal to the general relation
\begin{equation}
    \bm \psi^T M \bm \varphi = \frac{1}{2} \bigg( \bm \psi^T M \bm \varphi  - \bm \varphi^T M^T \bm \psi  \bigg) \ .
    \label{eq:Grass_argument}
\end{equation}
To this end, we aim to identify the upper right-hand block of $\mathcal{M}$ with the following contributions
\begingroup
\allowdisplaybreaks
\begin{align}
    &- \frac{1}{2} \bm \psi_1^T \mathcal{K}_{11} \, \bm \varphi_1 \hspace{1cm} \mathcal{K}_{11} \equiv q_1 \1_N + i \sigma \overline{q_4} \, \overline{\bv} \bm v^T \label{eq:K_11} \\
    &- \frac{1}{2} \bm \psi_2^T \mathcal{K}_{22} \, \bm \varphi_2 \hspace{1cm} \mathcal{K}_{22} \equiv ( \overline{q_1} + i \sigma q_4) \1_N - \overline{q_1} \bm v \bm v^\dagger \\
    &- \frac{i}{2} \bm \psi_1^T \mathcal{K}_{12} \, \bm \varphi_2 \hspace{1cm} \mathcal{K}_{12} \equiv w \1_N - z \bm v \bm v^\dagger - \sigma z \overline{\bm v} \bm v^T + \left[ \mu (1+ \sigma ) z \nu + \mu q_5 \right] \overline{\bm v} \bm v^\dagger \\
    &- \frac{i}{2} \bm \psi_2^T \mathcal{K}_{21} \, \bm \varphi_1 \hspace{1cm} \mathcal{K}_{21} \equiv \overline{w} \1_N - \overline{z} \bm v \bm v^\dagger - \sigma \overline{z} \, \overline{\bm v} \bm v^T +  \left[ \mu (1+ \sigma) \overline{z} \, \overline{\nu} + (1+ \sigma)^2 \overline{q_5} \right] \bm v \bm v^T 
\end{align}
\endgroup
and then from the terms linking $\bm \varphi_1^T \bm \varphi_2$ and $\bm \psi_1^T \bm \psi_2$ we have
\begingroup
\allowdisplaybreaks
\begin{align}
    &- \frac{1}{2} \bm \varphi_1^T \mathcal{S}_{\varphi} \, \bm \varphi_2 \hspace{1cm} \mathcal{S}_{\varphi} \equiv \sigma q_2 \1_N + \overline{q_3} \bm v \bm v^\dagger \\
    &- \frac{1}{2} \bm \psi_1^T \mathcal{S}_{\psi} \, \bm \psi_2 \hspace{1cm} \mathcal{S}_{\psi} \equiv ( \sigma q_3 - \overline{q_2})  \1_N + \overline{q_2} \, \overline{\bv} \bm v^T \label{eq:S_psi_def}
\end{align}
\endgroup
note the choice of $\mathcal{S}$ for `self'. With these matrices in place, we can apply the logic of Eq. \eqref{eq:Grass_argument} to see that
\begingroup\allowdisplaybreaks
\begin{align}
   \mathcal{P}&^{(\text{nHIE})}_{N}(z, \bm v; \sigma) = \frac{2^N e^{-(1 - \mu|\nu|^2)|z|^2}}{\pi^{N+6} (1 + \sigma)} \int_{\mathbb{C}^5} \diff^2 q_1 \, e^{ - |q_1|^2} \cdots \diff^2 q_5  \, e^{ - |q_5|^2} \lim_{|w-z|\to 0}   \frac{\partial^2}{\partial w \partial \overline{w} } \nonumber\\
    & \times \int D(\bm \Psi, \bm \Phi) \exp[ - \frac{1}{4} 
    \begin{matrix}
        \Big(\bm \psi_1^T, & \bm \psi_2^T, & \bm \varphi_1^T, & \bm \varphi_2^T \Big) \\ \\ \\ \\
    \end{matrix} 
    \left( \begin{matrix}
        0 & \mathcal{S}_\psi & \mathcal{K}_{11} & i\mathcal{K}_{12} \\
        - \mathcal{S}_\psi^T & 0 & i\mathcal{K}_{21} & \mathcal{K}_{22} \\
        - \mathcal{K}_{11}^T & - i\mathcal{K}_{21}^T & 0 & \mathcal{S}_\varphi \\
        - i\mathcal{K}_{12}^T & - \mathcal{K}_{22}^T & - \mathcal{S}_\varphi^T & 0
    \end{matrix}  \right)
    \left( \begin{matrix}
        \bm \psi_1 \\ \bm \psi_2 \\ \bm \varphi_1 \\ \bm \varphi_2 
    \end{matrix}  \right) ] \nonumber \\
    &= \frac{e^{-(1 - \mu|\nu|^2)|z|^2}}{\pi^{N+6} (1 + \sigma)} \int_{\mathbb{C}^5} \diff^2 q_1 \, e^{ - |q_1|^2} \cdots \diff^2 q_5  \, e^{ - |q_5|^2} \lim_{|w-z|\to 0}   \frac{\partial^2}{\partial w \partial \overline{w} } (-1)^N \Pf\left( \begin{matrix}
        0 & \mathcal{S}_\psi & \mathcal{K}_{11} & i\mathcal{K}_{12} \\
        - \mathcal{S}_\psi^T & 0 & i\mathcal{K}_{21} & \mathcal{K}_{22} \\
        - \mathcal{K}_{11}^T & - i\mathcal{K}_{21}^T & 0 & \mathcal{S}_\varphi \\
        - i\mathcal{K}_{12}^T & - \mathcal{K}_{22}^T & - \mathcal{S}_\varphi^T & 0
    \end{matrix}  \right) \, .
    \label{eq:JPDF_with_Pfaff}
\end{align}
\endgroup
The task that remains at this point is to identify which terms in the resulting Pfaffian will survive the combined action of the derivative and the consequent limit being taken. To this end, we denote our matrix of interest as
\begin{equation}
    \mathcal{M} \equiv \left( \begin{matrix}
        0 & \mathcal{S}_\psi & \mathcal{K}_{11} & i\mathcal{K}_{12} \\
        - \mathcal{S}_\psi^T & 0 & i\mathcal{K}_{21} & \mathcal{K}_{22} \\
        - \mathcal{K}_{11}^T & - i\mathcal{K}_{21}^T & 0 & \mathcal{S}_\varphi \\
        - i\mathcal{K}_{12}^T & - \mathcal{K}_{22}^T & - \mathcal{S}_\varphi^T & 0
    \end{matrix} \right) \ ,
    \label{eq:M_def}
\end{equation}
and realise that its Pfaffian is closely related to its determinant, such that
\begin{equation}
    \mathcal{F}^2 \equiv \Pf(\mathcal{M})^2 = \det(\mathcal{M}) \ ,
\end{equation}
where, since $\mathcal{M}$ is anti-symmetric, the determinant should be a perfect square. In what follows, it is hugely convenient to express the $\mathcal{K}$ and $\mathcal{S}$ matrices as follows:
\begin{align}
    \mathcal{S}_\psi = \alpha_\psi \1_N + \RV \Lambda_\psi \RV^T \, , \hspace{0.5cm} 
    &\mathcal{S}_\varphi = \alpha_\varphi \1_N + \RV \Lambda_\varphi \RV^T \, ,  \hspace{0.5cm}
    \mathcal{K}_{11} = \alpha_{11} \1_N + \RV \Lambda_{11} \RV^T \, ,  \hspace{0.5cm}  
    \mathcal{K}_{22} = \alpha_{22} \1_N + \RV \Lambda_{22} \RV^T \nonumber\\
    &\mathcal{K}_{12} = w \1_N + \RV \Lambda_{12} \RV^T \, ,  \hspace{0.5cm} \mathcal{K}_{21} = \overline{w} \1_N + \RV \Lambda_{21} \RV^T  \ ,
    \label{eq:Lambda_matrices_def}
\end{align}
where $\RV = (\bv, \overline{\bv})$ is an $N \times 2$ matrix. Then, through Eqs. \eqref{eq:K_11} - \eqref{eq:S_psi_def} we identify the necessary form of the $\alpha$ and $\beta$ constants as
\begin{align}
    \alpha_{11} = q_1 \, , \hspace{1cm} 
    \alpha_{22} = \overline{q_1} + i \sigma q_4 \, , \hspace{1cm}
    \alpha_{\varphi} = \sigma q_2  \, , \hspace{1cm}
    \alpha_\psi = \sigma q_3 - \overline{q_2} \, , \hspace{1cm}
    \beta_\varphi = \overline{q_3}\, , \hspace{1cm} \beta_\psi = \overline{q_2} \label{eq:our_alphas_betas}
\end{align}
and the $\Lambda$ matrices as
\begin{gather}
    \Lambda_{11} = \left( 
    \begin{matrix}
        0 & 0 \\ i \sigma \overline{q_4} & 0
    \end{matrix}
    \right)\, , \hspace{1cm}
    \Lambda_{22} = \left( 
    \begin{matrix}
        0 & -\overline{q_1} \\ 0 & 0 
    \end{matrix}
    \right)\, \nonumber \\
    \Lambda_{12} = \left( 
    \begin{matrix}
        0 & -z \\ -\sigma z &  \mu ( (1 + \sigma) z \nu + q_5 )
    \end{matrix}
    \right)\, , \hspace{1cm}
    \Lambda_{21} = \left( 
    \begin{matrix}
        (1 + \sigma) ( \mu \overline{z}\, \overline{\nu} + (1 + \sigma) \overline{q_5}) & -\overline{z} \\ -\sigma\overline{z} & 0 
    \end{matrix}
    \right) \ .
    \label{eq:our_Lambdas}
\end{gather}
Through a rather long, and slightly tedious, exercise in linear algebra (outlined in Appendix \ref{app:simplifying_our_Pfaffian}) one can map the Pfaffian of this $4N \times 4N$ matrix to the Pfaffian of a $4 \times 4$ matrix. Specifically,
\begin{equation}
    \mathcal{F}^2 = \left( \frac{\alpha_\varphi + \beta_\varphi}{\alpha_\varphi} \right)^2 \zeta^{2(N-2)} \det[ \zeta \1_4  + \left( \begin{matrix}
    \mathcal{T}_{12} \RV^T \RV & - i \mathcal{T}_{11} \RV^T \RV   \\
    i \mathcal{T}_{22} \RV^T \RV &  \mathcal{T}_{12}^T \RV^T \RV 
    \end{matrix} \right) ] \ ,
\end{equation}
where $\zeta \equiv \alpha_\varphi \alpha_\psi - \alpha_{11} \alpha_{22} - |w|^2$ and $\mathcal{T}_{11}$, $\mathcal{T}_{12}$ and $\mathcal{T}_{22}$ are $2 \times 2$ matrices defined in Appendix \ref{app:simplifying_our_Pfaffian}. Then, we can utilise Mathematica to evaluate the exact form of $\mathcal{F}^2$, which yields a complicated perfect square of the form
\begin{equation}
    \mathcal{F}^2 = \Big(\mathcal{A}_{0,0} + \mathcal{A}_{1,0} \Delta + \mathcal{A}_{0,1} \overline{\Delta} + \mathcal{A}_{1,1} |\Delta|^2 + \cdots \Big)^2
\end{equation}
where $\mathcal{A}_{0,0}$, $\mathcal{A}_{1,0}$, $\mathcal{A}_{0,1}$ and $\mathcal{A}_{1,1}$ are known in terms of the complex $q$ integration variables, $\Delta \equiv (w - z)$ other model parameters. Note that the $\cdots$ represents terms of order higher than $|\Delta|^2$, which do not contribute to the JPDF due to the action of the derivative and the limit - from now on we shall neglect these terms. Thus we can say that
\begin{align}
    \mathcal{P}^{(\text{nHIE})}_{N}&(z, \bm v; \sigma) = \frac{e^{-(1 - \mu|\nu|^2)|z|^2}}{\pi^{N+6} (1 + \sigma)} \int_{\mathbb{C}^5} \diff^2 q_1 \, e^{ - |q_1|^2} \cdots \diff^2 q_5  \, e^{ - |q_5|^2} \lim_{|\Delta|\to 0}   \frac{\partial^2}{\partial \Delta \partial \overline{\Delta} }  \nonumber\\
    & \times \Big( |q_1|^2 + \sigma |q_2|^2 + |\Delta + z|^2 + i \sigma q_1 q_4 - \sigma^2 q_2 q_3  \Big)^{N-2} \Big( \mathcal{A}_{0,0} + \mathcal{A}_{1,0} \Delta + \mathcal{A}_{0,1} \overline{\Delta} + \mathcal{A}_{1,1} |\Delta|^2 + \cdots \Big) \ ,
    \label{eq:JPDF_start_step4}
\end{align} 
and so we realise that the simplest step at this point is to perform the integration over $q_5$, as this does not appear in the bracket which is raised to the power $N-2$. This yields
\begingroup\allowdisplaybreaks
\begin{align}
    \mathcal{P}^{(\text{nHIE})}_{N}(z, \bm v; \sigma) &= \frac{e^{-(1 - \mu|\nu|^2)|z|^2}}{\pi^{N+5} (1 + \sigma)} \int_{\mathbb{C}^4} \diff^2 q_1 \, e^{ - |q_1|^2} \cdots \diff^2 q_4  \, e^{ - |q_4|^2} \lim_{|\Delta|\to 0} \frac{\partial^2}{\partial \Delta \partial \overline{\Delta} }  \nonumber\\
    & \times \Big( |q_1|^2 + \sigma |q_2|^2 + |\Delta + z|^2 + i \sigma q_1 q_4 - \sigma^2 q_2 q_3  \Big)^{N-2} \Big( \mathcal{B}_{0,0} + \overline{z} \, \mathcal{B}_{1,0} \Delta + z \mathcal{B}_{0,1} \overline{\Delta} + \mathcal{B}_{1,1} |\Delta|^2 \Big) \ ,
\end{align} 
\endgroup
where $\mathcal{B}_{0,0}$, $\mathcal{B}_{1,0}$, $\mathcal{B}_{0,1}$ and $\mathcal{B}_{1,1}$ are known polynomials in $q_1, \ldots, q_4$ and stored in Mathematica. At this point it is instructive and useful to note that the integrals over $q_3$ and $q_4$ yield the following
\begingroup\allowdisplaybreaks
\begin{align}
    \int_{\mathbb{C}^2} \diff^2 q_3 \, \diff^2 q_4 \, e^{ - |q_3|^2 - |q_4|^2} \, \Big(  i \sigma q_1 q_4 - \sigma^2 q_2 q_3 \Big)^{\ell} \mathcal{B}_{0,0} = \int_{\mathbb{C}^2} \diff^2 q_3 \, \diff^2 q_4 \, e^{ - |q_3|^2 - |q_4|^2} \, \Big(  i \sigma q_1 q_4 - \sigma^2 q_2 q_3 \Big)^{\ell} \mathcal{B}_{1,0} = 0
\end{align}
\endgroup
for power $\ell = 0,1,2$, with non-zero results when considering $\mathcal{B}_{0,1}$ and $\mathcal{B}_{1,1}$ for $\ell = 0$ and 1 (that we will consider later), which manifests
\begingroup\allowdisplaybreaks
\begin{align}
    \mathcal{P}^{(\text{nHIE})}_{N}(z, \bm v; \sigma) &= \frac{e^{-(1 - \mu|\nu|^2)|z|^2}}{\pi^{N+5} (1 + \sigma)} \int_{\mathbb{C}^4} \diff^2 q_1 \, e^{ - |q_1|^2} \cdots \diff^2 q_4  \, e^{ - |q_4|^2} \lim_{|\Delta|\to 0} \frac{\partial^2}{\partial \Delta \partial \overline{\Delta} }  \nonumber\\
    & \times \Big( |q_1|^2 + \sigma |q_2|^2 + |\Delta + z|^2 + i \sigma q_1 q_4 - \sigma^2 q_2 q_3  \Big)^{N-2} \Big( z \mathcal{B}_{0,1} \overline{\Delta} + \mathcal{B}_{1,1} |\Delta|^2 \Big) \ .
\end{align} 
\endgroup
One now considers the expansion of the term with the exponent $N-2$ so as to ascertain which terms survive the derivative being performed, after some thought this yields
\begingroup\allowdisplaybreaks
\begin{align}
    \Big( |q_1|^2& + \sigma |q_2|^2  + |z|^2 + |\Delta|^2 + z \overline{\Delta} + \overline{z} \Delta + i \sigma q_1 q_4 - \sigma^2 q_2 q_3  \Big)^{N-2} \nonumber \\
    &= \Big( |q_1|^2 + \sigma |q_2|^2 + |z|^2 \Big)^{N-2} +
    (N-2) \Big( |q_1|^2 + \sigma |q_2|^2 + |z|^2 \Big)^{N-3} \Big( |\Delta|^2 + z \overline{\Delta} + \overline{z} \Delta + i \sigma q_1 q_4 - \sigma^2 q_2 q_3  \Big) \nonumber \\
    & + (N-2)(N-3) \Big( |q_1|^2 + \sigma |q_2|^2 + |z|^2 \Big)^{N-4} \Big( |\Delta|^2 + z \overline{\Delta} + \overline{z} \Delta \Big) \Big( i \sigma q_1 q_4 - \sigma^2 q_2 q_3 \Big) + \cdots \ ,
\end{align}
\endgroup
such that, once again, $\cdots$ indicates higher order terms in $|\Delta|$. Then, by identifying the terms that are proportional to $|\Delta|^2$, we can safely take the derivative and the limit to see that
\begingroup\allowdisplaybreaks
\begin{align}
    \mathcal{P}^{(\text{nHIE})}_{N}(z, \bm v; \sigma) &= \frac{e^{-(1 - \mu|\nu|^2)|z|^2}}{\pi^{N+5} (1 + \sigma)} \int_{\mathbb{C}^4} \diff^2 q_1 \, e^{ - |q_1|^2} \cdots \diff^2 q_4  \, e^{ - |q_4|^2} \Bigg[ \Big( |q_1|^2 + \sigma |q_2|^2 + |z|^2 \Big)^{N-2} \mathcal{B}_{1,1} \nonumber\\
    & + (N-2) \Big( |q_1|^2 + \sigma |q_2|^2 + |z|^2 \Big)^{N-3} \Big( |z|^2 \mathcal{B}_{0,1}  + \big(i \sigma q_1 q_4 - \sigma^2 q_2 q_3 \big) \mathcal{B}_{1,1}\Big) \nonumber\\
    & + (N-2)(N-3) |z|^2 \Big( |q_1|^2 + \sigma |q_2|^2 + |z|^2 \Big)^{N-4} \Big( i \sigma q_1 q_4 - \sigma^2 q_2 q_3 \Big) \mathcal{B}_{0,1} \Bigg]\ .
\end{align}
\endgroup
At this point, we are now in a position to perform the $q_3$ and $q_4$ integrals so as to introduce to following functions
\begin{align}
    \mathcal{C}_1 &\equiv \frac{1}{\pi^2} \int_{\mathbb{C}^2} \diff^2 q_3 \, \diff^2 q_4 \, e^{ - |q_3|^2 - |q_4|^2} \, \mathcal{B}_{1,1} = |q_1|^2 + \sigma |q_2|^2 + |z|^2 - \big(2 \sigma |z|^2 + \sigma(1+ \sigma)\big) (1 - |\nu|^2) + |z|^2 \sigma^2 (1 - |\nu|^2)^2 \label{eq:def_C_1} \\
    \mathcal{C}_2 &\equiv \frac{1}{\pi^2} \int_{\mathbb{C}^2} \diff^2 q_3 \, \diff^2 q_4 \, e^{ - |q_3|^2 - |q_4|^2} \, \mathcal{B}_{0,1} = \sigma(1 - \sigma) (1 - |\nu|^2) (|q_2|^2 - 1) \label{eq:def_C_2}  \\
    \mathcal{C}_3 &\equiv \frac{1}{\pi^2} \int_{\mathbb{C}^2} \diff^2 q_3 \, \diff^2 q_4 \, e^{ - |q_3|^2 - |q_4|^2} \, \Big(  i \sigma q_1 q_4 - \sigma^2 q_2 q_3 \Big) \mathcal{B}_{1,1} = - \sigma^2 ( |q_1|^2 + |q_2|^2 ) (1 - |\nu|^2) \label{eq:def_C_3} \\
    \mathcal{C}_4 &\equiv \frac{1}{\pi^2} \int_{\mathbb{C}^2} \diff^2 q_3 \, \diff^2 q_4 \, e^{ - |q_3|^2 - |q_4|^2} \, \Big(  i \sigma q_1 q_4 - \sigma^2 q_2 q_3 \Big) \mathcal{B}_{0,1} = - \sigma^2 (1 - \sigma) |q_2|^2 (1 - |\nu|^2)\ , \label{eq:def_C_4}
\end{align}
this means we can now say that
\begingroup\allowdisplaybreaks
\begin{align}
    \mathcal{P}&^{(\text{nHIE})}_{N}(z, \bm v; \sigma)= \frac{e^{-(1 - \mu|\nu|^2)|z|^2}}{\pi^{N+3} (1 + \sigma)} \int_{\mathbb{C}^2} \diff^2 q_1 \, \diff^2 q_2  \, e^{ - |q_1|^2 - |q_2|^2} \Bigg[ \Big( |q_1|^2 + \sigma |q_2|^2 + |z|^2 \Big)^{N-2} \mathcal{C}_{1} \nonumber\\
    & + (N-2)  \Big( |q_1|^2 + \sigma |q_2|^2 + |z|^2 \Big)^{N-3} \Big(  |z|^2 \mathcal{C}_{2} + \mathcal{C}_{3}\Big) + (N-2)(N-3) |z|^2 \Big( |q_1|^2 + \sigma |q_2|^2 + |z|^2 \Big)^{N-4}  \mathcal{C}_{4} \Bigg]\ .
    \label{eq:JPDF_q1q2_left}
\end{align}
\endgroup
This necessitates one to be able to perform integrals according to the following identities 
\begin{gather}
    \frac{1}{\pi^2}  \int_{\mathbb{C}^2} \diff^2 q_1 \, \diff^2 q_2  \, e^{ - |q_1|^2 - |q_2|^2} \Big( |q_1|^2 + \sigma |q_2|^2 + |z|^2 \Big)^{m} = e^{|z|^2} m! \, \mathcal{S}_{m}(|z|^2;\sigma) \label{eq:G_0_function} \\
    \frac{1}{\pi^2}  \int_{\mathbb{C}^2} \diff^2 q_1 \, \diff^2 q_2  \, e^{ - |q_1|^2 - |q_2|^2} \Big( |q_1|^2 + \sigma |q_2|^2 + |z|^2 \Big)^{m} |q_1|^2 = e^{|z|^2} m! \Big[ \mathcal{T}_{m+2}(|z|^2;\sigma) - (|z|^2+1) \mathcal{S}_{m+1}(|z|^2;\sigma) \Big] \label{eq:G_1_function} \\
    \frac{1}{\pi^2}  \int_{\mathbb{C}^2} \diff^2 q_1 \, \diff^2 q_2  \, e^{ - |q_1|^2 - |q_2|^2} \Big( |q_1|^2 + \sigma |q_2|^2 + |z|^2 \Big)^{m} |q_2|^2 = e^{|z|^2} m! \Big[ (m+1) \mathcal{S}_{m}(|z|^2;\sigma) - \mathcal{T}_{m}(|z|^2;\sigma) \Big] \ , \label{eq:G_2_function}
\end{gather}
where $m$ is a positive integer and $\mathcal{S}_m(|z|^2; \sigma)$ and $\mathcal{T}_m(|z|^2; \sigma)$ are defined in Eq. \eqref{eq:def_SN_TN}. For convenience of the reader, we provide guidance on how to compute these integrals in Appendix \ref{app:incmpl_gamma_type_ident}. Utilsing Eqs. \eqref{eq:G_0_function} - \eqref{eq:G_2_function} in Eq. \eqref{eq:JPDF_q1q2_left} yields a valid finite-$N$ form of our desired JPDF. One can then simplify this result through repeated application of the recursion relation 
\begin{equation}
    \sigma\Big( m \, \mathcal{S}_{m-1}(|z|^2; \sigma) - \mathcal{T}_{m-1}(|z|^2; \sigma) \Big) = m \, \mathcal{S}_{m}(|z|^2; \sigma) - \mathcal{T}_{m}(|z|^2; \sigma) \ ,
\end{equation}
which, eventually after some manipulation, yields the desired form of Eq. \eqref{eq:JPDF_z_v_nHIE} in Theorem \ref{thm:JPDF_z_v}. Furthermore, one can then easily arrive at the result of Corollary \ref{cor:rho_finite_N}, by integrating over the eigenvectors through the use of \cite[Eq. (8)]{AFS25}
\begin{equation}
    \frac{1}{\pi^N} \int_{\mathbb{C}^N} \diff^2 \bm v \, \delta(\bm v^\dagger \bm v - 1) \, f(|\bm v^T \bm v|^2) = \int_0^1 \diff p \, \frac{p^{N-2}}{\Gamma(N-1)}f(1 - p^2) \ ,
    \label{eq:integrate_out_vectors}
\end{equation}
and the definition of the lower incomplete $\gamma$-function from Eq. \eqref{eq:def_gamma}.

\section{Asymptotic Analysis}
\label{sec:AA}

In this Section, we aim to prove our asymptotic results for the density of eigenvalues, which we outlined in Corollaries \ref{cor:rho_bulk}, \ref{cor:rho_edge_SA} and \ref{cor:rho_edge_WA}. To this end, we begin by re-expressing our $\mathcal{S}$ and $\mathcal{T}$ functions, which act as building blocks for the density, in forms that are more amenable to asymptotic analysis. Starting with $\mathcal{S}_N$, one can employ the integral representation of the incomplete $\Gamma$-function from Eq. \eqref{eq:incmpl_Gamma} to see that  
\begin{align}
    \mathcal{S}_{N}(|z|^2;\sigma) &= \frac{\sigma^N}{\Gamma(N+1)}   \int_{|z|^2}^\infty \diff x \, e^{-x + \frac{x}{\sigma}} \, \Gamma\left(N+1, \frac{x}{\sigma}\right)
\end{align}
and then we use integration by parts to evaluate this integral and arrive at
\begin{align}
    \mathcal{S}_{N}(|z|^2;\sigma) &= \frac{1}{1-\sigma}\frac{\Gamma\left(N+1,|z|^2\right)}{\Gamma(N+1)} - \frac{\sigma^{N+1}}{1-\sigma} e^{-|z|^2  + \frac{|z|^2}{\sigma}} 
    \frac{\Gamma\left(N+1,\tfrac{|z|^2}{\sigma}\right)}{\Gamma\left(N + 1\right)}\ .
    \label{eq:S_N_integrated}
\end{align}
On the other hand, in order to re-express $\mathcal{T}_N$, we first write it in an integral form, 
\begin{equation}
    \mathcal{T}_N(|z|^2;\sigma) = \frac{\sigma^{N-1}}{\Gamma(N)} \int_{|z|^2}^\infty \diff x \, x \, e^{x \frac{1-\sigma}{\sigma}} \Gamma\left( N, \frac{x}{\sigma} \right) \ , 
\end{equation}
which is achieved via differentiating the summation representation in Eq. \eqref{eq:def_SN_TN}. One can then evaluate this integral, so as to arrive at 
\begin{align}
    \mathcal{T}_N(|z|^2; \sigma) &= \frac{N}{1 - \sigma} \frac{\Gamma(N, |z|^2)}{\Gamma(N)} - \frac{1}{1- \sigma} \left( \frac{\sigma}{1-\sigma} \frac{\Gamma(N, |z|^2)}{\Gamma(N)} - \frac{e^{-|z|^2} |z|^{2N}}{\Gamma(N)} \right) \nonumber\\
    & - \frac{\sigma^N}{(1-\sigma)^2} \frac{\Gamma\left(N,\frac{|z|^2}{\sigma} \right)}{\Gamma(N)} e^{\frac{1- \sigma}{\sigma} |z|^2} \Big( (1- \sigma)|z|^2 - \sigma \Big) \ .
    \label{eq:T_N_integrated}
\end{align}
Considering the form of $\mathcal{S}_N$ and $\mathcal{T}_N$ after removing the summations, one can rewrite our equation for the eigenvalue density, Eq. \eqref{eq:rho_finite_N}, as
\begingroup\allowdisplaybreaks
\begin{align}
    \rho^{(\text{nHIE})}_{N}(z;\sigma) &= \frac{1}{\pi(1-\sigma^2)} \Big( \Theta_{N}(|z|^2) - \mathcal{L}_{N}(|z|^2;\sigma) \Big) - \frac{\sigma}{2 \pi(1 - \sigma)} \Big( \Theta_{N-1}(|z|^2) - \mathcal{L}_{N-1}(|z|^2;\sigma) \Big) \nonumber\\
    &+\frac{e^{\frac{\sigma}{1+\sigma}|z|^2}}{2\pi(1+\sigma)}
    \left(\frac{1+\sigma}{\sigma |z|^2}\right)^{\frac{N+1}{2}}
    \gamma\!\left(\frac{N+1}{2}, \frac{\sigma}{1+\sigma}|z|^2\right) \Bigg[ \frac{(2 - \sigma)\big( N (1+\sigma)^2 - \sigma(2+2|z|^2 +\sigma) - 1 \big)}{2(1 - \sigma^2)} \Theta_{N-1}(|z|^2) \nonumber \\
    & - \frac{(N-1) \sigma (1+\sigma)^2 - 2 |z|^2 (\sigma^3 + \sigma - 1)}{2 (1 - \sigma^2)} \mathcal{L}_{N-1}(|z|^2;\sigma) + \frac{1-\sigma^2}{1-\sigma} \frac{e^{-|z|^2} \, |z|^{2(N-1)}} {\Gamma(N-1)} \nonumber \\
    &- \frac{N (1 + \sigma) - \sigma (1 + |z|^2 (2 + (\sigma - 2) \sigma)) - 1}{1 - \sigma^2} \Theta_{N}(|z|^2) + \frac{N (1 + \sigma) - (1 + \sigma + |z|^2 \sigma^2)}{1 - \sigma^2} \mathcal{L}_{N}(|z|^2;\sigma) \Bigg] \, ,
    \label{eq:rho_Theta_L}
\end{align}
\endgroup
where we have introduced the functions
\begin{equation}
    \Theta_{N}(|z|^2) \equiv \frac{\Gamma(N,|z|^2)}{\Gamma(N)}\hspace{1cm} \text{and} \hspace{1cm} \mathcal{L}_{N}(|z|^2;\sigma) \equiv \sigma^N \frac{\Gamma\left(N, \frac{|z|^2}{\sigma} \right)}{\Gamma(N)}  e^{\frac{1- \sigma}{\sigma} |z|^2} \ .
    \label{eq:Theta_L_defs}
\end{equation}
The advantage of expressing $\rho^{(\text{nHIE})}_{N}(z;\sigma)$ in this manner is that we are able to calculate asymptotic expansions of the contributing parts as $N$ becomes large. To this end, we require some general asymptotic expansions, which we will utilise throughout for general finite $a$, $b$, $c$ and $d$, 
\begin{equation}
    \left( 1+ \frac{a}{\sqrt{N}} + \frac{b}{N} \right)^N e^{- a \sqrt{N} - c - \frac{d}{\sqrt{N}}} = e^{-\frac{a^2}{2} + (b-c)} \left( 1+ \frac{1}{\sqrt{N}} \left( \frac{a^3}{3} - ab - d \right) \right) + O \left( \frac{1}{N} \right) 
    \label{eq:power_expo_2nd_order}
\end{equation}
and
\begin{align}
    \frac{\Gamma(N-m, N+ a \sqrt{N} + b)}{\Gamma(N-m)} = \frac{1}{2} \erfc\left( \frac{a}{\sqrt{2}} \right) + \frac{1}{\sqrt{2 \pi N}} \left( \frac{1}{3} (2+a^2) - (b+m+1) \right) e^{-\frac{a^2}{2}} + O\left( \frac{1}{N} \right) \ ,
    \label{eq:Gamma_ratio_edge_2nd_order}
\end{align}
for fixed $m$ - we refer the reader to Appendix \ref{app:asymp_expansions} for a derivation of these expansions. Furthermore, we note that, from Eq. \eqref{eq:Gamma_ratio_edge_2nd_order}, we can recover the well-known asymptotic behaviour of a ratio of $\Gamma$-functions 
\begin{equation}
    \lim_{N \to \infty} \frac{\Gamma(N-m, N a)}{\Gamma(N-m)} = \Theta[1 - a] \ ,
    \label{eq:bulk_Gamma_limit}
\end{equation}
where $\Theta$ without an indexing subscript indicates the Heaviside $\Theta$-function.

With these preliminaries for asymptotic analysis established, we are now in a position to prove our three separate Corollaries over the course of the following Subsections.

\subsection{Proof of Corollary \ref{cor:rho_bulk}}
\label{subsec:proof_bulk}

Here, we provide a proof of the result of Corollary \ref{cor:rho_bulk}, which shows that, to leading order in the nHIE, the circular law holds in the bulk, for all $\sigma \in [0,1]$ and $z = \sqrt{N} w$. This proof is split into two parts, as we must consider the strong and weak asymmetry regimes of $\sigma$ separately.

\begin{proof}

Starting with the regime of strong asymmetry, where $\sigma \in [0,1)$ is fixed as $N \to \infty$, we can notice that the contributions from $\mathcal{L}_N(|z|^2; \sigma)$ are subleading when compared to $\Theta_N(|z|^2)$. Specifically, for large $N$, one can ascertain that $\mathcal{L}_N(|z|^2; \sigma)$ contributes at order $N^{-1/2}$, whereas from Eq. \eqref{eq:bulk_Gamma_limit} we can see that both $\Theta_{N-1}(|z|^2)$ and $\Theta_N(|z|^2)$ contribute at order unity - this also applies at weak asymmetry. For the sake of brevity, we will skip the derivation of the leading order term in $\mathcal{L}_N(|z|^2; \sigma)$ in this Section, as we show how to analyse its asymptotic expansion when it does contribute to leading order in Section \ref{subsec:proof_edge_WA}. Furthermore, it is also easy to see that the term proportional to $ e^{-|z|^2} \, |z|^{2(N-1)}/ \Gamma(N-1)$ only contributes at order $N^{-1/2}$ - see \cite[Eq. (4.43)]{MJC_Thesis} for an example of this analysis.

With Eqs. \eqref{eq:rho_Theta_L} and \eqref{eq:Theta_L_defs} in place, we can apply the bulk scaling limit to see that 
\begin{align}
    \frac{(2 - \sigma)\big( N (1+\sigma)^2 - \sigma(2+2|z|^2 +\sigma) - 1 \big)}{2(1 - \sigma^2)} \Theta_{N-1}(|z|^2) &- \frac{N (1 + \sigma) - \sigma (1 + |z|^2 (2 + (\sigma - 2) \sigma)) - 1}{1 - \sigma^2} \Theta_{N}(|z|^2) \nonumber\\&= \frac{N\sigma}{2} \big( 1 - 2 \mu |w|^2 \big) \Theta\big[ 1 - |w|^2 \big] + O(1)
\end{align}
and furthermore, employing \cite[8.11.6]{DLMF}, which states that
\begin{equation}
    \gamma(a,z) = -z^{a} e^{-z} \left[ \frac{1}{z-a} +  O \left( \frac{1}{(z-a)^3} \right) \right]\ ,
    \label{eq:DLMF_gamma}
\end{equation}
for large $a$ and $z$, such that they have a fixed ratio $z/a < 1$, we see that the behaviour of the term with the lower-incomplete $\gamma$-function is given by 
\begin{equation}
    \frac{e^{\mu N|w|^2}}{(\mu N|w|^2)^{\frac{N+1}{2}}} \gamma\!\left(\frac{N+1}{2}, \mu |z|^2\right) = - \frac{2}{N(2\mu |w|^2 - 1)} + O \left( \frac{1}{N^3} \right) \ ,
\end{equation}  
since $\mu = \sigma/(1+\sigma) < 1/2$. Therefore, by bringing the previous steps together in Eq. \eqref{eq:rho_Theta_L}, we can see that 
\begin{align}
    \rho^{(\text{nHIE})}_{\text{bulk,SA}}(w) &\equiv \lim_{N \to \infty} \rho^{(\text{nHIE})}_{N} \big(\sqrt{N} w; \sigma \big) \nonumber\\
    &= \frac{1}{\pi(1 - \sigma^2)} \Theta\big[1 - |w|^2\big] - \frac{\sigma}{2 \pi(1 - \sigma)} \Theta\big[1 - |w|^2\big] - \frac{1}{2\pi(1+\sigma)} \frac{2}{(2\mu |w|^2 - 1)}  \frac{\sigma}{2} \left( 1 - 2 \mu |w|^2 \right) \Theta\big[1 - |w|^2\big] \nonumber \\
    &= \frac{1}{\pi} \Theta\big[1 - |w|^2\big] \ ,
    \label{eq:rho_nHIE_bulk_SA}
\end{align}
thus the circular law holds to leading order for all $\sigma \in [0,1)$ at strong asymmetry. Now moving onto the limit of weak asymmetry, where $\sigma = 1 - \kappa N^{-1/2}$, we first of all notice that, for the first terms in the summation of Eq. \eqref{eq:rho_Theta_L},
\begin{equation}
    \frac{1}{1-\sigma^2} - \frac{\sigma}{2(1-\sigma)} =  \frac{3}{4} + O\left(N^{-1/2}\right) \ .
    \label{eq:weak_bulk_pt1}
\end{equation}
Then, considering the term which is proportional to the $\gamma$-function in Eq. \eqref{eq:rho_Theta_L}, we can employ Mathematica to expand the terms which are proportional to $\Theta_{N}$ and $\Theta_{N-1}$ in powers of $N$, this yields
\begin{align}
    \frac{(2 - \sigma)\big( N (1+\sigma)^2 - \sigma(2+2|z|^2 +\sigma) - 1 \big)}{2(1 - \sigma^2)} \Theta_{N-1}(|z|^2)& - \frac{N (1 + \sigma) - \sigma (1 + |z|^2 (2 + (\sigma - 2) \sigma)) - 1}{1 - \sigma^2} \Theta_{N}(|z|^2) \nonumber \\
    &= \frac{N}{2} \big(1 - |w|^2\big) \Theta\big[1 - |w|^2\big] + O(\sqrt{N}) \ .
    \label{eq:weak_bulk_pt2}
\end{align}
Moreover, by applying Eq. \eqref{eq:DLMF_gamma} again, we can see that 
\begin{align}
    \frac{e^{\frac{\sigma}{1+\sigma}|z|^2}}{2\pi(1+\sigma)}
    \left(\frac{1+\sigma}{\sigma |z|^2}\right)^{\frac{N+1}{2}}
    \gamma\!\left(\frac{N+1}{2}, \frac{\sigma}{1+\sigma}|z|^2\right) = \frac{1}{2\pi} \frac{1}{N - N|w|^2} + O\left( \frac{1}{N^3} \right) \ ,
    \label{eq:weak_bulk_pt3}
\end{align}
where we have used that
\begin{equation}
    \frac{\sigma}{1 + \sigma} = \frac{1 - \frac{\kappa}{\sqrt{N}}}{2 - \frac{\kappa}{\sqrt{N}}} = \frac{1}{2} \left( 1 - \frac{\kappa}{2\sqrt{N}}\right) + O\left(\frac{1}{N} \right) \ .
\end{equation}
Finally, by combining the findings from Eqs. \eqref{eq:weak_bulk_pt1} - \eqref{eq:weak_bulk_pt3} we see that
\begin{align}
    \rho^{(\text{nHIE})}_{\text{bulk,WA}}(w) &\equiv \lim_{N \to \infty} \rho^{(\text{nHIE})}_{N} \left(z = \sqrt{N} w; \sigma = 1 - \frac{\kappa}{\sqrt{N}} \right) = \frac{1}{\pi} \, \Theta\big[1 - |w|^2\big] \ ,
    \label{eq:rho_nHIE_bulk_WA}
\end{align}
as expected. Hence, from Eqs. \eqref{eq:rho_nHIE_bulk_SA} and \eqref{eq:rho_nHIE_bulk_WA} we have shown that the circular law holds for all $\sigma\in [0,1]$ in the nHIE - thus completing our proof.

\end{proof}

\subsection{Proof of Corollary \ref{cor:rho_edge_SA}}
\label{subsec:proof_edge_SA}

In what follows, we provide a proof of the edge density of the eigenvalues in the nHIE at strong asymmetry - originally presented in Corollary \ref{cor:rho_edge_SA}.

\begin{proof}
When considering asymptotic behaviour in the edge regime, we scale eigenvalues such that $|z| = \sqrt{N} + \eta$. Within such a scaling, the behaviour of a ratio of $\Gamma$-functions is known via Eq. \eqref{eq:Gamma_ratio_edge_2nd_order}. This makes treatment of the $\Theta$ functions trivial and thus, considering the form of Eq. \eqref{eq:rho_Theta_L}, the main challenge that remains asymptotically is to assess the leading order behaviour of the lower incomplete $\gamma$-function and $\mathcal{L}_N(|z|^2; \sigma)$. To this end, we introduce the object
\begin{equation}
    \Xi_N(\eta; \mu) \equiv  \frac{e^{\mu \big(\sqrt{N} +  \eta\big)^2}}{\big(\mu \big(\sqrt{N} +  \eta\big)^2 \big)^{\frac{N+1}{2}}} \gamma\!\left(\frac{N+1}{2}, \mu \big(\sqrt{N} +  \eta\big)^2\right) \ .
\end{equation}
Considering this function, and the related asymptotic behaviour in Eq. \eqref{eq:DLMF_gamma}, one notices that the transitional effect of the lower order $\eta \sqrt{N}$ term is only seen for $\mu \approx 1/2$ (hence the need for a separate scaling of weak asymmetry, discussed in the next Section), thus we can say that 
\begin{equation}
    \Xi_N\left(\eta; \mu < \frac{1}{2} \right) = - \frac{2}{N(2 \mu - 1)} + O \left(\frac{1}{N^3} \right) = \frac{2(1+\sigma)}{N(1 - \sigma)} + O \left(\frac{1}{N^3} \right)  \ .
\end{equation}
On the other hand, we explicitly consider the form of $\mathcal{L}_N(|z|^2; \sigma)$ in this limit and see that
\begin{equation}
    \mathcal{L}_N\Big( (\sqrt{N} +\eta )^2; \sigma \Big) = \sigma^N e^{\frac{1-\sigma}{\sigma}(\sqrt{N} +\eta )^2} \frac{\Gamma\left( N, \frac{(\sqrt{N} +\eta)^2}{\sigma} \right)}{\Gamma(N)} \ .
\end{equation}
Given that $0\leq \sigma < 1$, we are safe to apply the following identity from \cite[8.11.7]{DLMF}
\begin{equation}
    \Gamma(a,z) = z^{a} e^{-z} \left[ \frac{1}{z-a} +  O \left( \frac{1}{(z-a)^3} \right) \right]\ ,
    \label{eq:DLMF_Gamma}
\end{equation}
which is valid under the same conditions as Eq. \eqref{eq:DLMF_gamma}, and so, upon application of Stirling's formula we arrive at
\begin{equation}
    \mathcal{L}_N\Big( (\sqrt{N} +\eta )^2; \sigma \Big) = \frac{1}{\sqrt{2 \pi N}} \frac{\sigma}{1 - \sigma} \left( 1+ \frac{2 \eta}{\sqrt{N}} + \frac{\eta^2}{N} \right)^N e^{- 2 \eta \sqrt{N} - \eta^2} + O\left( \frac{1}{N} \right)\ .
\end{equation}
One can then employ Eq. \eqref{eq:power_expo_2nd_order}, which yields
\begin{equation}
    \mathcal{L}_N\Big( (\sqrt{N} +\eta )^2; \sigma \Big) = \frac{1}{\sqrt{2 \pi N}} \frac{\sigma}{1 - \sigma} e^{-2\eta^2} + O \left( \frac{1}{N} \right) \ .
\end{equation}
Thus, by comparison to Eq. \eqref{eq:Gamma_ratio_edge_2nd_order}, we see that contributions from terms proportional to $\mathcal{L}_N(|z|^2; \sigma)$ are sub-leading. One can also apply a similar process to see that the contribution from the term proportional to $ e^{-|z|^2} \, |z|^{2(N-1)}/ \Gamma(N-1)$ is also sub-leading. Furthermore, the dominant term in the square brackets proportional to the lower incomplete $\gamma$-function yields 
\begin{align}
    \frac{(2 - \sigma)\big( N (1+\sigma)^2 - \sigma(2+2|z|^2 +\sigma) - 1 \big)}{2(1 - \sigma^2)} \Theta_{N-1}(|z|^2)& - \frac{N (1 + \sigma) - \sigma (1 + |z|^2 (2 + (\sigma - 2) \sigma)) - 1}{1 - \sigma^2} \Theta_{N}(|z|^2) \nonumber \\
    &= \frac{N \sigma}{2} \frac{1 - \sigma}{1 + \sigma} \frac{1}{2} \, \erfc\Big( \sqrt{2} \eta \Big) + O(\sqrt{N}) \ ,
\end{align}
and so, by a very similar logic to Eq. \eqref{eq:rho_nHIE_bulk_SA}, we can say that
\begin{align}
    \rho^{(\text{nHIE})}_{\text{edge,SA}}(\eta) &\equiv \lim_{N \to \infty} \rho^{(\text{nHIE})}_{N}\Big(|z| = \sqrt{N} + \eta ; \sigma \Big) = \frac{1}{2 \pi} \erfc\Big(\sqrt{2} \eta\Big) \ ,
\end{align}
which is exactly the leading order edge behaviour in the Ginibre ensembles. 
\end{proof}

\subsection{Proof of Corollary \ref{cor:rho_edge_WA}}
\label{subsec:proof_edge_WA}

In this Subsection, we prove Corollary \ref{cor:rho_edge_WA}, which pertains to the interpolating edge density of the nHIE at weak asymmetry.

\begin{proof}

In order to accurately obtain the leading order behaviour of the density in the edge regime of the nHIE, one must consider the second order expansions of four different objects. These objects are $\Theta_N(|z|^2)$ and $\mathcal{L}_N(|z|^2; \sigma)$, both defined in Eq. \eqref{eq:Theta_L_defs}, as well as $\Delta \Theta(|z|^2) \equiv \Theta_N(|z|^2) - \Theta_{N-1}(|z|^2)$ and $\Delta \mathcal{L}_N(|z|^2; \sigma) \equiv \mathcal{L}_N(|z|^2; \sigma) - \mathcal{L}_{N-1}(|z|^2; \sigma)$. Firstly, we can directly utilise Eq. \eqref{eq:Gamma_ratio_edge_2nd_order} to see that 
\begin{equation}
    \Theta_N\Big( (\sqrt{N} + \eta)^2 \Big) = \frac{\Gamma(N, N + 2 \eta \sqrt{N} + \eta^2)}{\Gamma(N)} = \frac{1}{2} \, \erfc\Big( \sqrt{2} \eta \Big) + \frac{1}{\sqrt{2 \pi N}} e^{-2 \eta^2} \frac{\eta^2 - 1}{3} + O\left( \frac{1}{N} \right) \ .
    \label{eq:Theta_leading_WA}
\end{equation}
Secondly, we can see that
\begin{align}
    \mathcal{L}&_N^{(\text{edge})}(\eta; \kappa) \equiv \mathcal{L}_N\left( N + 2 \eta \sqrt{N}; 1 - \frac{\kappa}{\sqrt{N}}\right) \nonumber \\ 
    &= \left( 1- \frac{\kappa}{\sqrt{N}} \right)^N \exp\left[ \kappa \sqrt{N} + \kappa(2 \eta + \kappa) + \frac{\kappa(\eta+ \kappa^2)}{\sqrt{N}} \right] \frac{\Gamma\left( N,  N + (2 \eta + \kappa) \sqrt{N} + (\eta + \kappa)^2 \right)}{\Gamma(N)} + O\left( \frac{1}{N} \right) \ ,  
\end{align}
where we have utilised that
\begin{equation}
    \frac{N + 2 \eta \sqrt{N} + \eta^2}{1 -  \frac{\kappa}{\sqrt{N}}} = N+ (2 \eta + \kappa) \sqrt{N} + (\eta + \kappa)^2 + O\left(N^{-1/2}\right) \ .
\end{equation}
One can then employ Eqs. \eqref{eq:power_expo_2nd_order} and \eqref{eq:Gamma_ratio_edge_2nd_order} to see that
\begin{align}
    \mathcal{L}_N^{(\text{edge})}(\eta; \kappa) &= e^{2 \eta \kappa + \frac{\kappa^2}{2}} \left( 1+ \frac{\kappa(\eta + \kappa)^2 - \frac{\kappa^3}{3}}{\sqrt{N}}  \right) \Bigg( \frac{1}{2}  \erfc\left( \frac{ 2 \eta + \kappa }{\sqrt{2}} \right)  + \frac{1}{\sqrt{2\pi N}} \frac{\eta^2 -2 \eta  \kappa -2 \kappa^2 -1}{3}  e^{-\frac{(2 \eta + \kappa)^2}{2}} \Bigg) + O\left( \frac{1}{N} \right)\nonumber\\ 
    &= \frac{1}{2} \, e^{2\eta\kappa + \frac{\kappa^2}{2}} \erfc\left( \frac{2 \eta + \kappa}{\sqrt{2}} \right) + \frac{1}{\sqrt{N}} \bigg[ \frac{1}{2} \left(\kappa  (\eta +\kappa )^2-\frac{\kappa ^3}{3}\right) e^{2 \eta  \kappa +\frac{\kappa ^2}{2}} \erfc\left(\frac{2 \eta +\kappa }{\sqrt{2}}\right) \nonumber\\
    & + \frac{1}{\sqrt{2 \pi }} e^{-2 \eta^2} \frac{ \left(\eta ^2-2 \eta  \kappa -2 \kappa ^2-1\right)}{3} \bigg] + O\left( \frac{1}{N} \right)\ .
     \label{eq:L_leading_WA}
\end{align}
Next, we consider $\Delta \Theta_N$, and to this end we introduce
\begin{equation}
    \Delta \Theta^{(\text{WA})} \equiv \Theta_{N}\left( (\sqrt{N} + \eta )^2 \right) - \Theta_{N-1}\left( (\sqrt{N} + \eta )^2 \right) = \frac{(N + 2 \eta \sqrt{N} + \eta^2)^{N-1} e^{-N - 2 \eta \sqrt{N} - \eta^2}}{\Gamma(N)} 
    \label{eq:Delta_Theta_WA_finite} \ ,
\end{equation}
where in the second equality we have used the definition of $\Theta_N(|z|^2)$ in Eq. \eqref{eq:Theta_L_defs} and the recurrence relation \cite[8.356.2]{Gradshteyn},
\begin{equation}
   m \, \Gamma\left( m , x \right) = \Gamma\left( m+1, x \right) - e^{-x} x^{m} \ .
   \label{eq:Gamma_Recursion}
\end{equation}
Therefore, after applying Stirling's formula and Eq. \eqref{eq:power_expo_2nd_order}, we arrive at
\begin{equation}
    \Delta \Theta^{(\text{WA})} = \frac{1}{\sqrt{2 \pi N}} e^{- 2 \eta^2} \left[ 1 + \frac{1}{\sqrt{N}} \left( \frac{2 \eta^3}{3} - 2 \eta \right) \right] +  O\left(\frac{1}{N^{3/2}} \right) \ .
    \label{eq:Delta_Theta_WA}
\end{equation}
Finally, we consider the expansion in $N$ of $\Delta\mathcal{L}$, which we initially write as
\begin{align}
    \Delta \mathcal{L}(|z|^2 ; \sigma) &\equiv \mathcal{L}_N(|z|^2 ; \sigma) - \mathcal{L}_{N-1}(|z|^2 ; \sigma) = \sigma^{N-1} e^{ \frac{1 - \sigma}{\sigma} |z|^2 } \left[ \sigma \frac{\Gamma\left( N, \frac{|z|^2}{\sigma} \right)}{\Gamma(N)} - \frac{\Gamma\left( N-1, \frac{|z|^2}{\sigma} \right)}{\Gamma(N-1)} \right] \ ,
\end{align}
where we have again utilised Eq. \eqref{eq:Gamma_Recursion}. We then apply the edge scaling in the limit of weak asymmetry to see that  
\begin{align}
    \sigma^{N-1} e^{ \frac{1 - \sigma}{\sigma} |z|^2 } &= e^{2 \eta \kappa + \frac{\kappa^2}{2}} \left( 1+ \frac{\kappa(\eta + \kappa)^2 - \frac{\kappa^3}{3} + \kappa}{\sqrt{N}} \right) + O\left( \frac{1}{N} \right)  \ ,
    \label{eq:Delta_L_1}
\end{align}
which is obtained through the use of Eq. \eqref{eq:power_expo_2nd_order}. Furthermore, one can apply the recursion relation once more to see that
\begin{equation}
    \sigma \frac{\Gamma\left( N, \frac{|z|^2}{\sigma} \right)}{\Gamma(N)} - \frac{\Gamma\left( N-1, \frac{|z|^2}{\sigma} \right)}{\Gamma(N-1)} = \frac{e^{-\frac{|z|^2}{\sigma}} \left( \frac{|z|^2}{\sigma} \right)^{N-1} }{\Gamma(N)} - \frac{\kappa}{\sqrt{N}} \frac{\Gamma\left( N, \frac{|z|^2}{\sigma} \right)}{\Gamma(N)}
\end{equation}
where the two contributing terms can be asymptotically expanded as
\begin{align}
    \frac{e^{-\frac{|z|^2}{\sigma}} \left( \frac{|z|^2}{\sigma} \right)^{N-1} }{\Gamma(N)} &= \frac{1}{\sqrt{2\pi N}}  e^{-\frac{(2 \eta + \kappa)^2}{2}} \left( 1 + \frac{(2 \eta + \kappa)(\eta^2 - 2 \eta \kappa - 2\kappa^2 -3 )}{3 \sqrt{N}} \right) + O \left( \frac{1}{N^{3/2}} \right) 
    \label{eq:Delta_L_2}
\end{align} 
and
\begin{align}
    \frac{\Gamma\left( N, \frac{|z|^2}{\sigma} \right)}{\Gamma(N)} &= \frac{1}{2} \erfc\left( \frac{2 \eta + \kappa}{\sqrt{2}} \right) + \frac{1}{\sqrt{2 \pi N}} \frac{\eta^2 - 2 \eta \kappa -2\kappa^2 -1}{3} e^{-\frac{(2 \eta + \kappa)^2}{2}} + O \left( \frac{1}{N} \right) \ ,
    \label{eq:Delta_L_3}
\end{align}
through the use of Stirling's formula and Eqs. \eqref{eq:power_expo_2nd_order} and \eqref{eq:Gamma_ratio_edge_2nd_order}. One can then combine and manipulate the findings of Eqs. \eqref{eq:Delta_L_1} - \eqref{eq:Delta_L_3} to see that
\begin{align}
     \Delta \mathcal{L}^{(\text{WA})} &\equiv \Delta \mathcal{L}\left((\sqrt{N} + \eta)^2 ; 1- \frac{\kappa}{\sqrt{N}} \right)\nonumber\\ 
     &= \frac{1}{\sqrt{N}} \left[ \frac{1}{\sqrt{2 \pi}} e^{-2 \eta ^2} - \frac{\kappa}{2} e^{2 \eta \kappa + \frac{\kappa^2}{2}} \erfc\left( \frac{2 \eta + \kappa}{\sqrt{2}} \right) \right] + \frac{1}{N} \bigg[ - \frac{\kappa}{2} \left( \kappa(\eta + \kappa)^2 - \frac{\kappa^3}{3} + \kappa \right) e^{2 \eta \kappa + \frac{\kappa^2}{2}} \erfc\left( \frac{2 \eta + \kappa}{\sqrt{2}}\right) \nonumber \\
     & + \frac{1}{\sqrt{2 \pi}} e^{-2 \eta^2} \frac{2 \eta ^3-\eta ^2 \kappa +2 \eta  \left(\kappa ^2-3\right)+2 \kappa ^3+\kappa}{3}  \bigg] + O \left( \frac{1}{N^{3/2}} \right) \ .
     \label{eq:Delta_L_WA}
\end{align}
Now that we have accurate asymptotic expansions in place for our building block quantities, we are in a position to start to write down leading order contributions to the eigenvalue density in the edge regime at weak asymmetry. We do this in three separate stages, starting with
\begin{equation}
    T_1(|z|^2;\sigma) \equiv \frac{1}{\pi(1-\sigma^2)} \Big( \Theta_{N}(|z|^2) - \mathcal{L}_{N}(|z|^2;\sigma) \Big) - \frac{\sigma}{2 \pi(1 - \sigma)} \Big( \Theta_{N-1}(|z|^2) - \mathcal{L}_{N-1}(|z|^2;\sigma) \Big) \ ,
\end{equation}
which in the scaling limit of interest becomes
\begin{align}
    T_1\left( (\sqrt{N} + \eta)^2 ; 1 - \frac{\kappa}{\sqrt{N}} \right) &= \frac{1}{\pi} \left[ \frac{\sqrt{N}}{2\kappa} \Delta \Theta^{(1)} - \frac{\sqrt{N}}{2\kappa} \Delta \mathcal{L}^{(1)} + \frac{3}{4} \Theta_{N}^{(1)} - \frac{3}{4} \mathcal{L}_{N}^{(1)} \right] + O\left( \frac{1}{\sqrt{N}} \right) \nonumber\\
    &= \frac{1}{\pi}\left[ \frac{3}{8} \erfc(\sqrt{2} \eta) - \frac{1}{8} e^{2 \eta \kappa + \frac{\kappa^2}{2}} \erfc \left( \frac{2 \eta + \kappa}{\sqrt{2}} \right) \right] + O\left( \frac{1}{\sqrt{N}} \right) \ ,
    \label{eq:T_1_leading} 
\end{align}
where the superscript ${(1)}$ represents the first order term in $N$ etc. from the associated expansion. Next, we consider 
\begin{equation}
    T_2(|z|^2 ; \sigma) \equiv \frac{e^{\frac{\sigma}{1+\sigma}|z|^2}}{2\pi(1+\sigma)}
    \left(\frac{1+\sigma}{\sigma |z|^2}\right)^{\frac{N+1}{2}}
    \gamma\!\left(\frac{N+1}{2}, \frac{\sigma}{1+\sigma}|z|^2\right) \ ,
\end{equation}
then utilising \cite[Eq. (2.5)]{NOD19} and the fact that $\sigma|z|^2/(1+\sigma) = (N/2) + (\eta - \kappa/4) \sqrt{N} + \cdots$, we arrive at
\begin{align}
    T_2 \left(  (\sqrt{N} + \eta)^2 ; 1 - \frac{\kappa}{\sqrt{N}} \right) = \frac{1}{4 \pi} \sqrt{\frac{\pi}{N}} \, e^{\left( \eta - \frac{\kappa}{4} \right)^2} \, \erfc\bigg( - \left( \eta - \frac{\kappa}{4} \right) \bigg) + O \left( \frac{1}{N} \right)  \ .
    \label{eq:T_2_weak_edge_leading}
\end{align}
The third and final term that we need to consider arises from the large square bracket in Eq. \eqref{eq:rho_Theta_L}, i.e.
\begin{align}
    T_3(|z|^2 ; \sigma) &\equiv - \frac{N (1 + \sigma) - \sigma (1 + |z|^2 (2 + (\sigma - 2) \sigma)) - 1}{1 - \sigma^2} \Theta_{N}(|z|^2) + \frac{N (1 + \sigma) - (1 + \sigma + |z|^2 \sigma^2)}{1 - \sigma^2} \mathcal{L}_{N}(|z|^2;\sigma)\nonumber\\
    & + \frac{(2 - \sigma)\big( N (1+\sigma)^2 - \sigma(2+2|z|^2 +\sigma) - 1 \big)}{2(1 - \sigma^2)} \Theta_{N-1}(|z|^2) \nonumber \\
    & - \frac{(N-1) \sigma (1+\sigma)^2 - 2 |z|^2 (\sigma^3 + \sigma - 1)}{2 (1 - \sigma^2)} \mathcal{L}_{N-1}(|z|^2;\sigma) + \frac{1-\sigma^2}{1-\sigma} \frac{e^{-|z|^2} \, |z|^{2(N-1)}} {\Gamma(N-1)} \ .
\end{align}
To facilitate this analysis, we first notice that the final term can be written as
\begin{equation}
     \frac{1-\sigma^2}{1-\sigma} \frac{e^{-|z|^2} \, |z|^{2(N-1)}} {\Gamma(N-1)} = \sqrt{\frac{2N}{\pi}} e^{- 2\eta^2} + O(1) \ ,
\end{equation}
where we have used Eq. \eqref{eq:power_expo_2nd_order} in conjunction with Stirling's formula. To analyse the other 4 terms, we start by expanding all the coefficients in powers of $N$, such that
\begin{align}
    - \frac{N (1 + \sigma) - \sigma (1 + |z|^2 (2 + (\sigma - 2) \sigma)) - 1}{1 - \sigma^2} \Theta_{N} = \bigg[  -\frac{N^{3/2}}{2 \kappa} + \frac{1}{\kappa} \left( \eta - \frac{\kappa}{4} \right) N& +\frac{ \left(4 \eta ^2-4 \eta  \kappa +3 \kappa ^2+8\right)}{8 \kappa } \sqrt{N} \bigg] \Theta_{N} +O(1)
    \label{eq:coeff_Theta_N_expanded}
\end{align}
\begin{align}
    + \frac{(2 - \sigma)\big( N (1+\sigma)^2 - \sigma(2+2|z|^2 +\sigma) - 1 \big)}{2(1 - \sigma^2)} \Theta_{N-1}
    =\bigg[ \frac{N^{3/2}}{2 \kappa} - \frac{1}{\kappa} \left( \eta - \frac{\kappa}{4} \right) N& - \frac{(2 \eta + \kappa)^2 + 8}{8 \kappa} \sqrt{N}\bigg] \Theta_{N-1} + O(1)
    \label{eq:coeff_Theta_Nm1_expanded}
\end{align}
\begin{align}
    - \frac{(N-1) \sigma (1+\sigma)^2 - 2 |z|^2 (\sigma^3 + \sigma - 1)}{2 (1 - \sigma^2)} \mathcal{L}_{N-1} = \bigg[ -\frac{N^{3/2}}{2 \kappa } +\frac{1}{\kappa}\left(\eta -\frac{\kappa}{4}\right) N& +\frac{\left(4 \eta ^2-28 \eta  \kappa +\kappa ^2+8\right)}{8 \kappa } \sqrt{N} \bigg] \mathcal{L}_{N-1} + O(1)
    \label{eq:coeff_L_Nm1_expanded}
\end{align}
\begin{align}
    + \frac{N (1 + \sigma) - (1 + \sigma + |z|^2 \sigma^2)}{1 - \sigma^2} \mathcal{L}_{N} = \bigg[ \frac{N^{3/2}}{2 \kappa } - \frac{1}{\kappa} \left(\eta - \frac{3}{4} \kappa \right) N& + \frac{\left(-4 \eta ^2+12 \eta  \kappa -\kappa ^2-8\right)}{8 \kappa } \sqrt{N} \bigg] \mathcal{L}_{N} +O(1) \ ,
    \label{eq:coeff_L_N_expanded}
\end{align}
where we have suppressed functional arguments. Given that Eq. \eqref{eq:T_2_weak_edge_leading} scales as $N^{-1/2}$, the sum of Eqs. \eqref{eq:coeff_Theta_N_expanded} - \eqref{eq:coeff_L_N_expanded} must scale as $\sqrt{N}$ so as to ensure that the contribution to the density is of order unity. It is easy to see that the contribution at the level of $N^{3/2}$ is zero, but we must also verify that the contribution on the scale of $N$ vanishes. To this end we appeal to Eqs. \eqref{eq:L_leading_WA}, \eqref{eq:Delta_Theta_WA} and \eqref{eq:Delta_L_WA} to see that
\begin{align}
    &\frac{N^{3/2}}{2 \kappa} \Delta \mathcal{L}^{(1)} - \frac{N^{3/2}}{2 \kappa} \Delta\Theta^{(1)} + \frac{N}{2} \mathcal{L}_N^{(1)} = 0 \ ,
\end{align}
and so we now turn our attention to calculating the term which is proportional to $\sqrt{N}$. If one carefully collects powers of $N$ when summing Eqs. \eqref{eq:coeff_Theta_N_expanded} - \eqref{eq:coeff_L_N_expanded}, we find that the term which is proportional to $\sqrt{N}$ is given by
\begin{align}
    T_3 &= - \frac{N^{3/2}}{2 \kappa} \Delta \Theta^{(2)} + \frac{N^{3/2}}{2 \kappa} \Delta \mathcal{L}^{(2)} + N \widetilde{\eta} \Delta \Theta^{(1)} - N \widetilde{\eta} \Delta \mathcal{L}^{(1)} + \frac{N}{2} \mathcal{L}_N^{(2)} - \left( \eta - \frac{\kappa}{4} \right) \sqrt{N} \Theta_N^{(1)} - 2 \eta \sqrt{N} \mathcal{L}_N^{(1)} \ ,
\end{align}
where we have introduced $\widetilde{\eta} \equiv (\eta - \kappa/4)/\kappa$. Then, identifying the corresponding terms in Eqs. \eqref{eq:Theta_leading_WA}, \eqref{eq:L_leading_WA}, \eqref{eq:Delta_Theta_WA} and \eqref{eq:Delta_L_WA}, we see that
\begin{align}
    T_3 = \sqrt{N} \left[ \sqrt{\frac{2}{\pi}} \, e^{-2 \eta^2} - \frac{1}{2} \left( \eta - \frac{\kappa}{4} \right) \erfc\left( \sqrt{2} \eta \right)- \frac{4 \eta + 3\kappa}{8} \, e^{2\eta\kappa + \frac{\kappa^2}{2}} \erfc\left( \frac{2 \eta + \kappa}{\sqrt{2}} \right) \right] + O(1)\ ,
    \label{eq:T_3_leading}
\end{align}
therefore, by combining the contributions from $T_1$, $T_2$ and $T_3$, given in Eqs. \eqref{eq:T_1_leading}, \eqref{eq:T_2_weak_edge_leading} and \eqref{eq:T_3_leading}, we arrive at Eq. \eqref{eq:rho_WA_edge} and complete the proof.

\end{proof}

Considering the form of Eq. \eqref{eq:rho_WA_edge}, we can see that for $\kappa = 0$ it readily reproduces the known form in the limit of complex symmetric matrices, Eq. \eqref{eq:sym_edge_density}. Furthermore, if we take the limit of large $\kappa$, we can see that 
\begin{equation}
    e^{\left( \eta - \frac{\kappa}{4} \right)^2} \, \bigg( 1+ \erf\left( \eta - \frac{\kappa}{4} \right) \bigg) = e^{\left( \eta - \frac{\kappa}{4} \right)^2} \, \erfc\left( \frac{\kappa}{4} - \eta \right) = \frac{1}{\sqrt{\pi} \left( \frac{\kappa}{4} - \eta \right)} + O\left( \frac{1}{\kappa^3} \right)
\end{equation}
and
\begin{equation}
    e^{2\eta\kappa + \frac{\kappa^2}{2}} \erfc\left( \frac{2 \eta + \kappa}{\sqrt{2}} \right) = \sqrt{\frac{2}{\pi}} \, e^{-2 \eta^2} \, \frac{1}{2 \eta + \kappa}+ O\left( \frac{1}{\kappa^3} \right)
\end{equation}
where we have used the large argument expansion of the complementary error function, $\erfc(x) \approx e^{-x^2}/(\sqrt{\pi} x)$ for $x \gg 1$ \cite[7.12 (i)]{DLMF}. Applying the above two equations to Eq. \eqref{eq:rho_WA_edge}, we can then see that
\begin{equation}
    \lim_{\kappa\to \infty}  \rho^{(\text{nHIE})}_{\text{edge,WA}}(\eta;\kappa) = \frac{1}{2 \pi} \erfc \left( \sqrt{2} \eta \right)
\end{equation}
and thus, as expected, for large $\kappa$ we arrive back at the Ginibre limit.

\begin{acknowledgments}
M.J.C. is grateful for support from the Heilbronn Institute for Mathematical Research while performing this research and writing the manuscript. The numerical data used in FIG. \ref{fig:edge_densities} was produced using the computational facilities of the \href{http://www.bristol.ac.uk/acrc/}{Advanced Computing Research Centre}, University of Bristol.
\end{acknowledgments}

\appendix

\section{Verification of Ensemble Average Identity}
\label{app:Ensemble_average_identity}

In this Appendix, we provide a short proof of the result of Proposition \ref{prop:ensemble_ave}.
\begin{proof}
Starting from
\begin{align}
    \mathcal{A}(A,B; \tau) &\equiv \left \langle \exp[\Tr(X_N A + X_N^{\dagger}B)] \right \rangle_{X_N} = \left \langle \exp[\Tr( a_\tau G_N A + b_\tau G_N^T A + a_\tau G_N^\dagger B + b_\tau \overline{G_N} B )] \right \rangle_{G_N} \ , 
\end{align}
with $a_\tau \equiv \sqrt{(1+\tau)/2}$ and $b_\tau \equiv \sqrt{(1-\tau)/2}$. Explicitly evaluating the traces we see that
\begin{equation}
    \mathcal{A}(A,B; \tau) = \prod_{i,j=1}^N \Big \langle \exp[ G_{ij} \left( a_\tau A_{ji} + b_\tau A_{ij}\right) + \overline{G_{ij}} \left( a_\tau B_{ij} + b_\tau B_{ji} \right) ] \Big \rangle_{G_{ij}} \ ,
\end{equation}
then using the Gaussian identity of $\int_{\mathbb{C}} \diff^2 z \, \exp[-|z|^2 + z \alpha + \overline{z}\beta] = \pi\exp[\alpha \beta]$, for general $\alpha$ and $\beta$, we see that
\begin{equation}
    \mathcal{A}(A,B; \tau)= \exp[ \sum_{i,j=1}^N \left( a_\tau A_{ji} + b_\tau A_{ij}\right) \left( a_\tau B_{ij} + b_\tau B_{ji} \right) ] \ .
\end{equation}
One may then utilise that $a_\tau b_\tau = \sqrt{1 - \tau^2}/2$ and evaluate the summations to arrive at the result Eq. \eqref{eq:ave_expo_trace_nHIE}.
\end{proof}

\section{Reducing the Pfaffian problem}
\label{app:simplifying_our_Pfaffian}

In order to derive the JPDF of eigenvalue and eigenvector in the nHIE, one must be able to evaluate a Pfaffian of the following form
\begin{equation}
    \mathcal{F} \equiv \Pf\left( \begin{matrix}
        0 & \mathcal{S}_\psi & \mathcal{K}_{11} & i\mathcal{K}_{12} \\
        - \mathcal{S}_\psi^T & 0 & i\mathcal{K}_{21} & \mathcal{K}_{22} \\
        - \mathcal{K}_{11}^T & - i\mathcal{K}_{21}^T & 0 & \mathcal{S}_\varphi \\
        - i\mathcal{K}_{12}^T & - \mathcal{K}_{22}^T & - \mathcal{S}_\varphi^T & 0
    \end{matrix}  \right)
    \label{eq:our_pfaffian_app}
\end{equation}
where the specific form of each of the $N \times N$ block's entries are specified in Eqs. \eqref{eq:K_11} - \eqref{eq:S_psi_def}. For our current purposes, it will suffice to just use abstract forms of their entries, i.e.
\begin{align}
    \mathcal{S}_\psi = \alpha_\psi \1_N + \beta_\psi \overline{\bv} \bv^T \, , \hspace{0.5cm} 
    &\mathcal{S}_\varphi = \alpha_\varphi \1_N + \beta_\varphi \bv \bv^\dagger \, ,  \hspace{0.5cm}
    \mathcal{K}_{11} = \alpha_{11} \1_N + \RV \Lambda_{11} \RV^T \, ,  \hspace{0.5cm}  
    \mathcal{K}_{22} = \alpha_{22} \1_N + \RV \Lambda_{22} \RV^T \nonumber\\
    &\mathcal{K}_{12} = w \1_N + \RV \Lambda_{12} \RV^T \, ,  \hspace{0.5cm} \mathcal{K}_{21} = \overline{w} \1_N + \RV \Lambda_{21} \RV^T  \ ,
\end{align}
with $\RV = (\bv, \overline{\bv})$. To aid analytic progress, we will utilise that $\mathcal{F}^2 = \det(\mathcal{M})$, so as to allow us to use standard identities relating to the determinant. We start by using the following identity for non-commuting blocks
\begin{equation}
    \det\left( \begin{matrix}
        A & B \\ C & D  
    \end{matrix}\right) = \det\Big( D \Big) \det\Big( A - B D^{-1} C \Big) \ ,
\end{equation}
which yields
\begin{equation}
    \mathcal{F}^2 = \det( \begin{matrix}
        0 & \mathcal{S}_\varphi \\
        - \mathcal{S}_\varphi^T & 0
    \end{matrix} ) \det(  \left( \begin{matrix}
        0 & \mathcal{S}_\psi \\ - \mathcal{S}_\psi^T & 0 
    \end{matrix}\right) + \left( \begin{matrix}
        \mathcal{K}_{11} & \mathcal{K}_{12} \\ \mathcal{K}_{21} & \mathcal{K}_{22} 
    \end{matrix} \right)  \left( \begin{matrix}
        0 & \mathcal{S}_\varphi \\
        - \mathcal{S}_\varphi^T & 0
    \end{matrix} \right)^{-1} \left( \begin{matrix}
        \mathcal{K}_{11}^T & \mathcal{K}_{21}^T \\ \mathcal{K}_{12}^T & \mathcal{K}_{22}^T 
    \end{matrix} \right) ) \ .
\end{equation}
One can immediately see that
\begin{equation}
    \det( \begin{matrix}
        0 & \mathcal{S}_\varphi \\
        - \mathcal{S}_\varphi^T & 0
    \end{matrix} ) = \det( \mathcal{S}_\varphi \mathcal{S}_\varphi^T) = \det(\mathcal{S}_\varphi)^2 = \alpha_\varphi^{2(N-1)} (\alpha_\varphi + \beta_\varphi)^2 \ ,
\end{equation}
with the final equality achieved through Sylvester's determinant theorem, Eq. \eqref{eq:Sylvester_det_theorem}. Furthermore, by employing the Sherman-Morrison formula, we know that 
\begin{equation}
    \left(a \1_N + b \bm x \bm y^T\right)^{-1} = \frac{1}{a} \left[ \1_N - \frac{b}{a + b} \bm x \bm y^T \right] \ ,
\end{equation}
for general vectors $\bm x$ and $\bm y$, such that $\bm y^T \bm x = \bm x^T \bm y = 1$, and so we can also see that
\begin{align}
    \left( \begin{matrix}
        0 & \mathcal{S}_\varphi \\
        - \mathcal{S}_\varphi^T & 0
    \end{matrix} \right)^{-1} 
    &= \frac{1}{\alpha_\varphi} \left( \begin{matrix}
        0 & - \1_N + \gamma_\varphi \overline{\bv} \bv^T \\ \1_N - \gamma_\varphi \bv \bv^\dagger & 0 
    \end{matrix} \right) \ ,
\end{align}
where $\gamma_\varphi \equiv \beta_\varphi/(\alpha_\varphi + \beta_\varphi)$. Applying these statements for the determinant inverse, we see that
\begingroup
\allowdisplaybreaks
\begin{align}
    \mathcal{F}^2 &= \left( \frac{\alpha_\varphi + \beta_\varphi}{\alpha_\varphi} \right)^2 \det \left( \left( \begin{matrix}
        0 & \alpha_\varphi \mathcal{S}_\psi \\ - \alpha_\varphi \mathcal{S}_\psi^T & 0 
    \end{matrix}\right) + \left( \begin{matrix}
        \mathcal{K}_{11} & \mathcal{K}_{12} \\ \mathcal{K}_{21} & \mathcal{K}_{22} 
    \end{matrix} \right)
    \left( \begin{matrix}
        0 & - \1_N + \gamma_\varphi \overline{\bv} \bv^T \\ \1_N - \gamma_\varphi \bv \bv^\dagger & 0 
    \end{matrix} \right) \left( \begin{matrix}
        \mathcal{K}_{11}^T & \mathcal{K}_{21}^T \\ \mathcal{K}_{12}^T & \mathcal{K}_{22}^T 
    \end{matrix} \right) \right)
\end{align}
\endgroup
which, after evaluating the products, we can write as
\begin{equation}
    \mathcal{F}^2 = \left( \frac{\alpha_\varphi + \beta_\varphi}{\alpha_\varphi} \right)^2\det \left( \begin{matrix}
        \mathcal{Y}_{11} & \mathcal{Y}_{12} \\ - \mathcal{Y}_{12}^T & \mathcal{Y}_{22}
    \end{matrix}\right) \ ,
    \label{eq:Pfaff_2Nx2N}
\end{equation}
where we have introduced the matrices
\begingroup\allowdisplaybreaks
\begin{align}
    &\mathcal{Y}_{11} = - i \mathcal{K}_{11} \mathcal{K}_{12}^T + i \mathcal{K}_{12} \mathcal{K}_{11}^T + i\gamma_\varphi \mathcal{K}_{11} \overline{\bv} \bv^T \mathcal{K}_{12}^T - i\gamma_\varphi \mathcal{K}_{12} \bv \bv^\dagger \mathcal{K}_{11}^T  \\
    &\mathcal{Y}_{12} = \alpha_\varphi \mathcal{S}_\psi - \mathcal{K}_{11} \mathcal{K}_{22}^T - \mathcal{K}_{12} \mathcal{K}_{21}^T + \gamma_\varphi \mathcal{K}_{11} \overline{\bv} \bv^T \mathcal{K}_{22}^T + \gamma_\varphi \mathcal{K}_{12} \bv \bv^\dagger \mathcal{K}_{21}^T \\
    &\mathcal{Y}_{22} = - i\mathcal{K}_{21} \mathcal{K}_{22}^T + i\mathcal{K}_{22} \mathcal{K}_{21}^T + i\gamma_\varphi \mathcal{K}_{21} \overline{\bv} \bv^T \mathcal{K}_{22}^T - i\gamma_\varphi \mathcal{K}_{22} \bv \bv^\dagger \mathcal{K}_{21}^T  \ .
\end{align}
\endgroup
At this point, we would like to re-express each $\mathcal{Y}$ matrix as any part proportional to the identity plus low rank corrections in $\bv$ and $\overline{\bv}$. We outline the process for re-expressing $\mathcal{Y}_{11}$ and then leave application of similar steps to re-express $\mathcal{Y}_{22}$ and $\mathcal{Y}_{12}$ as an exercise for the reader. Starting with the simpler two terms we see that
\begin{equation}
    i \Big[  \mathcal{K}_{12} \mathcal{K}_{11}^T - \mathcal{K}_{11} \mathcal{K}_{12}^T \Big] = i \RV \bigg( \alpha_{11} \big( \Lambda_{12} - \Lambda_{12}^T \big) + w \big( \Lambda_{11}^T - \Lambda_{11} \big) + \big( \Lambda_{12} \RV^T \RV \Lambda_{11}^T - \Lambda_{11} \RV^T \RV \Lambda_{12}^T \big) \bigg) \RV^T \ . 
    \label{eq:Y11_part1}
\end{equation}
Now considering one of the slightly more involved terms, we expand out
\begin{align}
    i\gamma_\varphi \mathcal{K}_{11} \overline{\bv} \bv^T \mathcal{K}_{12}^T &= i\gamma_\varphi \left( \alpha_{11} \1_N + \RV \Lambda_{11} \RV^T \right) \overline{\bv} \bv^T \left( w \1_N + \RV \Lambda_{12}^T \RV^T \right) \nonumber \\
    &= i\gamma_\varphi \left[ \alpha_{11} w \overline{\bv} \bv^T + w \RV \Lambda_{11} \RV^T \overline{\bv} \bv^T + \alpha_{11} \overline{\bv} \bv^T \RV \Lambda_{12}^T \RV^T + \RV \Lambda_{11} \RV^T  \overline{\bv} \bv^T  \RV \Lambda_{12}^T \RV^T \right] \ .
\end{align}
In an attempt to make analysis simpler at later stages, we now express each of these four contributing terms in the following way
\begin{align}
    \alpha_{11} w \overline{\bv} \bv^T &= \alpha_{11} w \RV \left( \begin{matrix}
        0 & 0 \\ 1 & 0 \\
    \end{matrix} \right) \RV^T \label{eq:rot_rules_start} \\
    w \RV \Lambda_{11} \RV^T \overline{\bv} \bv^T &= w \RV \Lambda_{11} F_1 \RV^T\\
    \alpha_{11} \overline{\bv} \bv^T \RV \Lambda_{12}^T \RV^T &= \alpha_{11} \RV F_2 \Lambda_{12}^T \RV^T \\ 
    \RV \Lambda_{11} \RV^T  \overline{\bv} \bv^T  \RV \Lambda_{12}^T \RV^T &= \RV \Lambda_{11} \mathcal{C} \Lambda_{12}^T \RV^T
    \label{eq:rot_rules_end}
\end{align}
where we have introduced
\begin{align}
    F_1 = \left( \begin{matrix}
        1 & 0 \\ \overline{\nu} & 0
    \end{matrix}\right)
    \hspace{1cm}
    F_2 = \left( \begin{matrix}
        0 & 0 \\ \nu & 1
    \end{matrix}\right)
    \hspace{1cm} \mathcal{C} = \left( \begin{matrix}
        \nu & 1 \\ |\nu|^2 & \overline{\nu}
    \end{matrix}\right) \ .
\end{align}
Therefore,
\begin{align}
    i\gamma_\varphi \mathcal{K}_{11} \overline{\bv} \bv^T \mathcal{K}_{12}^T = i \gamma_\varphi \RV \bigg( \alpha_{11} w \left( \begin{matrix}
        0 & 0 \\ 1 & 0 \\
    \end{matrix} \right) + w \Lambda_{11} F_1 + \alpha_{11} F_2 \Lambda_{12}^T + \Lambda_{11} \mathcal{C} \Lambda_{12}^T \bigg) \RV^T
\end{align}
and so, we can now easily obtain that 
\begin{align}
    + i\gamma_\varphi &\mathcal{K}_{11} \overline{\bv} \bv^T \mathcal{K}_{12}^T - i\gamma_\varphi \mathcal{K}_{12} \bv \bv^\dagger \mathcal{K}_{11}^T  \nonumber \\
    &= i \gamma_\varphi \RV \bigg( \alpha_{11} w \Sigma + w \Big( \Lambda_{11} F_1 -  F_1^T \Lambda_{11}^T \Big) + \alpha_{11} \Big( F_2 \Lambda_{12}^T - \Lambda_{12} F_2^T \Big) + \Lambda_{11} \mathcal{C} \Lambda_{12}^T - \Lambda_{12} \mathcal{C}^T \Lambda_{11}^T \bigg) \RV^T \ ,
    \label{eq:Y11_part2}
\end{align}
where
\begin{equation}
    \Sigma \equiv \left( \begin{matrix}
        0 & -1 \\ 1 & 0 \\
    \end{matrix} \right) = -\Sigma^T\ .
\end{equation}
Therefore, combining Eqs. \eqref{eq:Y11_part1} and \eqref{eq:Y11_part2} we see that 
\begin{equation}
    \mathcal{Y}_{11} = i \RV \mathcal{T}_{11} \RV^T \ ,
\end{equation}
where we have introduced the matrix
\begin{align}
    \mathcal{T}_{11} &\equiv \gamma_\varphi \alpha_{11} w \Sigma + \alpha_{11} \Lambda_{12} \Big( \1_2 - \gamma_\varphi F_2^T \Big) -  \alpha_{11}  \Big( \1_2 - \gamma_\varphi F_2 \Big) \Lambda_{12}^T + w \Lambda_{11} \Big( \gamma_\varphi F_1 - \1_2 \Big) - w \Big( \gamma_\varphi F_1^T - \1_2 \Big) \Lambda_{11}^T \nonumber \\
    & + \Lambda_{12} \Big( \RV^T \RV - \gamma_\varphi \mathcal{C}^T \Big) \Lambda_{11}^T - \Lambda_{11} \Big( \RV^T \RV - \gamma_\varphi \mathcal{C} \Big) \Lambda_{12}^T \ .
    \label{eq:T11_def}
\end{align}
One can then follow a similar procedure to the one outlined above to re-write $\mathcal{Y}_{12}$ and $\mathcal{Y}_{22}$ as
\begin{equation}
    \mathcal{Y}_{22} = i \RV \mathcal{T}_{22} \RV^T \hspace{1cm} \text{and} \hspace{1cm} \mathcal{Y}_{12} = ( \alpha_\varphi \alpha_\psi - \alpha_{11} \alpha_{22} - |w|^2 ) \1_N  +  \RV \mathcal{T}_{12} \RV^T
\end{equation}
such that
\begin{align}
    \mathcal{T}_{22} &\equiv \gamma_\varphi \alpha_{22} \overline{w} \Sigma + \alpha_{22} \Lambda_{21} \Big( \gamma_\varphi F_1 - \1_2 \Big) - \alpha_{22}  \Big( \gamma_\varphi F_1^T - \1_2 \Big) \Lambda_{21}^T + \overline{w} \Big( \gamma_\varphi F_2 - \1_2 \Big)\Lambda_{22}^T - \overline{w} \Lambda_{22} \Big( \gamma_\varphi F_2^T - \1_2 \Big)\nonumber \\
    & + \Lambda_{22} \Big( \RV^T \RV - \gamma_\varphi \mathcal{C}^T \Big) \Lambda_{21}^T - \Lambda_{21} \Big( \RV^T \RV - \gamma_\varphi \mathcal{C} \Big) \Lambda_{22}^T \ .
    \label{eq:T22_def}
\end{align} 
and 
\begin{align}
    \mathcal{T}_{12} &\equiv \left( \begin{matrix}
        0 & \gamma_\varphi |w|^2 \\
        \alpha_\varphi \beta_\psi + \gamma_\varphi \alpha_{11} \alpha_{22} & 0
    \end{matrix}\right) + \alpha_{22} \Lambda_{11} (\gamma_\varphi F_1 - \1_2) + \overline{w} \Lambda_{12} (\gamma_\varphi F_2^T - \1_2) \nonumber \\
    & + w (\gamma_\varphi F_1^T - \1_2) \Lambda_{21}^T + \alpha_{11} (\gamma_\varphi F_2 - \1_2) \Lambda_{22}^T + \Lambda_{11} (\gamma_\varphi \mathcal{C} - \RV^T\RV) \Lambda_{22}^T  + \Lambda_{12} (\gamma_\varphi \mathcal{C}^T - \RV^T\RV) \Lambda_{21}^T \ .
    \label{eq:T12_def}
\end{align} 
Therefore, returning to Eq. \eqref{eq:Pfaff_2Nx2N} and combining the previous findings, we can now re-express our equation for the desired Pfaffian as
\begin{align}
    \mathcal{F}^2 &= \left( \frac{\alpha_\varphi + \beta_\varphi}{\alpha_\varphi} \right)^2 \det[ \left( \begin{matrix}
        0 & \zeta \\ - \zeta & 0 
    \end{matrix} \right) \otimes \1_N + \left( \begin{matrix}
        i \RV \mathcal{T}_{11} \RV^T & \RV \mathcal{T}_{12} \RV^T \\
        - \RV \mathcal{T}_{12}^T \RV^T & i\RV \mathcal{T}_{22} \RV^T
    \end{matrix} \right)] \nonumber \\
    &= \left( \frac{\alpha_\varphi + \beta_\varphi}{\alpha_\varphi} \right)^2 \det[ \left( \begin{matrix}
        0 & \zeta \\ - \zeta & 0 
    \end{matrix} \right) \otimes \1_N + \left( \begin{matrix}
        \RV & 0 \\
       0 & \RV
    \end{matrix} \right) \left( \begin{matrix}
        i \mathcal{T}_{11} & \mathcal{T}_{12} \\
        - \mathcal{T}_{12}^T & i \mathcal{T}_{22}
    \end{matrix} \right) \left( \begin{matrix}
        \RV^T & 0\\
        0 & \RV^T
    \end{matrix} \right)]
\end{align}
with $\zeta \equiv \alpha_\varphi \alpha_\psi - \alpha_{11} \alpha_{22} - |w|^2$. One can now employ the identity
\begin{equation}
    \det[X + AB] = \det[X] \det[\1 + B X^{-1} A ] \ ,
    \label{eq:det_identity}
\end{equation}
for invertible $X$ and appropriately sized $A$ and $B$, which eventually yields that
\begin{align}
    \mathcal{F}^2 &= \left( \frac{\alpha_\varphi + \beta_\varphi}{\alpha_\varphi} \right)^2 \zeta^{2(N-2)} \det[ \zeta \1_4  + \left( \begin{matrix}
        \mathcal{T}_{12} \RV^T \RV & - i \mathcal{T}_{11} \RV^T \RV   \\
        i \mathcal{T}_{22} \RV^T \RV &  \mathcal{T}_{12}^T \RV^T \RV 
    \end{matrix} \right) ] \ ,
    \label{eq:Fsq_4x4}
\end{align}
where the blocks in the second matrix are all proportional to
\begin{equation}
    \RV^T \RV = \left( \begin{matrix}
        \nu & 1 \\ 1 & \overline{\nu}
    \end{matrix} \right) \ .
\end{equation}

\section{Derivation of Incomplete Gamma-Function Identities}
\label{app:incmpl_gamma_type_ident}

In this Appendix, we aim to show how one can arrive at the identities in Eqs. \eqref{eq:G_0_function} - \eqref{eq:G_2_function}, starting with the first one. The key here is to notice that these integrals are only dependent on $|q_1|^2$ and $|q_2|^2$ and so we can employ the change of variables $\diff^2 q = 0.5 \diff R \diff \theta$, where $R = \Re(q)^2 + \Im(q)^2$. Using this on the LHS of Eq. \eqref{eq:G_0_function} we see that
\begingroup\allowdisplaybreaks
\begin{align}
    \frac{1}{\pi^2} \int_{\mathbb{C}^2} \diff^2 q_1 \, \diff^2 q_2 & \, e^{ - |q_1|^2 - |q_2|^2} \Big( |q_1|^2 + \sigma |q_2|^2 + |z|^2 \Big)^{m} = \int_0^\infty \diff R_1 \, e^{-R_1} \int_0^\infty \diff R_2 \, e^{-R_2} \Big( R_1 + \sigma R_2 + |z|^2 \Big)^m \nonumber \\
    &= \sum_{k=0}^m \frac{m!}{(m-k)! k!} \sigma^{m-k} \int_0^\infty \diff R_1 \, e^{-R_1} (R_1 + |z|^2)^k \int_0^\infty \diff R_2 \, e^{-R_2} (\sigma R_2)^{m-k} \ .
\end{align}
\endgroup
One can then evaluate the integral over $R_2$ and use the following representation of the incomplete $\Gamma$-function
\begin{equation}
    \int_0^\infty \diff R \, e^{-R} \, (R + |z|^2)^N = e^{|z|^2} \Gamma(N+1,|z|^2) \ ,
\end{equation}
to see that
\begin{equation}
    \frac{1}{\pi^2} \int_{\mathbb{C}^2} \diff^2 q_1 \, \diff^2 q_2  \, e^{ - |q_1|^2 - |q_2|^2} \Big( |q_1|^2 + \sigma |q_2|^2 + |z|^2 \Big)^{m} = e^{|z|^2} m! \sigma^{m} \sum_{k=0}^m \frac{1}{\sigma^k} \frac{\Gamma(k+1, |z|^2)}{\Gamma(k+1)}
\end{equation}
as required. Furthermore, when trying to prove Eq. \eqref{eq:G_1_function}, we apply similar steps to arrive at
\begingroup\allowdisplaybreaks
\begin{align}
    \frac{1}{\pi^2}  \int_{\mathbb{C}^2} \diff^2 q_1 \, \diff^2 q_2 & \, e^{ - |q_1|^2 - |q_2|^2} \Big( |q_1|^2 + \sigma |q_2|^2 + |z|^2 \Big)^{m} |q_1|^2 = \int_0^\infty \diff R_1 \, R_1 \, e^{-R_1} \int_0^\infty \diff R_2 \, e^{-R_2} \Big( R_1 + \sigma R_2 + |z|^2 \Big)^m \nonumber \\
    &= \sum_{k=0}^m \frac{m!}{(m-k)! k!} \sigma^{m-k} \int_0^\infty \diff R_1 \, R_1 \, e^{-R_1} (R_1 + |z|^2)^k \int_0^\infty \diff R_2 \, e^{-R_2} (\sigma R_2)^{m-k} \ ,
\end{align}
\endgroup
again, the integral over $R_2$ is trivial and then the integral over $R_1$ can be performed using
\begingroup\allowdisplaybreaks
\begin{align}
    \int_0^\infty \diff R \, R \, (R + |z|^2)^N e^{-R} = \frac{e^{|z|^2}}{N+1} \bigg[ \Gamma(N + 2, |z|^2) \big( N + 1 - |z|^2 \big) + (|z|^2)^{N+2} e^{-|z|^2}\bigg] \ .
\end{align}
\endgroup
Simple application of the above formula trivially yields Eq. \eqref{eq:G_1_function}. The process for deriving Eq. \eqref{eq:G_2_function} is essentially identical to deriving Eq. \eqref{eq:G_0_function} and so we omit the details.

\section{Derivation of asymptotic expansions}
\label{app:asymp_expansions}

Here we provide short proofs of this asymptotic expansions provided in Eqs. \eqref{eq:power_expo_2nd_order} and \eqref{eq:Gamma_ratio_edge_2nd_order}, which we utilised widely throughout Section \ref{sec:AA}. Starting with Eq. \eqref{eq:power_expo_2nd_order}, we can carefully take a logarithm to see that 
\begin{equation}
    \ln\left[\left( 1+ \frac{a}{\sqrt{N}} + \frac{b}{N} \right)^N e^{- a \sqrt{N} - c - \frac{d}{\sqrt{N}}}\right] = - \frac{a^2}{2} + (b-c) + \frac{1}{\sqrt{N}} \left( \frac{a^3}{3} - ab - d \right) + O \left( \frac{1}{N} \right) \ ,
\end{equation}
and so Eq. \eqref{eq:power_expo_2nd_order} follows simply. The second expansion requires a little more thought and so we define
\begin{equation}
    \mathcal{R}_{N,m}(a,b) \equiv \frac{\Gamma(N-m, N + a \sqrt{N} + b)}{\Gamma(N-m)}
\end{equation}
and begin by using the integral representation of the incomplete $\Gamma$-function, Eq. \eqref{eq:incmpl_Gamma}, to see that
\begin{equation}
    \mathcal{R}_{N,m}(a,b) = \frac{1}{\Gamma(N-m)} \int_{N + a \sqrt{N} + b}^\infty \diff t \, t^{N-m-1} \, e^{-t} = \frac{1}{\Gamma(N-m)} \int_{a}^\infty \diff z \, \big( N + z \sqrt{N} + b \big)^{N-m-1} \, e^{-(N + z \sqrt{N} + b)} \ ,
\end{equation}
then utilising Stirling's formula, we then find that
\begin{equation}
     \mathcal{R}_{N}(a,b) = \frac{1}{\sqrt{2\pi}} \int_{a}^\infty \diff z \left( 1+ \frac{z}{\sqrt{N}} + \frac{b}{N} \right)^{-m-1} \left( 1+ \frac{z}{\sqrt{N}} + \frac{b}{N} \right)^N e^{- z \sqrt{N} - b} \ .
\end{equation}
We can now directly use Eq. \eqref{eq:power_expo_2nd_order} so as to see that
\begin{align}
    \mathcal{R}_{N,m}(a,b) &= \frac{1}{\sqrt{2\pi}} \int_{a}^\infty \diff z \left( 1 - \frac{(m+1)z}{\sqrt{N}} \right) e^{-\frac{z^2}{2}} \left( 1+ \frac{1}{\sqrt{N}} \left( \frac{z^3}{3} - z b \right) \right) + O \left(  \frac{1}{N} \right) \nonumber\\
    &= \frac{1}{\sqrt{2\pi}} \int_{a}^\infty \diff z \, e^{-\frac{z^2}{2}} \left( 1+ \frac{1}{\sqrt{N}} \left( \frac{z^3}{3} - z (b + m+ 1) \right) \right) + O \left(  \frac{1}{N} \right) \ .
    \label{eq:sqrt_N_finite_expo_2}
\end{align}
One can now collect the terms which are proportional to $\sqrt{N}$ and evaluate the integrals to arrive at our desired result of Eq. \eqref{eq:Gamma_ratio_edge_2nd_order}.

\end{document}